\documentclass[bj]{imsart}
\RequirePackage{amsthm,amsmath,amsfonts,amssymb}
\RequirePackage[numbers]{natbib}
\RequirePackage[colorlinks,citecolor=blue,urlcolor=blue]{hyperref}
\RequirePackage{graphicx}

\startlocaldefs
\theoremstyle{plain}

\newtheorem{theorem}{Theorem}[section]
\newtheorem{lemma}[theorem]{Lemma}
\newtheorem{corollary}[theorem]{Corollary}
\newtheorem{proposition}[theorem]{Proposition}

\theoremstyle{remark}
\newtheorem{definition}[theorem]{Definition}
\newtheorem*{example}{Example}

\newtheorem*{remark}{Remark}


\endlocaldefs

\usepackage[utf8]{inputenc}
\input{preamble.sty}

\newcommand{\revcolor}[1]{{\color{black}#1}}
\begin{document}

\begin{frontmatter}
\title{\revcolor{Combinatorial Bernoulli Factories}}
\runtitle{Combinatorial Bernoulli Factories}

\begin{aug}
\author[A]{\inits{R.}\fnms{Rad} \snm{Niazadeh}\ead[label=e1]{rad.niazadeh@chicagobooth.edu}},
\author[B]{\inits{R.}\fnms{Renato} \snm{Paes Leme}\ead[label=e2]{renatoppl@google.com}}
\and
\author[B]{\inits{J.}\fnms{Jon} \snm{ Schneider}\ead[label=e3]{jschneider@google.com}}
\address[A]{The University of Chicago Booth School of Business, Chicago, USA.
\printead{e1}}

\address[B]{Google Research, New York City, USA.\\
\printead{e2,e3}}
\end{aug}

\begin{abstract}
A Bernoulli factory is an algorithmic procedure for exact sampling of certain random variables having only Bernoulli access to their parameters. Bernoulli access to a parameter $p \in [0,1]$ means the algorithm does not know $p$, but has sample access to independent draws of a Bernoulli random variable with mean equal to $p$. In this paper, we study the problem of Bernoulli factories for polytopes: given Bernoulli access to a vector $x\in \mathcal{P}$ for a given polytope $\mathcal{P}\subset [0,1]^n$, output a randomized vertex such that the expected value of the $i$-th coordinate is \emph{exactly} equal to $x_i$. For example, for the special case of the perfect matching polytope, one is given Bernoulli access to the entries of a doubly stochastic matrix $[x_{ij}]$ and asked to sample a matching such that the probability of each edge $(i,j)$ be present in the matching is exactly equal to $x_{ij}$.

We show that a polytope $\P$ admits a Bernoulli factory if and and only if $\P$ is the intersection of $[0,1]^n$ with an affine subspace. Our construction is based on an algebraic formulation of the problem, involving identifying a family of Bernstein polynomials (one per vertex) that satisfy a certain algebraic identity on $\P$. The main technical tool behind our construction is a connection between these polynomials and the geometry of zonotope tilings.

We apply these results to construct an explicit factory for the perfect matching polytope. The resulting factory is deeply connected to the combinatorial enumeration of arborescences and may be of independent interest. For the $k$-uniform matroid polytope, we recover a sampling procedure known in statistics as Sampford sampling.\footnote{A preliminary conference version of this work has appeared in the proceeding of the 53rd Annual ACM SIGACT Symposium on Theory of Computing (STOC'21)~\citep{niazadeh2021combinatorial}. The current version presents all the missing proofs and technical details, as well as new results, alternative proofs, and more explanations.}

\end{abstract}

\begin{keyword}
\kwd{Bernoulli factories}
\kwd{Exact simulation}
\kwd{Combinatorial polytopes}
\kwd{Sampford sampling}
\end{keyword}

\end{frontmatter}
\section{Introduction}

Bernoulli factories are basic primitives used in statistics to generate exact samples of a random variable from independent samples of a related random variable. Bernoulli factory techniques have found their applications in settings as diverse as Bayesian mechanism design~\citep{dughmi2017bernoulli,cai2019efficient}, quantum physics~\citep{dale2015provable,yuan2016experimental}, exact simulation of stochastic processes such as diffusion \citep{blanchet2017exact}, Markov chain Monte Carlo (MCMC) methods~\citep{flegal2012exact}, and exact Bayesian inference~\citep{gonccalves2017exact,herbei2014estimating}. In mechanism design they allow for black-box reductions for welfare maximization that exactly preserve the Bayesian incentive compatibility, which offers stronger game-theoretical guarantees than approximately incentive compatible reductions. In Bayesian inference and stochastic simulation, the exact sampling afforded by Bernoulli factories allows them to be used in iterative methods without errors compounding.

In this paper we study Bernoulli factories for general polytopes -- with a particular focus on combinatorial settings. Before describing this (combinatorial) Bernoulli factory problem, it is useful to revisit the definition of the classic single-parameter version of the problem. The single parameter problem is typically phrased as generating new coins from old ones, where a coin here refers to a Bernoulli random variable. We are given access to a $p$-coin with unknown parameter $p$ and asked to generate a sample of an $f(p)$-coin for some known function $f:S \subseteq (0,1) \rightarrow (0,1)$. The algorithm does not know $p$, but has access to as many independent samples as it wants from a Bernoulli random variable with parameter $p$ (the $p$-coin); the goal is to output $1$ with probability $f(p)$.

For the function $f(p) = p^2$, for example, the algorithm can draw two samples from the $p$-coin and output $1$ if both samples are $1$, and outputs $0$ otherwise. A less trivial example is the function $f(p) = e^{p-1}$. Rewriting this function as the probability generating function of a discrete Poisson random variable, i.e., $f(p) = \E_{X \sim \textrm{Poisson(1)}}[p^X]$, leads to the following algorithm: (i) sample $X\sim \textrm{Poisson(1)}$, (ii) draw $X$ independent samples from the $p$-coin, and (iii) if all the samples are $1$ output $1$, otherwise output $0$. Keane and O'Brien~\citep{keane1994bernoulli} give necessary and sufficient conditions on function $f$ for the existence of Bernoulli factories.

Before we proceed, we emphasize a crucial point: the Bernoulli factory problem asks for \emph{exact} sampling, as opposed to (even very precise) approximate sampling. This property is essential the aforementioned applications in statistics, mechanism design and quantum mechanics, and is indeed the main motivation behind the study of Bernoulli factories. Approximate sampling is much simpler; in general, one can build an estimator $\hat{p}$ from i.i.d. samples and then sample a Bernoulli r.v. with parameter $f(\hat{p})$. This, however, is \emph{not} a Bernoulli factory.

\paragraph*{Combinatorial Bernoulli Factories} In this paper we will be mostly concerned with sampling a combinatorial object (e.g., a matching or a flow) having black-box sample access to marginal probabilities, say the probability that an edge is present in the matching. Formally, we are given an $n$-dimensional polytope $\P \subseteq [0,1]^n$ with vertices $V$. We are given $n$ coins with unknown probabilities $x_1,\hdots, x_n$ such that $x = (x_1, \hdots, x_n) \in \P \cap (0,1)^n$. A \emph{Bernoulli factory} for $\P$ is a then a randomized procedure for sampling a vertex $v \in V$ such that $\E[v]=x$. For convenience, the Bernoulli factory algorithm is allowed to use external randomness, besides using the given coins. This is indeed without loss of generality, since it is shown by~\citep{von} that it is possible to sample any random variable with known probability using a $p$-coin with unknown $p \in (0,1)$.



It is typical in the Bernoulli factory literature (e.g., \citep{keane1994bernoulli,nacu2005fast}) to restrict the input coins to be non-deterministic, i.e., $x_i \in (0,1)$. In some cases, though, it is possible to construct factories for all $x \in \P$ also allowing for $\{0,1\}$-coordinates (in other words, extending the factory to the boundary of $[0,1]^n$). We call a such a factory a \emph{strong Bernoulli factory}. We now ask the following question:
\begin{displayquote}
{\emph{\normalsize{Under what conditions does a polytope $\P\subseteq [0,1]^n$ admit a Bernoulli factory? 
If it admits one, how can one construct such a factory?}}}
\end{displayquote}
\noindent On the path to answer this question, it is useful to keep the following concrete examples in mind:


\begin{itemize}
    \item \bf{$\boldsymbol{k}$-subsets} (also known as \bf{$\boldsymbol{k}$-uniform matroids}): $n$ coins with unknown parameters $\{x_i\}_{i\in[n]}$ are given, such that $\sum_i x_i = k$ for some integer $k$. We are asked to sample a subset $S \subseteq [n]$ of $k$ elements such that $\Pr[i \in S] = x_i$. 
    
    This setting corresponds to the polytope $\P = \{x \in [0,1]^n| \sum_i x_i = k\}$, which essentially is the $k$-uniform matroid polytope. The vertices correspond to the indicator vectors of subsets $S$ of size $k$, i.e., bases of the $k$-uniform matroid. 
    
    \item \bf{Matchings}: Consider a complete bi-partite graph with an independent $x_{ij}$-coin for each edge such that the parameters incident to every node sum to $1$. We want to sample a perfect matching such that edge $(i,j)$ is included with probability $x_{ij}$.
    
    This setting corresponds to the Birkhoff-von Neumann polytope,
    $$\textstyle\P = \left\{x \in [0,1]^{n \times n}| \sum_k x_{kj} = \sum_k x_{ik} = 1, \forall i,j\right\},$$
    i.e., the set of doubly stochastic matrices. The vertices correspond to perfect matchings (or equivalently permutations over $[n]$) by the Birkhoff-von Neumann Theorem.
    
    \item \bf{Flows}: Consider a directed graph $(N,E)$ with a source $s$ and a sink $t$ and an $x_{ij}$-coin for each edge $(i,j) \in E$ such that for each node other than the source and the sink the sum of $x_{ij}$ for incoming edges is the same as the sum of $x_{ij}$ for outgoing edges. Let the sum of outgoing edges of the source $s$ is an integer $k$. We want to sample an integral $(s,t)$-flow of size $k$ such that edge $(i,j)$ is included with probability exactly $x_{ij}$.
    
    This setting corresponds to the flow polytope 
    $$\textstyle \P = \left\{x \in [0,1]^E| \sum_{j|(i,j) \in E} x_{ij} = \sum_{j| (j,i) \in E} x_{ji}, i \neq s,t; \sum_{j|(s,j) \in E} x_{sj} = \sum_{j|(j,t) \in E} x_{jt} = k \right\}.$$
    The vertices are integral $(s,t)$-flows. For $k=1$ this means sampling a path from $s$ to $t$.
\end{itemize}

\paragraph*{Main Result and Techniques} We answer the above question by providing necessary and sufficient conditions to construct combinatorial Bernoulli factories. More formally, we show it is necessary and sufficient that $\P$ is of the form $\H \cap [0,1]^n$, where $\H = \{ x \in \R^n| Wx = b \}$ is an affine subspace, for the existence of a Bernoulli factory for $\P$. The result is constructive and allows us to obtain factories for $k$-subsets, matchings, flows, and all other polytopes of the mentioned form.

The necessary condition is simpler and follows from an argument in polyhedral combinatorics. We show that if the polytope $\P$ is not of the form $\H \cap [0,1]^n$, there must exist two nearby points $x_1$ and $x_2$ in $\P$ and a vertex $v$ such that $x_2$ must output $v$ with non-negative probability while $x_1$ must output $v$ with zero probability (see Figure \ref{fig:2d_fac}). However, no algorithm can perfectly distinguish between $x$ and $x'$ with finitely many samples, so this is impossible.

The technically challenging part of this proof is to construct a factory for polytopes $\P$ of the form $\H \cap [0,1]^n$. Interestingly, we convert what is originally a probability problem to an algebraic question about Bernstein polynomials, which we then solve with the aid of techniques from geometric combinatorics. The main pieces of this argument are as follows:

\begin{itemize}
    \item \emph{Race over Bernstein polynomials}: We give a recipe for constructing Bernoulli factories by associating with each vertex $v$ of the polytope $\P$ a multivariate Bernstein polynomial $P_v(x)$ such that the polynomials satisfy $\sum_v P_v(x)(x-v) = 0$. This part of the proof follows from a combination of two ideas in the literature: (i) univariate Bernstein polynomials have been used many times to reason about Bernoulli factories in single-parameter settings, and (ii) the Bernoulli race construction of  \citep{dughmi2017bernoulli}.
    \item \emph{Generic and non-generic subspaces}: Each subspace $\H$ can be written in the form $Wx = b$ for an full-rank $k\times n$ matrix $W$.
    We say that a subspace is generic if the vertices of $\H \cap [0,1]^n$ have exactly $k$ coordinates in $(0,1)^n$. For any fixed $W$ the set of vectors $b$ for which $Wx = b$ is non-generic has measure zero.
    
    We first show how to construct strong Bernoulli factories for generic subspaces (see bullets below) and then obtain factories for non-generic subspaces as appropriately defined limits of generic factories. The non-generic construction is important since many polytopes of interest ($k$-subsets, matchings and flows) are non-generic.
    \item \emph{Polynomials from minors}: Given a generic affine subspace $\H$ of the form $Wx = b$ we can associate each vertex $v$ of the polytope $\P$ to a subset $S$ of size $k$ of variables that are \emph{basic} (in the terminology of the simplex method). The determinant of the subset of the $k \times k$ minor corresponding to the basic variables is then used to construct a Bernstein monomial associated with $v$.
    \item \emph{Zonotope tilings}: Finally, we need to show that this construction satisfies the polynomial identity $\sum_v P_v(x)  (x-v) = 0$ in the first bullet. This is done by associating each vertex of the polytope with a point in a geometric space. The polynomial identity is then proved by considering two distinct decompositions of this geometric space into zonotope tilings (see Figures \ref{fig:2d_pp}-\ref{fig:2d_pz}).
\end{itemize}

These ingredients lead to the following algorithm (\Cref{alg:bernoulli_generic}). In this algorithm, we use $W_S$ to denote the $k \times k$ submatrix formed by the columns of $W$ corresponding to indices in $S \subseteq [n]$ . We also assume that the description of the affine subspace is such that $\abs{\det W_S} \leq 1$ for each subset $S$ of size $k$ (this is without loss of generality, since one can always scale $W$ and $b$ to satisfy it).

\begin{algorithm}[H]
\caption{{\sc Bernoulli Factory for Generic Subspaces}}
\label{alg:bernoulli_generic}
\begin{algorithmic} 
\State Pick a vertex $v \in V$ uniformly at random.
\State Let $S = \{i \in [n]| v_i \in (0,1)\}$.
\State For each $i \notin S$, sample an $x_i$-coin. If the sample is not equal to $v_i$, restart.
\State For each $i \in S$ sample two $x_i$-coins. If the samples are $1$ and $0$, proceed. Otherwise, restart.
\State With probability $\abs{\det W_S}$ output $v$. With remaining probability, restart.
\end{algorithmic}
\end{algorithm}

\paragraph*{Consequence for $\boldsymbol{k}$-subset} For the $k$-subset problem, we recover the sampling procedure in statistics known as Sampford sampling~\citep{sampford1967sampling}. Our result shows that this particular procedure can indeed be implemented as a Bernoulli factory. Since the $k$-subset polytope is non-generic it is obtained via the limit of generic polytopes. While the factories we construct for generic polytopes are strong factories, the limit of these factories diverges on the boundary of $[0, 1]^n$ -- hence,  we obtain a Bernoulli factory in the limit but not a strong factory. Indeed this is also a feature of Sampford's original sampling process -- it also requires all probabilities to be strictly in $(0,1)$. 

One may ask whether it is possible to extend Sampford sampling to also allow for deterministic variables. We prove a impossibility result showing that can exist no strong Bernoulli factory for $k$-subset with exponential tails (i.e., the probability of requiring more than $t$ coins is at most $c^t$ for some $c$). In particular, this implies that it is impossible construct a strong Bernoulli factory for $k$-subset by running Bernoulli race over Bernstein polynomials.

\paragraph*{Consequence for matching} For the matching problem, our Bernoulli factory has a particularly nice combinatorial structure. This structure is somewhat surprising since it goes through combinatorial constructions that do not seem to be related to sampling perfect matchings at first glance. In particular, we show our Bernstein polynomials can be alternatively obtained by enumerating particular monomials, one for each rooted arborescence in the complete graph $K_n$. The final argument to show the desired polynomial identity relies on counting arborescences using variants of Kirchhoff's matrix-tree theorem~\citep{kirchhoff1847ueber,tutte1948dissection} and additional combinatorial arguments related to trees and cycle covers. 

Here is the sampling procedure for matchings (\Cref{alg:bernoulli_matching}). Recall that we have an $n\times n$ doubly-stochastic matrix $[x_{ij}]$ with all entries in $(0,1)$. For each $(i,j)$ we have access to independent samples of an $x_{ij}$-coin. Our goal is to sample a perfect matching such that each edge $(i,j)$ is included in the matching with probability exactly equal to $x_{ij}$. 

\begin{algorithm}[H]
\caption{{\sc Bernoulli Factory for Matching}}
\label{alg:bernoulli_matching}
\begin{algorithmic} 
\State Pick uniformly at random a permutation $\pi$ over $[n]$.
\State For each $i \in [n]$ sample the  $x_{i  \pi(i)}$-coin. If any sample is $0$, restart.
\State Pick uniformly at random a spanning tree of the complete graph $K_n$.
\State Let $T$ be the set of edges $(i,j)$ of the tree oriented toward vertex $1$.
\State For each edge $(i,j) \in T$ sample the $x_{i \pi(j)}$-coin. If any sample is $0$, restart.
%
\State Output the matching $\{(i, \pi(i))\}_{i \in [n]}$.
\end{algorithmic}
\end{algorithm}



\paragraph*{Paper Organization}
In Section \ref{sec:prelims} we provide a formal definition of a Bernoulli factory as a decision tree and describe how it can be constructed via Bernstein polynomials. In Section \ref{sec:matching_factory} we give a self-contained presentation of a factory for matching via a combinatorial construction. In Section \ref{sec:necessary} we give necessary conditions on $\P$ for the existence of Bernoulli factories. We show in the following two sections that those conditions are also sufficient. In Section \ref{sec:generic}, we construct strong factories for generic subspaces via the geometry of zonotope tilings. In Section \ref{sec:non_generic}, we describe how to obtain factories for non-generic subspaces as limits of factories for generic ones. Finally, in \Cref{sec:impossibility_race_over_bernstein} we give an impossibility result for constructing strong factories for the $k$-subset polytope with fast convergence.

\paragraph*{Further Related Work} Beyond the work of Keane and O'Brien~\citep{keane1994bernoulli}, several other papers have studied different constructions and fast Bernoulli factory algorithms for functions $f:(0,1)\rightarrow (0,1)$. \citep{nacu2005fast} give the necessary and sufficient conditions for the existence of fast Bernoulli factories (see \Cref{sec:impossibility_race_over_bernstein} for an equivalent definition). An alternative algorithm for general analytic functions is proposed in \citep{latuszynski2011simulating}. A fast Bernoulli factory for rational functions is proposed in~\citep{mossel2005new}. More recently,~\citep{morina2019bernoulli} show how to construct a more practical Bernoulli factory for rational functions using coupling from the past~\citep{propp1996exact}. Both of these results extend to the ``dice enterprise problem", where the goal is exact simulation of a multivariate rational mapping $f:\Delta^{m}\rightarrow\Delta^m$ between probability simplices. Faster Bernoulli factories for linear functions are studied in~\citep{huber2016nearly,huber2017optimal}. \citep{mendo2019asymptotically} studies near-optimal Bernoulli factories for power series and~\citep{goyal2012simulating} study a particular class of Bernstein polynomials. Finally, extending Bernoulli factory algorithms to quantum settings is studied in~\citep{patel2019experimental}. 

In addition to those mentioned earlier, Bernoulli factory techniques have recently found other applications in different corners of computer science and statistics. They have been successfully applied to exact simulations of diffusions~\citep{blanchet2017exact}, designing exact simulation methods using MCMCs for Bayesian inference~\citep{herbei2014estimating,gonccalves2017barker}, designing particle filters~\citep{schmon2019bernoulli}, and designing blackbox reductions in Bayesian mechanism design~\citep{dughmi2017bernoulli,cai2019efficient}.

Indirectly related to us is the line of work on efficient approximate sampling from particular family of distributions (e.g., maximum entropy) over combinatorial polytopes (e.g., matching and matroid polytopes), satisfying a given vector of marginals. For example, see \citep{singh2014entropy,straszak2017real,anari2018log}. Our work diverges from this literature by the fact that a Bernoulli factory algorithm has only Bernoulli access to the marginal vector, and that it should satisfy the marginals exactly. Also, some aspects of the Bernoulli factory problem resemble the exact simulation of MCMCs in different contexts \citep{asmussen1992stationarity,jerrum1996markov,propp1998coupling}.

\section{Preliminaries}
\label{sec:prelims}
We start by formally defining a general Bernoulli factory that captures both the standard single-parameter Bernoulli factory and our generalization to the Bernoulli factory for polytopes. We then show how to use particular polynomials to construct those. 
\subsection{General Bernoulli factories}\label{sec:bernoulli_factories}


Below we define a general Bernoulli factory outputting elements in a set $V$ using a decision tree. We note that this definition is not tied to any particular function $f$.

\begin{definition}[Bernoulli factory]
A \emph{Bernoulli factory} $\mathcal{F}$ with output in $V$ is represented by a (possibly infinite) rooted binary tree $\mathcal{T}$. Each node in $\mathcal{T}$ has either $2$ children (in which case it is a \textit{non-leaf}) or $0$ children (in which case it is a \textit{leaf}). Each non-leaf $w$ is labeled with one of the $n$ random variables $\{x_1, x_2, \dots, x_{n}\}$ or with a constant $c \in (0,1)$. When executing the protocol, we either flip the $x_i$-coin in the label or a $c$-coin with known probability $c$. The edges from a non-leaf $w$ to its two children are labelled $0$ and $1$, corresponding to the output of the coin flip at $w$. Each leaf node $\ell$ is labelled with some $v \in V$, representing the output of our Bernoulli factory upon reaching this leaf node. 

To execute the factory $\mathcal{F}$, we start at the root node and repeatedly follow the following procedure. If we are at a non-leaf node $w$, we flip the coin given by $w$'s label, receive a result $r \in \{0, 1\}$, and follow the edge with label $r$ to one of $w$'s children. If we are at a leaf node $\ell$, we simply output the label of $\ell$.
\end{definition}

We let $\mathcal{F}(x) \in V \cup \{\emptyset\}$ denote the random variable corresponding to the output of the Bernoulli factory $\mathcal{F}$ on input $x \in [0, 1]^n$ (if $\mathcal{F}(x)$ does not terminate, we write $\mathcal{F}(x) = \emptyset$). We say a factory $\mathcal{F}$ \textit{terminates almost surely (a.s.)} on a domain $S \subseteq [0, 1]^n$ if $\Pr[\mathcal{F}(x) = \emptyset] = 0$ for all $x \in S$. Moreover, if the tree $\mathcal{T}$ corresponding to a Bernoulli factory $\mathcal{F}$ is finite, we say that $\mathcal{F}$ is a \emph{finite Bernoulli factory}.

\begin{definition}[One-bit Bernoulli factory]
We say that $\mathcal{F}$ outputting in $\{0,1\}$ is a \emph{one-bit Bernoulli factory} for the function $f: S \rightarrow [0, 1]$ on $S$ if (i) $\mathcal{F}$ terminates a.s. on $S$, and (ii) $\Pr[\mathcal{F}(x) = 1] = f(x), \forall x \in S$. 
\end{definition}





\subsection{Bernstein polynomials}

Fix a Bernoulli factory $\mathcal{F}$, and consider a leaf node $\ell$ of $\mathcal{F}$. Let $\Pr[\mathcal{F}(x) \rightarrow \ell]$ denote the probability that $\mathcal{F}(x)$ terminates at leaf $\ell$. By multiplying out the probabilities of each transition along the path from the root of $\mathcal{F}$ to $\ell$, we can write $\Pr[\mathcal{F}(x) \rightarrow \ell]$ in the form

\begin{equation}\label{eq:leaf}
\Pr[\mathcal{F}(x) \rightarrow \ell] = c\prod_{i=1}^{n}x_{i}^{a_i}(1-x_i)^{b_i}
\end{equation}

\noindent
for some $c \in [0, 1]$ and non-negative integers $a_i, b_i$ (for example, $b_i$ is the number of times variable $x_i$ appears on the path to $v$ where we take the edge labelled 0). The expression on the right-hand side of \Cref{eq:leaf} is known as a \textit{Bernstein monomial}. 

\begin{definition}[Bernstein polynomial]\label{def:bernstein}
A \emph{Bernstein monomial} in $n$ variables is a polynomial of the form:
$M(x) = \prod_{i=1}^n x_i^{a_i} (1-x_i)^{b_i}$
for $a_i, b_i \in \Z^{\geq 0}$. We will say that $a_i + b_i$ is the degree with respect to variable $i$ of this monomial and denote it $\deg_i(M)$.
A \emph{Bernstein polynomial} in $n$ variables is a positive combination of finitely many Bernstein monomials:
$P(x) = \sum_{j=1}^{k}c_{j}M_j(x)$
for Bernstein monomials $M_j(x)$ and coefficients $c_{j} \in \R^{+}$.
\end{definition}

Note that we can write $\Pr[\mathcal{F}(x) = v]$ as the sum of $\Pr[\mathcal{F}(x) \rightarrow \ell]$ over all leaves $\ell$ with label $v$. This means we can write $\Pr[\mathcal{F}(x) = v]$ as a weighted series of Bernstein monomials; in particular, if $\mathcal{F}(x)$ is a finite Bernoulli factory, then $\Pr[\mathcal{F}(x) = v]$ is a Bernstein polynomial in $x$. One fact that will prove particularly useful is a partial converse to this: given any Bernstein polynomial $P(x)$, it is always possible to construct a Bernoulli factory for a suitably normalized version of $P(x)$.

\begin{lemma}\label{lem:bernstein_factory}
Let $P(x) = \sum_{j=1}^{k}c_{j}M_j(x)$ be a Bernstein polynomial in $n$ variables, and let $C = \sum_{j=1}^{k} c_j$. Then there exists a finite one-bit Bernoulli factory for $P(x)/C$.
\end{lemma}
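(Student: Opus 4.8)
The plan is to build the Bernoulli factory for $P(x)/C$ by a two-stage process: first sample which monomial $M_j$ we are going to ``attempt'' (with probability proportional to $c_j$ after absorbing any spare constant factor into $C$), and then run a small gadget that outputs $1$ with probability exactly $M_j(x)$ using the available coins. This is exactly the ``Bernoulli race'' idea attributed to \cite{dughmi2017bernoulli} specialized to the case where the racing quantities are plain Bernstein monomials.

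\begin{proof}[Proof sketch]
Without loss of generality assume each $c_j \le 1$ (otherwise rescale all $c_j$ and $C$ by a common constant; this does not change $P(x)/C$). We construct the binary decision tree $\T$ in two layers. First, using only constant-labeled ($c$-coins) nodes, we build a finite sub-tree with $k+1$ leaves that reaches the $j$-th leaf with probability $c_j / C$ for $j = 1, \dots, k$ and the $(k+1)$-st ``reject'' leaf with probability $1 - \sum_j c_j/C$; since all these target probabilities are rationals (or reals) in $[0,1]$ summing to $1$, such a finite tree over biased constant coins exists (e.g. split off $c_1/C$, then $c_2/(C - c_1)$, and so on, using one constant coin per split). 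At the $(k+1)$-st leaf we output $0$. At the $j$-th leaf, for $j \le k$, we graft on a gadget $G_j$ that outputs $1$ with probability exactly $M_j(x) = \prod_{i=1}^n x_i^{a_i^{(j)}}(1-x_i)^{b_i^{(j)}}$: namely, for each $i$ flip the $x_i$-coin $a_i^{(j)}$ times and then $b_i^{(j)}$ more times, output $1$ iff the first group all came up $1$ and the second group all came up $0$, and output $0$ otherwise. Each $G_j$ is a finite tree with $\sum_i (a_i^{(j)} + b_i^{(j)})$ internal nodes, so the whole tree $\T$ is finite.

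It remains to check correctness. By construction $\Pr[\T \text{ reaches leaf } j] = c_j/C$ independently of $x$ (the first layer uses only constant coins), and conditioned on reaching leaf $j$, the gadget $G_j$ outputs $1$ with probability $M_j(x)$ because the coin flips are independent and $\Pr[\text{first } a_i^{(j)} \text{ flips are } 1] = x_i^{a_i^{(j)}}$, $\Pr[\text{next } b_i^{(j)} \text{ flips are } 0] = (1-x_i)^{b_i^{(j)}}$. Hence
\[
\Pr[\mathcal{F}(x) = 1] = \sum_{j=1}^{k} \frac{c_j}{C} M_j(x) = \frac{P(x)}{C}.
\]
Since $\T$ is finite it terminates a.s. on all of $[0,1]^n$, so $\mathcal{F}$ is a finite one-bit Bernoulli factory for $P(x)/C$.
\end{proof}

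I do not expect a serious obstacle here: the one genuinely delicate point is realizing the discrete distribution $(c_1/C, \dots, c_k/C, 1 - \sum c_j/C)$ exactly using finitely many constant-biased coin flips, but this is immediate from the footnoted fact (von Neumann) that any finitely-supported distribution with known probabilities can be sampled from a single $c$-coin, and in any case the sequential ``peel off one mass at a time'' construction above makes it explicit with a finite tree. Everything else is just multiplying independent coin-flip probabilities, so the main content is really the bookkeeping that the two layers compose to give the claimed polynomial.
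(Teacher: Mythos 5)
Your proof is correct and takes essentially the same approach as the paper: sample the index $j$ with probability $c_j/C$ using constant coins, then simulate the Bernstein monomial $M_j(x)$ by flipping the $x_i$-coins and conjuncting the outcomes. The only (harmless) difference is your extra ``reject'' leaf, which in fact has probability $1 - \sum_j c_j/C = 0$ since $C = \sum_j c_j$, so it never fires.
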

\begin{proof}
Consider the following Bernoulli factory $\mathcal{F}$. We first sample a monomial $M_j(x)$ with probability $c_j/C$ (by using external randomness). Now, if $M_j(x) = \prod_{i=1}^n x_i^{a_i} (1-x_i)^{b_i}$, flip each coin $i$ a total of $\deg_{i}(M)$ times. If for each coin $i$, the first $a_i$ flips returned $1$ and the next $b_i$ flips returned $0$, output $1$ for the overall factory $\mathcal{F}$. Otherwise, output $0$.

Conditioned on sampling monomial $M_j$, we return $1$ with probability $\prod_{i=1}^n x_i^{a_i} (1-x_i)^{b_i} = M_{j}(x)$. Since we sample monomial $M_j$ with probability $c_j/C$, the total probability $\mathcal{F}(x) = 1$ equals
$\Pr[\mathcal{F}(x) = 1] = \sum_{j}\frac{c_{j}}{C}M_{j}(x) = \frac{P(x)}{C},$
as desired.
\end{proof}

We now describe a method for constructing a factory outputting in $V$ from a collection of one-bit Bernoulli factories for each element $v \in V$. This method is known as a \textit{Bernoulli race} and it was introduced in \citep{dughmi2017bernoulli}. We summarize its properties in the following theorem. 

\begin{theorem}[Bernoulli race]\label{thm:bernoulli_race}
Fix a domain $S \subseteq [0, 1]^n$. For each $v \in V$, let $\mathcal{F}_{v}$ be a one-bit Bernoulli factory implementing a function $f_{v}: S \rightarrow [0, 1]$. If
$\sum_{v \in V} f_{v}(x) > 0, \forall x \in X$, then there exists a Bernoulli factory $\mathcal{G}$ that terminates a.s. on $S$ and outputs $v \in V$ with probability $f_{v}(x)/\sum_{v'}f_{v'}(x)$.
\end{theorem}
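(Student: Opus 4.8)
The plan is to build the Bernoulli race $\mathcal{G}$ as a simple rejection-sampling loop layered on top of the given one-bit factories $\{\mathcal{F}_v\}_{v \in V}$. First I would describe one \emph{round} of the race: pick a candidate vertex $v \in V$ uniformly at random (using external randomness, which is permitted), then run the one-bit factory $\mathcal{F}_v$ on the input $x$; if $\mathcal{F}_v$ outputs $1$, halt the whole procedure and output $v$; if $\mathcal{F}_v$ outputs $0$, discard $v$ and start a fresh round. Since each $\mathcal{F}_v$ terminates a.s. on $S$, each round terminates a.s., so $\mathcal{G}$ is a well-defined Bernoulli factory (its decision tree is obtained by splicing copies of the trees of the $\mathcal{F}_v$'s together, with the "restart" edges looping back to a fresh root-copy).

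Next I would compute the output distribution. Let $N = |V|$. In a single round, the probability of halting with output $v$ is $\tfrac1N f_v(x)$, and the probability of halting at all in a given round is $q(x) := \tfrac1N \sum_{v'} f_{v'}(x)$, which is strictly positive on $S$ by hypothesis. Rounds are independent, so conditioning on the (a.s. finite) round at which the procedure halts, the probability that $\mathcal{G}(x) = v$ is the geometric-series sum $\sum_{t \ge 0} (1-q(x))^t \cdot \tfrac1N f_v(x) = \tfrac{f_v(x)/N}{q(x)} = f_v(x) / \sum_{v'} f_{v'}(x)$, exactly as claimed. The same computation with the numerator replaced by $q(x)$ shows the total halting probability is $1$, i.e. $\mathcal{G}$ terminates a.s. on $S$ — one should note that $\Pr[\mathcal{G}(x) = \emptyset] = \lim_{t\to\infty}(1-q(x))^t = 0$ precisely because $q(x) > 0$, which is where the hypothesis $\sum_v f_v(x) > 0$ is used.

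I do not expect any serious obstacle here; the argument is essentially the correctness analysis of rejection sampling. The only points requiring mild care are (i) checking that the spliced object genuinely fits the formal decision-tree definition of a Bernoulli factory — in particular that the constant-coin nodes used to sample $v$ uniformly can be simulated by $(1/2)$-coins or, more simply, are allowed as $c$-coins with $c \in (0,1)$ when $N$ is not a power of two (here one may invoke the footnote's remark that any known distribution is realizable, or just take $N$ a power of two after padding $V$ with dummy zero-probability copies) and (ii) being careful that "terminates a.s." is about the probability of never halting being zero, not about a finite worst-case depth — the tree $\mathcal{T}$ for $\mathcal{G}$ is genuinely infinite even when each $\mathcal{F}_v$ is finite. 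Neither point obstructs the proof; they just need to be stated cleanly.
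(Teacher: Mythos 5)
Your proposal is correct and follows essentially the same route as the paper: the same uniform-then-reject loop, the same observation that each round halts with probability $\tfrac{1}{|V|}\sum_{v'} f_{v'}(x) > 0$, and the same output-distribution computation (you sum a geometric series where the paper writes the equivalent one-step recursion and rearranges). Your added remarks about splicing trees and realizing the uniform choice via constant coins are sensible housekeeping but not a different argument.
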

\begin{proof}
Consider the following procedure for $\mathcal{G}(x)$:

\begin{enumerate}[label=(\roman*)]
    \item Sample a $v$ uniformly at random from $V$.
    \item Run the factory $F_{v}(x)$. If the factory returns $1$, output $v$. Otherwise, return to step (i).
\end{enumerate}

We claim that this procedure terminates a.s. on $S$ and outputs vertex $v$ with probability $f_{v}(x)/\sum_{v'}f_{v'}(x)$. To see this, first note that each iteration of this procedure terminates with probability $(\sum_{v'}f_{v'}(x))/\abs{V} > 0$. Since there is a positive chance of terminating each round, and since each individual factory $F_{v}$ terminates a.s. on $S$, this procedure will terminate a.s. on $S$.

Now, note that we can write
\begin{equation}\label{eq:race1}
\Pr[\mathcal{G}(x) = v] = \frac{f_{v}(x)}{V} + \left(1 - \frac{\sum_{v'}f_{v'}(x)}{V}\right)\Pr[\mathcal{G}(x) = v],
\end{equation}
since there is a $\frac{f_{v}(x)}{V}$ chance we output $v$ in any given round, and a $1 - \frac{\sum_{w}f_{v}(x)}{V}$ chance we restart the procedure. Rearranging \Cref{eq:race1}, we have $\Pr[\mathcal{G}(x) = v] = f_{v}(x)/\sum_{v'}f_{v'}(x)$, as desired.
\end{proof}

In our applications, we will specifically want to take Bernoulli races over finite Bernoulli factories implementing Bernstein polynomials. 

\begin{corollary}[Bernoulli race over Bernstein polynomials]\label{cor:race_over_bernstein}
For each $v \in V$, let $P_{v}(x)$ be a Bernstein polynomial in $n$ variables. Fix a domain $S \subseteq [0, 1]^n$. If $\sum_{v} P_{v}(x) > 0$ for all $x \in S$, then there exists a Bernoulli factory which terminates a.s. on $S$ and outputs $v$ with probability
$P_{v}(x) / \sum_{v'} P_{v'}(x)$.
\end{corollary}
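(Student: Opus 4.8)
The plan is to derive \Cref{cor:race_over_bernstein} almost immediately from the two tools just established: \Cref{lem:bernstein_factory} (which turns a Bernstein polynomial into a finite one-bit factory for a normalized version of it) and \Cref{thm:bernoulli_race} (which runs a Bernoulli race over one-bit factories). The only subtlety is that \Cref{lem:bernstein_factory} produces a factory for $P_v(x)/C_v$ rather than for $P_v(x)$ itself, where $C_v$ is the sum of the coefficients of $P_v$; so I need to check that passing these normalized functions into the Bernoulli race still yields the claimed output distribution, and that the positivity hypothesis required by \Cref{thm:bernoulli_race} is met.

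First I would handle a trivial degenerate case: if some $P_v$ is identically zero (equivalently has no monomials, so $C_v = 0$), simply discard that $v$ from $V$ — it never needs to be output since $P_v(x)/\sum_{v'}P_{v'}(x) = 0$ everywhere — or alternatively note \Cref{lem:bernstein_factory} still applies vacuously. Assuming WLOG each $C_v > 0$, apply \Cref{lem:bernstein_factory} to each $v$ to obtain a finite one-bit Bernoulli factory $\mathcal{F}_v$ implementing $f_v(x) := P_v(x)/C_v$ on all of $[0,1]^n$, hence in particular on $S$. The positivity hypothesis of \Cref{thm:bernoulli_race} asks for $\sum_v f_v(x) > 0$ on $S$; this does not follow directly from $\sum_v P_v(x) > 0$ unless the $C_v$ are equal, but it does follow because each $C_v > 0$, so $f_v(x) = P_v(x)/C_v \ge 0$ with $f_v(x) > 0$ whenever $P_v(x) > 0$, and since $\sum_v P_v(x) > 0$ at least one term $P_v(x)$ is strictly positive, forcing $\sum_v f_v(x) > 0$.

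Then invoke \Cref{thm:bernoulli_race} with this family $\{\mathcal{F}_v\}$ and domain $S$: it produces a factory $\mathcal{G}$ terminating a.s.\ on $S$ with
\[
\Pr[\mathcal{G}(x) = v] = \frac{f_v(x)}{\sum_{v'} f_{v'}(x)} = \frac{P_v(x)/C_v}{\sum_{v'} P_{v'}(x)/C_{v'}}.
\]
This is not quite $P_v(x)/\sum_{v'}P_{v'}(x)$ in general. The clean fix is to rescale before applying the lemma: replace each $P_v$ by $P_v / C_v'$ for a common constant — but scaling a Bernstein polynomial by a positive constant changes its coefficient sum, not the normalization issue. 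The correct resolution, which I expect to be the one intended, is to apply \Cref{lem:bernstein_factory} not to $P_v$ individually but to rescaled polynomials chosen so all coefficient sums agree: let $C = \max_v C_v$ (or $\sum_v C_v$), and for each $v$ form the Bernstein polynomial $\widetilde{P}_v(x) = P_v(x) + (C - C_v)\cdot 1$, noting that the constant $1$ is the Bernstein monomial with all $a_i = b_i = 0$, so $\widetilde P_v$ is a genuine Bernstein polynomial with coefficient sum exactly $C$. Hmm — but that additive shift corrupts the ratios. So instead the right move is: \Cref{lem:bernstein_factory} gives a factory for $P_v(x)/C_v$; multiply this one-bit factory by the known constant $C_v/C$ by precomposing with a $c$-coin of bias $C_v/C$ (output $1$ only if both the $c$-coin and $\mathcal{F}_v$ return $1$), yielding a one-bit factory $\mathcal{F}_v'$ for $P_v(x)/C$. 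Now all $f_v(x) = P_v(x)/C$ share the same denominator, $\sum_v f_v(x) = \frac{1}{C}\sum_v P_v(x) > 0$ on $S$, and \Cref{thm:bernoulli_race} outputs $v$ with probability $\frac{P_v(x)/C}{\sum_{v'}P_{v'}(x)/C} = P_v(x)/\sum_{v'}P_{v'}(x)$, exactly as claimed.

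The main (and really only) obstacle is this bookkeeping around normalization — making sure the constants cancel so the race outputs the ratio of the $P_v$'s and not the ratio of some rescaled versions — together with the harmless handling of identically-zero $P_v$. Everything else is a direct citation of \Cref{lem:bernstein_factory} and \Cref{thm:bernoulli_race}, and the resulting $\mathcal{G}$ inherits "terminates a.s.\ on $S$" verbatim from the Bernoulli race theorem.
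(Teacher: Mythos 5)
Your proposal is correct and follows essentially the same route as the paper: apply \Cref{lem:bernstein_factory} to each $P_v$, normalize by a common constant $C = \max_v C_v$, and then invoke \Cref{thm:bernoulli_race}. The paper compresses the normalization into a single sentence (``there exists a $C_v \geq 1$ such that for any $C \geq C_v$, there exists a finite Bernoulli factory for $P_v(x)/C$''), leaving implicit the mechanism you spell out — precomposing the one-bit factory with a $c$-coin of bias $C_v/C$ — and likewise leaves the identically-zero $P_v$ case unaddressed; your version makes both points explicit, but the underlying argument is the same.
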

\begin{proof}
By Lemma \ref{lem:bernstein_factory}, for each $v \in V$, there exists a $C_{v} \geq 1$ such that for any $C \geq C_{v}$, there exists a finite Bernoulli factory for $P_{v}(x)/C$ over $S$. Choose $C = \max_{v}C_{v}$, and run a Bernoulli race over factories implementing $P_{v}(x) / C$. By Theorem \ref{thm:bernoulli_race}, such a race will output $v$ with probability $P_{v}(x) / \sum_{v'} P_{w}(v')$, as desired.
\end{proof}




\subsection{Combinatorial factories}


Finally, we return to the main focus of this paper. Recall that we wish to, given $x_i$-coins corresponding to the coordinates of a point $x$ within some polytope $\P \subseteq [0, 1]^n$, output a vertex $v$ of $\P$ so that $\E[v] = x$. 

\begin{definition}[Bernoulli factory for a polytope $\P$]\label{def:factory_for_P}
Let $\P \subseteq [0, 1]^n$ be a polytope contained in the unit hypercube, and let $\tilde{\P} = \P \cap (0, 1)^n$. Let $V$ denote the set of vertices of $\P$. A \emph{Bernoulli factory} for $\P$ is a factory $\mathcal{F}$ outputting in $V$ such that 
$$\E[\mathcal{F}(x)] = x, \forall x \in \tilde{\P}.$$

If the factory terminates almost surely for all $x \in \P$ (as opposed to just $x \in \tilde{\P})$, we say it is a \emph{strong Bernoulli factory} for $\P$. 
\end{definition}

Our main tool for constructing polytope factories will be to assign a Bernstein polynomial to each vertex and run a Bernoulli race over such polynomials (see Corollary \ref{cor:race_over_bernstein}). 

\begin{theorem}\label{thm:factory_construction}
In the setting of Definition \ref{def:factory_for_P}, if $P_v(x)$ is a non-zero Bernstein polynomial in $n$ variables for each $v \in V$ satisfying the following vector equality:
    \begin{equation}\label{eq:factory}
        \sum_{v\in V} P_{v}(x)(v - x) = 0, \forall x \in \P.
    \end{equation}
then running a Bernoulli race over the polynomials $P_{v}(x)$ (per Corollary \ref{cor:race_over_bernstein}) results in a Bernoulli factory for $\P$. Moreover, if 
\begin{equation}\label{eq:not_vanish}
\sum_{v \in V} P_v(x) > 0, \forall x \in \P
\end{equation}
it results in a strong Bernoulli factory for $\P$.
\end{theorem}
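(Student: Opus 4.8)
The plan is to apply Corollary~\ref{cor:race_over_bernstein} essentially verbatim, after checking the one hypothesis it needs, and then to extract $\E[\mathcal{F}(x)] = x$ from the identity~\eqref{eq:factory} by a one-line rearrangement.

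First I would verify the positivity hypothesis of Corollary~\ref{cor:race_over_bernstein} on the domain $S=\tilde\P$. Every Bernstein monomial $\prod_{i=1}^n x_i^{a_i}(1-x_i)^{b_i}$ is strictly positive at every point of the open cube $(0,1)^n$, so a non-zero Bernstein polynomial --- being a positive combination of at least one such monomial --- is strictly positive on $(0,1)^n$; in particular $\sum_{v\in V} P_v(x) > 0$ for all $x\in\tilde\P\subseteq(0,1)^n$. Corollary~\ref{cor:race_over_bernstein} then produces a Bernoulli factory $\mathcal{G}$ that terminates a.s. on $\tilde\P$ and outputs each vertex $v$ with probability $P_v(x)/\sum_{v'}P_{v'}(x)$.

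Next I would compute the expectation: for $x\in\tilde\P$,
\[
\E[\mathcal{G}(x)] \;=\; \sum_{v\in V} v\cdot\frac{P_v(x)}{\sum_{v'\in V}P_{v'}(x)} \;=\; \frac{\sum_{v\in V} P_v(x)\,v}{\sum_{v'\in V}P_{v'}(x)},
\]
and since~\eqref{eq:factory} rearranges to $\sum_{v\in V}P_v(x)\,v = \bigl(\sum_{v\in V}P_v(x)\bigr)\,x$, the right-hand side is exactly $x$. Hence $\mathcal{G}$ is a Bernoulli factory for $\P$. For the strong statement, note that~\eqref{eq:not_vanish} is precisely the positivity hypothesis of Corollary~\ref{cor:race_over_bernstein} with the larger domain $S=\P$; invoking the corollary there yields a factory terminating a.s. on all of $\P$, and the identity~\eqref{eq:factory} (assumed on all of $\P$) gives $\E[\mathcal{G}(x)]=x$ there as well. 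I do not expect a genuine obstacle: this is bookkeeping on top of the already-established Bernoulli race, and the only points requiring care are (i) using strict positivity of Bernstein monomials on the open cube to get the non-strong case for free without assuming~\eqref{eq:not_vanish}, and (ii) keeping the two domains $\tilde\P$ and $\P$ straight between the two claims.
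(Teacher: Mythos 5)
Your proposal matches the paper's proof: both check that non-zero Bernstein polynomials are strictly positive on $(0,1)^n$ to get a.s.\ termination on $\tilde\P$ (and use \eqref{eq:not_vanish} to extend to all of $\P$ for the strong case), then rearrange \eqref{eq:factory} into $\sum_v P_v(x)\,v = \bigl(\sum_v P_v(x)\bigr)x$ and divide to conclude $\E[\mathcal{F}(x)]=x$. Nothing to add.
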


\begin{proof}
Since non-zero Bernstein polynomials are strictly positive on $(0,1)^n$ this automatically guarantees a.s. termination on $\tilde{P}$. To check that  $\E[\mathcal{F}(x)] = x$, it is sufficient to re-arrange \Cref{eq:factory} as follows: 
$\sum_{v \in V}P_{v}(x)v = \sum_{v \in V}P_{v}(x)x.$ Dividing by $\sum_{v \in V}P_{v}(x)$ we obtain exactly $\E[\mathcal{F}(x)] = x$.
\end{proof}
\section{A Factory for Matching}\label{sec:matching_factory}
In this section, we construct a Bernoulli factory for the Birkhoff-von Neumann perfect matching polytope using a race over Bernstein polynomials, as described in \Cref{cor:race_over_bernstein}. In later sections we will see how to systematically construct such factories for general polytopes; for now we will simply demonstrate the factory through its corresponding polynomials and prove that it works.

Throughout this section, let $\mathcal{B}_{n} \subseteq [0, 1]^{n\times n}$ denote the $n^{\textrm{th}}$ Birkhoff-von Neumann polytope. This polytope contains all doubly stochastic $n$-by-$n$ matrices. By the Birkhoff-von Neumann theorem,  $\mathcal{B}_{n}$ has $n!$ vertices, each corresponding to one of the $n$-by-$n$ permutation matrices (e.g., see \citep{schrijver2003combinatorial}). Each permutation $\pi$ can in turn be thought of as a perfect matching in the complete bipartite graph $K_{n, n}$. Let $S_n$ be the set of permutations of $[n]$. We identify $\pi \in S_n$ with the $n$-by-$n$ permutation matrix $[I_{i,j}]_{n\times n}\in\{0,1\}^{n\times n}$, where $I_{i,j}=\mathbb{I}\{j=\pi(i)\}$. We will abuse notation and use $\pi$ to denote both a permutation and its corresponding matrix. 
\paragraph*{Overview}  Recall from \Cref{thm:factory_construction} that we can specify a factory for the polytope $\mathcal{B}_{n}$ by specifying a non-zero Bernstein polynomial $P_{\pi}(x)$ in $n^2$ variables for each vertex $\pi\in S_n$ of $\mathcal{B}_n$, satisfying \Cref{eq:factory} for $x\in\mathcal{B}_n$. To specify these polynomials, we identify their monomials with certain directed graphs. An \textit{arborescence rooted at $r$} is a directed graph $T$ where (i) it has no directed cycles and (ii) for any vertex $v$ of $T$, there is exactly one directed path from $v$ to $r$ (in other words, it is a directed spanning tree where all edges are oriented towards $r$). Let $\mathcal{T}_{r}(n)$ be the set of arborescences rooted at $r$ with $n$ labelled vertices $1,\ldots,n$. Each element of $\mathcal{T}_{r}(n)$ is a collection of directed edges. Fix an arbitrary root $r\in[n]$. Then consider the following polynomials:

\begin{equation}\label{eq:matching_def}
\forall \pi\in S_n:~~~P_{\pi}(x) = \prod_{i=1}^{n}x_{i, \pi(i)}\sum_{T \in \mathcal{T}_{r}(n)}\prod_{(u, v) \in T} x_{u, \pi(v)}.
\end{equation}

\begin{figure}[h]
\centering
\begin{tikzpicture}[scale=.7,decoration={
    markings,
    mark=at position 0.5 with {\arrow{>}}}
    ]

\node[circle,fill,inner sep=1.5pt] at (0,0) {};
\node[circle,fill,inner sep=1.5pt] at (2,0) {};
\node[circle,fill,inner sep=1.5pt] at (1,1.732) {};
\draw[postaction={decorate},line width=1pt] (0,0) -- (1,1.732);
\draw[postaction={decorate},line width=1pt] (2,0) -- (1,1.732);
\node at (-.3,0) {$2$};
\node at (2.3,0) {$3$};
\node at (1,2.1) {$1$};

\begin{scope}[xshift=100]
\node[circle,fill,inner sep=1.5pt] at (0,0) {};
\node[circle,fill,inner sep=1.5pt] at (2,0) {};
\node[circle,fill,inner sep=1.5pt] at (1,1.732) {};
\draw[postaction={decorate},line width=1pt] (0,0) -- (2,0);
\draw[postaction={decorate},line width=1pt] (2,0) -- (1,1.732);
\node at (-.3,0) {$2$};
\node at (2.3,0) {$3$};
\node at (1,2.1) {$1$};
\end{scope}

\begin{scope}[xshift=200]
\node[circle,fill,inner sep=1.5pt] at (0,0) {};
\node[circle,fill,inner sep=1.5pt] at (2,0) {};
\node[circle,fill,inner sep=1.5pt] at (1,1.732) {};
\draw[postaction={decorate},line width=1pt] (2,0) -- (0,0);
\draw[postaction={decorate},line width=1pt] (0,0) -- (1,1.732);
\node at (-.3,0) {$2$};
\node at (2.3,0) {$3$};
\node at (1,2.1) {$1$};
\end{scope}

\end{tikzpicture}
\caption{Arborescences in $\T_1(3)$ corresponding to the monomials in $P_\eps$.}
\label{fig:arb_3}
\end{figure}
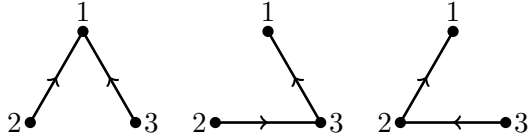

\begin{example}
\label{example:rooted-tree}
There are $3$ arborescences rooted at $1$ on $3$ vertices: $\{(2, 1), (3, 1)\}$, $\{(3, 1), (2, 3)\}$, and $\{(2, 1), (3, 2)\}$. For the identity permutation $\varepsilon$, we thus have (see \Cref{fig:arb_3}):

$$P_{\varepsilon}(x) = x_{1,1}x_{2,2}x_{3,3}\left(x_{2,1}x_{3,1} + x_{3,1}x_{2, 3}+ x_{2,1}x_{3,2}\right).$$

Replacing $\epsilon$ with a different permutation $\pi$ corresponds to applying $\pi$ to the second subscript of each variable. For example, for the permutation $\pi = (2, 3, 1)$, we have:

$$P_{\pi}(x) = x_{1,2}x_{2,3}x_{3,1}\left(x_{2,2}x_{3,2} + x_{3,2}x_{2, 1}+ x_{2,2}x_{3,3} \right).$$
 
\end{example}


The polynomials  $\{P_\pi(x)\}_{\pi\in S_n}$ defined above are clearly non-zero Bernstein polynomials, as required by \Cref{thm:factory_construction}. To show that $\mathcal{B}_n$ admits a Bernoulli factory, it only remains to show that the vector equality in \Cref{eq:factory} holds for all $x\in\mathcal{B}_n$. Note that in \Cref{eq:matching_def}, we choose an arbitrary vertex as the root of our arborescences in order to identify the polynomials $\{P_\pi(x)\}_{\pi\in S_n}$. Interestingly -- as we formally show in \Cref{prop:roots}-- the right hand side of \Cref{eq:matching_def} is the same for \emph{any} choice of root as long as we restrict our attention to points $x\in \mathcal{B}_n$. This property is indeed related to the fact that each polynomial $P_\pi(x)$ can be written as the product of a symmetric term and minor of a particular weighted directed graph Laplacian. More interestingly -- as we formally show in \Cref{prop:matching-factory} --  this property, together with a combinatorial argument relying on trees and permutations, are the keys in showing that \Cref{eq:factory} holds for points $x\in\mathcal{B}_n$. 
\begin{proposition}
\label{prop:roots}
Let $\mathcal{L}$ be the $n$-by-$n$ weighted directed Laplacian with (arc) weights $[x_{i,j}]_{n\times n}$, i.e., 
$$
\forall i,j\in[n]:~\mathcal{L}_{i,j} =
\left\{
	\begin{array}{ll}
		\sum_{k\neq i}x_{k,i}  & \mbox{if } i=j \\
		-x_{i,j} & \mbox{if } i\neq j
	\end{array}
\right.
$$
Moreover, for any $r\in[n]$, let $\mathcal{L}^{(r)}$ denote the $(n-1)$-by-$(n-1)$ submatrix of $\mathcal{L}$ obtained by removing the row and the column corresponding to $r$. Then for any $x \in \mathcal{B}_{n}$, and any $r,r' \in [n]$:
    \begin{equation} \label{eq:switchroot}
    \sum_{T \in \mathcal{T}_{r}(n)}\prod_{(u, v) \in T} x_{u, v} =\det[\mathcal{L}^{(r)}]= \det[\mathcal{L}^{(r')}]=\sum_{T \in \mathcal{T}_{r'}(n)}\prod_{(u, v) \in T} x_{u, v}.
\end{equation}
\end{proposition}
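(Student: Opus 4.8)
The plan is to derive \Cref{eq:switchroot} from two classical ingredients, each invoked for $x\in\mathcal{B}_n$ only. First, for a \emph{single} fixed root $r$ I will prove $\sum_{T\in\mathcal{T}_r(n)}\prod_{(u,v)\in T}x_{u,v}=\det[\mathcal{L}^{(r)}]$ using the directed (weighted) Matrix-Tree Theorem; this is the step that actually uses doubly-stochasticity. Second, I will show that on $\mathcal{B}_n$ the reduced determinant $\det[\mathcal{L}^{(r)}]$ does not depend on which row/column $r$ is deleted, using a cofactor identity valid for any matrix all of whose rows and columns sum to zero. Chaining these three facts — the first identity at $r$, then root-independence, then the first identity at $r'$ — yields the proposition.

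For the first ingredient, the key observation is that on $\mathcal{B}_n$ the matrix $\mathcal{L}$ is literally the out-degree Laplacian of the weighted digraph on $[n]$ with arc weights $x_{u,v}$ (no self-loops): since $\sum_k x_{k,i}=1=\sum_k x_{i,k}$, the diagonal entry $\mathcal{L}_{i,i}=\sum_{k\neq i}x_{k,i}$ equals $1-x_{i,i}=\sum_{k\neq i}x_{i,k}$, the weighted out-degree of $i$, while the $(i,j)$ entry is $-x_{i,j}$. Hence $\mathcal{L}=\Delta^{\mathrm{out}}-A$ with $A=[x_{u,v}]$ (zero diagonal) and $\Delta^{\mathrm{out}}=\mathrm{diag}\big(\sum_{k\neq i}x_{i,k}\big)$, which is exactly the matrix to which Tutte's directed Matrix-Tree Theorem \citep{tutte1948dissection} applies: deleting row and column $r$ and taking the determinant enumerates precisely the spanning subgraphs in which every non-root vertex has out-degree one and every vertex has a directed path to $r$ (equivalently, over $T\in\mathcal{T}_r(n)$), weighted by $\prod_{(u,v)\in T}x_{u,v}$. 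This simultaneously gives the first and last equalities of \Cref{eq:switchroot}. (Away from $\mathcal{B}_n$ this can fail, since then $\mathcal{L}$ has in-degrees rather than out-degrees on the diagonal, so it is not this Laplacian.)

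For the second ingredient, note first that $\mathcal{L}$ has zero column sums for \emph{every} $x$ (column $j$ contributes $\sum_{k\neq j}x_{k,j}$ on the diagonal and $-\sum_{i\neq j}x_{i,j}$ off it), and that on $\mathcal{B}_n$ it also has zero row sums: row $i$ sums to $\sum_{k\neq i}x_{k,i}-\sum_{j\neq i}x_{i,j}=(1-x_{i,i})-(1-x_{i,i})=0$. So $\mathcal{L}\one=\one^\top\mathcal{L}=0$. I then use the standard fact that any $M\in\R^{n\times n}$ with $M\one=\one^\top M=0$ has $\det[M^{(r)}]$ the same for all $r$: if $\mathrm{rank}(M)\le n-2$ every $(n-1)\times(n-1)$ minor vanishes; otherwise $\ker M=\ker M^\top=\mathrm{span}(\one)$, and from $M\,\mathrm{adj}(M)=\mathrm{adj}(M)\,M=(\det M)I=0$ every column and every row of $\mathrm{adj}(M)$ is a multiple of $\one$, so $\mathrm{adj}(M)=\alpha\,\one\one^\top$; reading the diagonal gives $\det[M^{(r)}]=\mathrm{adj}(M)_{rr}=\alpha$ for all $r$. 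Applying this to $\mathcal{L}$ gives the middle equality $\det[\mathcal{L}^{(r)}]=\det[\mathcal{L}^{(r')}]$.

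I do not expect a genuinely hard step: both the directed Matrix-Tree Theorem and the zero-row-and-column-sum cofactor lemma are classical. The only real content is the bookkeeping remark that on $\mathcal{B}_n$ (and essentially only there) the ``in-degree'' Laplacian written in the statement coincides with the ``out-degree'' Laplacian — which is simultaneously what licenses the Matrix-Tree identity (the outer equalities) and what makes the row sums vanish (the inner equality). The two things to be careful with are the orientation convention (the out-degree Laplacian enumerates arborescences oriented \emph{towards} the root, matching the definition of $\mathcal{T}_r(n)$) and the degenerate case $\mathrm{rank}(\mathcal{L})\le n-2$ in the cofactor lemma, where one just observes that all the relevant minors are zero.
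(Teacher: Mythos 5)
Your proof is correct and follows the same two-step route as the paper: first apply Tutte's directed Matrix-Tree Theorem to identify $\det[\mathcal{L}^{(r)}]$ with the arborescence sum, then invoke a cofactor-equality lemma for matrices with vanishing row and column sums. Where you differ is in two places, both to your credit. First, you explicitly flag that the matrix $\mathcal{L}$ as written in the statement has \emph{in}-degrees on the diagonal, whereas the Matrix-Tree identity for arborescences oriented towards the root needs the \emph{out}-degree Laplacian, and you observe that on $\mathcal{B}_n$ both diagonals equal $1-x_{i,i}$ so the two matrices coincide. The paper's invocation of Theorem~\ref{thm:tutte} is silent about this, and indeed with the statement's $\mathcal{L}$ and general $x$ the outer equalities of \Cref{eq:switchroot} already fail (e.g., for $n=2$, $\det\mathcal{L}^{(1)}=x_{1,2}$ while the arborescence sum at root $1$ is $x_{2,1}$), so the doubly-stochastic hypothesis really is being used even in those steps — your remark makes the proof more self-contained. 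Second, for the zero-line-sum cofactor equality you use the adjugate: $M\,\mathrm{adj}(M)=\mathrm{adj}(M)\,M=(\det M)I=0$ forces $\mathrm{adj}(M)=\alpha\one\one^\top$ (with the rank $\le n-2$ case handled separately), so all diagonal cofactors equal $\alpha$. This is a clean alternative to the paper's Lemma~\ref{thm:zls}, which instead computes $\det[A+J]=n^2\,\cof_{i,j}[A]$ by row and column operations and in fact yields the slightly stronger conclusion that \emph{all} signed cofactors agree; your version proves only the diagonal case, which is all that Proposition~\ref{prop:roots} needs.
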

In order to prove the above proposition, we rely on two results from algebraic combinatorics. The first result is Tutte's matrix-tree theorem~\citep{tutte1948dissection}, which is essentially an adaptation of the standard Kirchoff's matrix-tree theorem~\citep{kirchhoff1847ueber} to weighted directed graphs. 

\begin{theorem}[\emph{Tutte's Matrix-Tree Theorem~\citep{tutte1948dissection}}]\label{thm:tutte}
Let $\mathcal{L}$ be the $n$-by-$n$ weighted directed Laplacian matrix with weights $[x_{i,j}]$. Then for any $r\in[n]$, 
$$
\det[\mathcal{L}^{(r)}]=\sum_{T \in \mathcal{T}_{r}(n)}\prod_{(u, v) \in T} x_{u, v}.
$$
\end{theorem}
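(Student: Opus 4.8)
The final statement as worded is Tutte's Matrix-Tree Theorem itself (Theorem~\ref{thm:tutte}): for the weighted directed Laplacian $\mathcal{L}$ with arc weights $[x_{i,j}]$ and any root $r\in[n]$,
\[
\det[\mathcal{L}^{(r)}]=\sum_{T\in\mathcal{T}_r(n)}\prod_{(u,v)\in T}x_{u,v}.
\]
I will prove it directly, without assuming any earlier matrix-tree statement (the Proposition's two-sided identity then follows immediately from this applied at $r$ and at $r'$, once one also invokes root-independence for $x\in\mathcal{B}_n$; but the theorem as stated is a pure algebraic identity that holds for arbitrary weights, so that is what I prove).

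\textbf{Approach.} The plan is to expand $\det[\mathcal{L}^{(r)}]$ by the permutation expansion of the determinant, reorganize the resulting signed sum of monomials, and match surviving terms with arborescences via the (directed) BEST/functional-graph argument. Concretely: fix $r$ and relabel so that $r=n$; write $\mathcal{L}^{(n)}=D-A$ where $A=[x_{i,j}]_{i,j<n}$ (with the convention $x_{i,i}=0$) is the weighted adjacency restricted to $[n-1]$ and $D$ is the diagonal matrix with $D_{ii}=\sum_{k\neq i}x_{k,i}$ — note this is the \emph{column}-sum into $i$ over all $k\in[n]$, $k\ne i$, including $k=n$. Each diagonal entry $D_{ii}=\sum_{k\ne i}x_{k,i}=\sum_{k\ne i,\,k<n}x_{k,i}+x_{n,i}$. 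The key bookkeeping device: for each $i\in[n-1]$ introduce a formal ``choice'' — the $i$-th row of $\mathcal{L}^{(n)}$ contributes, in the determinant's multilinear expansion, either the off-diagonal entry $-x_{i,j}$ in some column $j\ne i$, or one of the $n-1$ summands of the diagonal entry $D_{ii}$, which I think of as picking an edge $(i,k)$ with weight $x_{k,i}$ for some $k\ne i$ (where $k=n=r$ is allowed). So every term in the fully expanded determinant corresponds to choosing, for each non-root vertex $i$, exactly one out-edge: either $(i,j)$ for $j<n$ (from the off-diagonal, carrying sign $-1$ and being part of a permutation) or $(i,k)$ (from the diagonal, carrying weight $x_{k,i}$ — here the edge is ``reversed'' in how the index sits). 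The standard trick is to track the \emph{functional graph}: each non-root vertex has out-degree exactly one, so the resulting digraph on $[n]$ is a union of ``rho-shaped'' components — trees hanging off cycles, plus possibly a tree rooted at $n$.

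\textbf{Key steps, in order.} (1) Write $\det[\mathcal{L}^{(n)}]=\sum_{\sigma}\operatorname{sgn}(\sigma)\prod_{i<n}\mathcal{L}^{(n)}_{i,\sigma(i)}$ over permutations $\sigma$ of $[n-1]$, and then expand each diagonal factor $\mathcal{L}^{(n)}_{i,i}=\sum_{k\ne i}x_{k,i}$ into its summands; a diagonal factor occurs exactly at the fixed points of $\sigma$. This yields $\det[\mathcal{L}^{(n)}]=\sum \operatorname{sgn}(\sigma)\,(-1)^{\#\{i:\sigma(i)\ne i\}}\prod_{\text{fixed }i}x_{k_i,i}\prod_{\text{moved }i}x_{i,\sigma(i)}$, the inner sum over all choices $k_i\ne i$. (2) Reinterpret: a fixed point $i$ of $\sigma$ together with the choice $k_i$ contributes an edge $i\to k_i$; a non-trivial cycle $(i_1 i_2\cdots i_\ell)$ of $\sigma$ contributes edges $i_1\to i_2\to\cdots\to i_\ell\to i_1$ with sign $\operatorname{sgn}$ of that cycle $=(-1)^{\ell-1}$ and the extra factor $(-1)^\ell$ from the ``moved'' count, giving total sign $-1$ per non-trivial cycle. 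So each term corresponds to a functional digraph $G$ on vertex set $[n]$ where every vertex in $[n-1]$ has out-degree $1$ and $n$ has out-degree $0$, the edge weights multiply to $\prod_{(u,v)\in G}x_{u,v}$, and the sign is $(-1)^{c(G)}$ where $c(G)$ is the number of cycles of $G$ formed among vertices of $[n-1]$. Wait — I need to be careful that the weight on a fixed-point edge $i\to k_i$ is $x_{k_i,i}$, i.e.\ the ``reverse'' subscript order, whereas on a cycle edge $i\to\sigma(i)$ it is $x_{i,\sigma(i)}$, the ``forward'' order; these disagree, so before step (2) I should instead build the digraph with edges \emph{oriented so the weight is always} $x_{\text{tail},\text{head}}$: that means a fixed-point choice $k_i$ gives edge $k_i\to i$ and a cycle factor $x_{i,\sigma(i)}$ gives edge $i\to\sigma(i)$. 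Re-examining: to land cleanly on arborescences rooted at $n$ (edges oriented \emph{toward} the root), the right convention is that every non-root $i$ emits one out-edge and weights are $x_{u,v}$ for edge $u\to v$; I will set up the diagonal expansion so that $\mathcal{L}^{(n)}_{i,i}=\sum_{k\ne i}x_{i,k}$ — but that contradicts the Laplacian's definition $\mathcal{L}_{i,i}=\sum_{k\ne i}x_{k,i}$. The resolution (this is the genuinely delicate point, see below) is that Tutte's theorem with \emph{this} Laplacian convention counts arborescences with the weight $\prod x_{u,v}$ where the product is over tree edges $(u,v)$ — and one must check the subscript order matches; I will do this by transposing. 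Since $\det[\mathcal{L}^{(r)}]=\det[(\mathcal{L}^T)^{(r)}]$ and $\mathcal{L}^T$ is the in-degree Laplacian of the reverse-weighted graph, I can reduce to whichever orientation makes the functional-graph sign cancellation transparent. (3) Sign-cancellation: group the functional digraphs by deleting all cycles; on the set of digraphs sharing a fixed ``non-cyclic part,'' terms with at least one cycle cancel in pairs (toggle the smallest cycle on/off by comparing to the sub-case where those vertices instead point along a path to the root) — formally, a sign-reversing involution pairs each $G$ having $\ge 1$ cycle with another such $G'$ of opposite $c(G)$ parity and equal weight; the only survivors are the \emph{acyclic} functional digraphs, which are exactly the arborescences rooted at $n$, each with sign $+1$ and weight $\prod_{(u,v)\in T}x_{u,v}$. (4) Conclude $\det[\mathcal{L}^{(n)}]=\sum_{T\in\mathcal{T}_n(n)}\prod_{(u,v)\in T}x_{u,v}$, and relabel back to general $r$.

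\textbf{Main obstacle.} The real work — and the step most prone to sign errors — is matching the subscript conventions: the Laplacian is defined with $\mathcal{L}_{i,j}=-x_{i,j}$ off-diagonal but $\mathcal{L}_{i,i}=\sum_{k\ne i}x_{k,i}$ on-diagonal (an in-weighted diagonal against out-weighted off-diagonal), and one must verify this is precisely the combination for which the functional-digraph sign-reversing involution kills every cyclic term and leaves $\mathcal{T}_r(n)$ with the product weight $\prod_{(u,v)\in T}x_{u,v}$ oriented toward $r$, rather than, say, $\mathcal{T}_r(n)$ with reversed edges or the transpose weights. I expect to settle this cleanly by first proving the statement for the ``all-ones-style'' bookkeeping via transposition to the out-degree Laplacian $\tilde{\mathcal{L}}$ with $\tilde{\mathcal{L}}_{i,i}=\sum_{k\ne i}x_{i,k}$, $\tilde{\mathcal{L}}_{i,j}=-x_{i,j}$ — for which the expansion in step (1)–(2) directly yields out-edges with matching weights — and then transposing back: since $\det[\mathcal{L}^{(r)}]=\det[(\mathcal{L}^{(r)})^T]$ and $(\mathcal{L}^{(r)})^T$ is the corresponding submatrix of the out-degree Laplacian of the graph with weights $x_{j,i}$ in place of $x_{i,j}$, Tutte's identity for $\tilde{\mathcal{L}}$ applied to that graph produces $\sum_T \prod_{(u,v)\in T}x_{v,u}$ over arborescences toward $r$, which is exactly $\sum_{T\in\mathcal{T}_r(n)}\prod_{(u,v)\in T}x_{u,v}$ after reversing all edges of each arborescence (a bijection $\mathcal{T}_r(n)\to\mathcal{T}_r(n)$ on the level of edge-reversal? no — reversing an arborescence toward $r$ gives one \emph{away} from $r$; so instead I keep the orientation and only swap the weight labels, which is what the stated identity wants). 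I will carry out this convention-tracking carefully in the write-up; everything else is the classical involution.
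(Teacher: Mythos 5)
First, the paper does not prove this theorem; it cites Tutte's 1948 result and applies it. So the only question is whether your attempt is itself a valid proof. Your plan --- the permutation expansion of $\det[\mathcal{L}^{(r)}]$, reinterpretation of each term as a functional digraph in which every non-root vertex has out-degree one, and a sign-reversing involution toggling a canonical cycle between a $\sigma$-cycle and an all-diagonal cycle --- is the standard proof of the directed matrix-tree theorem and is sound in outline. But the write-up does not close: the final maneuver (``keep the orientation and only swap the weight labels'') is not a legitimate step, since replacing $x_{v,u}$ by $x_{u,v}$ changes the statement rather than deriving it, and you essentially acknowledge this yourself.

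The reason you got stuck on conventions is that the identity, read literally with the paper's displayed definition $\mathcal{L}_{i,i}=\sum_{k\neq i}x_{k,i}$, is false. For $n=2$, $r=1$ one gets $\det[\mathcal{L}^{(1)}]=\mathcal{L}_{2,2}=x_{1,2}$, while the unique $T\in\mathcal{T}_1(2)$ is $\{2\to 1\}$ with weight $x_{2,1}$. The diagonal should be the out-weight $\sum_{k\neq i}x_{i,k}$; that is the only reading under which the paper's claim in the proof of Proposition \ref{prop:roots} that ``every row in $\mathcal{L}$ sums to zero by definition'' holds, and under which $\det[\mathcal{L}^{(1)}]$ for $n=3$ reproduces the monomials $x_{2,1}x_{3,1}+x_{3,1}x_{2,3}+x_{2,1}x_{3,2}$ listed in Example \ref{example:rooted-tree}. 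With the diagonal corrected, your expansion becomes uniform and the rest of your argument works as intended: a fixed point $i$ of $\sigma$ contributes a summand $x_{i,k}$ of $\mathcal{L}_{i,i}$, interpreted as the out-edge $i\to k$; a moved point contributes $-x_{i,\sigma(i)}$, the out-edge $i\to\sigma(i)$; every edge carries weight $x_{\mathrm{tail},\mathrm{head}}$ and no transposition or relabeling is needed. One also gets for free that every directed cycle of the functional digraph is either wholly a $\sigma$-cycle or wholly made of diagonal choices (if $\sigma(v_{j-1})=v_j$ and $v_j$ were a fixed point, injectivity of $\sigma$ would force $v_{j-1}=v_j$), so the involution is well defined, each toggle changes the sign by $-1$ and preserves the weight, and the surviving acyclic terms are exactly the arborescences in $\mathcal{T}_r(n)$ with sign $+1$.
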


The second result implies that all the principal minors of a \emph{zero-line-sum (ZLS)} matrix, i.e., a matrix whose rows and columns all sum to 0, are equal. Its proof can be found in \Cref{apx:matching} of the online supplement.
\begin{lemma}[\emph{ZLS matrices have equal cofactors}]\label{thm:zls}
Let $A$ be an $n$-by-$n$ matrix satisfying $\sum_{j=1}^{n} A_{ij} = 0$ for each $i \in [n]$ and satisfying $\sum_{i=1}^{n} A_{ij} = 0$ for each $j \in [n]$. For any $i,j\in[n]$, let $A^{(i,j)}$ denote the $(n-1)$-by-$(n-1)$ submatrix of $A$ obtained by removing the row and the column corresponding to $i$ and $j$ respectively. Then, for any $i,j,i',j' \in [n]$, 
$$(-1)^{(i+j)}\det[A^{(i,j)}] = (-1)^{(i'+j')}\det[A^{(i',j')}] ~~~~(\textrm{or equivalently}~~\cof_{i,j}[A]=\cof_{i',j'}[A]).$$
\end{lemma}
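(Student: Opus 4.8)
The plan is to prove the equivalent form of the statement, namely that \emph{every} cofactor $\cof_{i,j}[A]$ equals one and the same constant $c$, and to extract this from the structure of the adjugate matrix $\mathrm{adj}(A)$ — recall that $\mathrm{adj}(A)$ is the transpose of the cofactor matrix, so $\mathrm{adj}(A)_{ji} = \cof_{i,j}[A] = (-1)^{i+j}\det[A^{(i,j)}]$. Thus the whole lemma is equivalent to showing that all entries of $\mathrm{adj}(A)$ are equal.

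First I would record that $A$ is singular: the row-sum hypothesis says $A\one = 0$ and the column-sum hypothesis says $\one^{T} A = 0$, and since $\one \neq 0$ this forces $\det A = 0$. Feeding this into the fundamental identity $A\cdot\mathrm{adj}(A) = \mathrm{adj}(A)\cdot A = (\det A)\, I_n$ yields $A\cdot\mathrm{adj}(A) = 0$ and $\mathrm{adj}(A)\cdot A = 0$. The first equation says every column of $\mathrm{adj}(A)$ lies in $\ker A$; transposing the second says every column of $\mathrm{adj}(A)^{T}$, i.e.\ every row of $\mathrm{adj}(A)$, lies in $\ker A^{T}$.

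Next I would pin down these two kernels. If $\rank A \leq n-2$, then every $(n-1)\times(n-1)$ minor of $A$ vanishes, so $\mathrm{adj}(A) = 0$ and all cofactors are $0$; the claim is trivial. If instead $\rank A = n-1$, then $\ker A$ is one-dimensional, and since $A\one = 0$ we get $\ker A = \mathrm{span}(\one)$; likewise $\ker A^{T} = \mathrm{span}(\one)$ using $A^{T}\one = 0$. Combining with the previous paragraph, every column of $\mathrm{adj}(A)$ is a scalar multiple of $\one$, so $\mathrm{adj}(A) = \one u^{T}$ for some $u$, and every row is a scalar multiple of $\one^{T}$, so $\mathrm{adj}(A) = v\one^{T}$ for some $v$; matching $\one u^{T} = v\one^{T}$ forces $u$ and $v$ to be the same constant vector, hence $\mathrm{adj}(A) = c\,\one\one^{T}$. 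In either case all entries of $\mathrm{adj}(A)$ coincide, i.e.\ $\cof_{i,j}[A] = c$ for all $i,j$, which is exactly the assertion.

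The only real subtlety is this last step $\mathrm{adj}(A) = c\,\one\one^{T}$ together with keeping the index/transpose bookkeeping straight ($\mathrm{adj}(A)_{ji} = \cof_{i,j}[A]$, and $\ker A$, $\ker A^{T}$ being exactly $\mathrm{span}(\one)$ rather than merely containing it); everything else is routine. As a cross-check / alternative route that avoids the adjugate: deleting row $i$ from $A$ leaves an $(n-1)\times n$ matrix whose vector of signed maximal minors lies in its kernel (the standard Laplace-expansion identity, since stacking any deleted row back on creates a repeated row), and that kernel contains $\one$ because $A\one=0$; when the rank is maximal this shows $(-1)^{i+j}\det[A^{(i,j)}]$ is independent of $j$, i.e.\ $\cof_{i,j}[A]$ is independent of $j$, and applying the same argument to $A^{T}$ — which is also zero-line-sum, with $\cof_{j,i}[A^{T}] = \cof_{i,j}[A]$ — gives independence of $i$, so all cofactors agree.
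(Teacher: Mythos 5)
Your proof is correct and takes a genuinely different route from the paper's. The paper proves the lemma by a direct determinant computation: it evaluates $\det[A+J]$ (where $J$ is the all-ones matrix) via row and column operations — adding all rows to row $i$, all columns to column $j$, factoring out $n$ and clearing — to show that $\det[A+J] = n^2 \cof_{i,j}[A]$ for \emph{every} choice of $(i,j)$, from which equality of cofactors is immediate. Your argument instead goes through the adjugate: from $A\one = 0$ and $\one^T A = 0$ you get $\det A = 0$, hence $A\cdot\mathrm{adj}(A) = \mathrm{adj}(A)\cdot A = 0$; in the interesting case $\rank A = n-1$, both $\ker A$ and $\ker A^T$ are exactly $\mathrm{span}(\one)$, forcing $\mathrm{adj}(A) = c\,\one\one^T$. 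The rank-split is handled cleanly (the $\rank \le n-2$ case gives $\mathrm{adj}(A)=0$ outright), and the one place that needs care — that $\rank A = n-1$ pins the kernels to be \emph{exactly} $\mathrm{span}(\one)$ rather than just containing it — is stated explicitly. Your approach is more conceptual and makes the role of the rank visible; the paper's is more elementary and self-contained (a single identity $\det[A+J] = n^2\cof_{i,j}[A]$ with no case analysis), which is arguably preferable in an appendix. Both are valid; your cross-check via the Laplace-expansion orthogonality of the cofactor row to the remaining rows of $A$ is also sound (and relies on the same rank observation to conclude the kernel of the $(n-1)\times n$ submatrix is one-dimensional).
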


\begin{proof}[Proof of Proposition~\ref{prop:roots}]
First of all, \Cref{thm:tutte} implies that for any choice of $r\in[n]$,  $\det[\mathcal{L}^{(r)}]=\sum_{T \in \mathcal{T}_{r}(n)}\prod_{(u, v) \in T} x_{u, v}$. Next, note that every row in $\mathcal{L}$ sums to zero by definition. Moreover, every column in $\mathcal{L}$ also sums to zero when $x\in\mathcal{B}_n$, simply because $\mathcal{L}_{i,i}=1-x_{i,i}$ for $x\in\mathcal{B}_n$. Hence $\mathcal{L}$ is ZLS for $x\in\mathcal{B}_n$ and $\det[\mathcal{L}^{(r)}]=\det[\mathcal{L}^{(r')}]$ for any $r,r'\in[n]$ (due to \Cref{thm:zls}), as desired.
\end{proof}
Given \Cref{prop:roots}, we are now ready to prove the main result of this section, that is, our polynomials $\{P_\pi\}_{\pi\in S_n}$ satisfy \Cref{eq:factory} for all points in the perfect matching polytope.
\begin{theorem}
\label{prop:matching-factory}
 Consider the polynomials $\{P_\pi\}_{\pi\in S_n}$ as in \Cref{eq:matching_def}. Then for any $x \in \mathcal{B}_n$, 
\begin{equation}\label{eq:matching_constraint}
\sum_{\pi\in S_n}(\pi - x)P_{\pi}(x) = 0.
\end{equation}
\end{theorem}
\begin{remark}
Every step of the Bernoulli race procedure described in \Cref{thm:bernoulli_race}  -- over the Bernstein polynomials in \Cref{eq:matching_def}-- can be implemented efficiently by sampling uniform random permutations and uniform random spanning trees in the complete graph $K_n$. See \Cref{alg:bernoulli_matching}. 
\end{remark}

\subsection{Proof of Theorem~\ref{prop:matching-factory}}

We provide two different proofs of \Cref{prop:matching-factory}: a combinatorial proof (our original approach) and an algebraic proof (provided by Darij Grinberg). We believe both offer complementary insights and we therefore include both. The combinatorial proof is below and Darij's algebraic proof can be found in \Cref{appendix:algebraic_proof} of the online supplement.

Recall that \Cref{eq:matching_constraint} is an $n$-by-$n$ matrix equality. Fix any $(r,c) \in [n]\times[n]$; we will show that this equality holds for its $(r,c)^{\textrm{th}}$ entry. In other words, using the fact that  $\pi_{r,c}=\mathbb{I}\{\pi(r)=c\}$ in the permutation matrix $\pi$, we will show that
\begin{equation}\label{eq:matching2}
\sum_{\pi | \pi(r) = c} P_{\pi}(x) = \sum_{\pi}x_{r, c}P_{\pi}(x).
\end{equation}
Since $x \in \mathcal{B}_n$, the sum of each row and column of $x$ is equal to $1$. In particular, $\sum_{i=1}^{n}x_{r, i} = 1$. Therefore, by multiplying the LHS of \Cref{eq:matching2} and $\sum_{i=1}^n x_{r,i}$, it suffices to show that

\begin{equation}\label{eq:matching4}
\sum_{i=1}^{n}x_{r, i}\sum_{\pi | \pi(r) = c} P_{\pi}(x) = x_{r, c}\sum_{\pi}P_{\pi}(x).
\end{equation}
Recall from Proposition~\ref{prop:roots} that polynomials $\{P_\pi\}_{\pi\in S_n}$ are invariant to the choice of the root of arborescences used in \Cref{eq:matching_def}. Suppose $r$ is used as the root for all $\pi$ when defining $P_{\pi}(x)$. We will show that \Cref{eq:matching4} is true as a polynomial identity -- i.e. it is true not just for $x$ in $\mathcal{B}_n$, but for all $x \in \R^{n^2}$. To do so, it is enough to show that for any fixed $i \in [n]$,

\begin{equation}\label{eq:submatching}
x_{r, i}\sum_{\pi | \pi(r) = c} P_{\pi}(x) = x_{r, c}\sum_{\pi | \pi(r) = i}P_{\pi}(x).
\end{equation}
Summing \Cref{eq:submatching} over all $i\in[n]$, we obtain \Cref{eq:matching4}, as desired. Now, recall the definition of $P_\pi(x)$ when $r$ is used as the root:
\begin{equation}
\label{eq:rootatr}
P_{\pi}(x) = \prod_{i=1}^{n}x_{i, \pi(i)}\sum_{T \in \mathcal{T}_{r}(n)}\prod_{(u, v) \in T} x_{u, \pi(v)}.
\end{equation}
Note that in an arborescence rooted at $r$, the vertex $r$ has no outgoing edges. Therefore, the only variable of the form $x_{r, *}$ occurring in $P_{\pi}(x)$ is the variable $x_{r, \pi(r)}$, which divides each term of $P_{\pi}(x)$ exactly once. Define $Q_{\pi}(x) \triangleq P_{\pi}(x)/x_{r, \pi(r)}$. To prove \Cref{eq:submatching}, it then suffices to show
\begin{equation}\label{eq:qform}
\sum_{\pi | \pi(r) = c} Q_{\pi}(x) = \sum_{\pi | \pi(r) = i}Q_{\pi}(x).
\end{equation}

In order to prove \Cref{eq:qform}, first note that the LHS of this equation, i.e., the sum of $Q_\pi(x)$ over permutations $\pi\in S_n$ with $\pi(r)=c$, can be written as: 
\begin{equation}\label{eq:qalternate}
\sum_{\pi|\pi(r)=c}\prod_{i\in[n],i\neq r}x_{i, \pi(i)}\sum_{T\in\mathcal{T}_r(n)}\prod_{(u, v) \in T} x_{u, \pi(v)}=\sum_{\pi|\pi(r)=c}\sum_{T\in\mathcal{T}_r(n)}\prod_{(u, v) \in T} x_{u,\pi(u)}x_{u, \pi(v)},
\end{equation}
simply because each $u\in [n]\setminus\{r\}$ has exactly one outgoing edge in every arborescence $T\in \mathcal{T}_r(n)$. We will interpret the RHS of \Cref{eq:qalternate} as enumerating certain undirected bipartite graphs. Given a permutation $\pi$ with $\pi(r)=c$ and an $r$-rooted arborescence $T$, consider an undirected bipartite graph $G(\pi,T)$ on $2n$ vertices, with $n$ vertices on the left (labelled $1_L$ through $n_L$) and $n$ vertices on the right (labelled $1_R$ through $n_R$). The edges are constructed as follows (see \Cref{fig:bijection}):
\begin{itemize}
    \item for each $u \in [n] \setminus \{r\}$, add the edge $(u_{L}, \pi(u)_{R})$.
    \item for each directed edge $u \rightarrow v$ in $T$, add the edge $(u_{L}, \pi(v)_{R})$. 
\end{itemize}
Now, it is straightforward to verify that the summation in \Cref{eq:qalternate} can be written as
$$\sum_{\pi | \pi(r) = c} Q_{\pi}(x) = \sum_{\pi | \pi(r) = c}\sum_{T \in \mathcal{T}_{r}(n)}\prod_{(u_{L}, v_{R}) \in G(\pi, T)} x_{u, v}.$$
Next, we define a collection of bipartite graphs $\mathcal{G}_r$ on vertices $\{1_L,\ldots,n_L\}\cup \{1_R,\ldots,n_R\}$ for a fixed root $r$ -- which we call \emph{$r$-bi-trees} (see the definition below). We then claim there is a bijection between $(\pi,T)$ pairs in the above summation (with $\pi(r)=c$) and bipartite graphs $G'\in \mathcal{G}_r$  where $G'=G(\pi,T)$. If the claim holds, we have
$$\sum_{\pi | \pi(r) = c} Q_{\pi}(x) = \sum_{G'\in\mathcal{G}_r}\prod_{(u_{L}, v_{R}) \in G'} x_{u, v},$$
and since the RHS does not depend on the identity of $c$, it immediately implies \Cref{eq:qform}. 
\begin{definition}[$r$-bi-tree]
\label{def:rbitree}
For any root $r\in[n]$, an undirected bipartite graph $G$ on $2n$ vertices $\{1_L,\ldots,n_L\}$ and  $\{1_R,\ldots,n_R\}$ is an \textit{$r$-bi-tree} if it satisfies the following conditions:

\begin{enumerate}[label=(\roman*)]
    \item The vertex $r_{L}$ is an isolated vertex. 
    \item The remainder of the vertices (aside from $r_{L}$) belong to a single connected component.
    \item Each vertex $u_{L}$ (where $u \neq r$) on the left side has degree exactly equal to $2$. 
\end{enumerate}
\end{definition}






\begin{figure}[htb]
\centering
         \centering
        \includegraphics[trim={5.1cm 6cm 10.5cm 13cm},clip,width=1\textwidth]{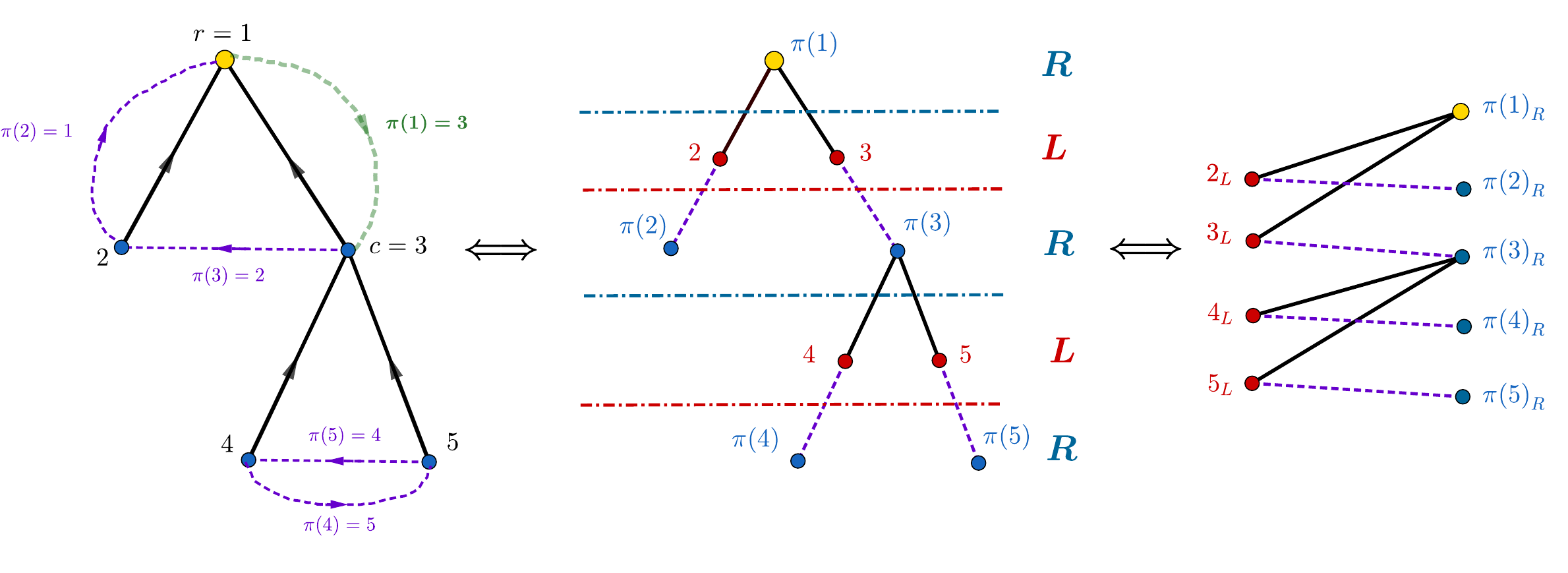}
\caption{ \label{fig:bijection}An example of the bijection between pairs $(\pi,T)\in S_n\times \mathcal{T}_r(n)$ with $\pi(r)=c$ \emph{(left hand side)} and $r$-bi-trees $G'\in \mathcal{G}_r$ \emph{(right hand side)}: $n=5$, root $r=1$, $\pi=(3,1,2,5,4)$, $c=\pi(r)=3$, and $T=\{2\rightarrow 1, 3\rightarrow 1, 5\rightarrow 3,4 \rightarrow 3\}$; solid black edges belong to $T$; dashed purple edges are matching edges corresponding to $\pi$ excluding the green dashed edge $\left(r,\pi(r)\right)$, i.e., $\{(u,\pi(u))\}_{u\in\{2,3,4,5\}}$; }
\end{figure}

We finish the proof by sketching why the above bijection claim holds in the following lemma. We postpone a more detailed proof of this lemma to \Cref{apx:matching} of the online supplement. 
\begin{lemma}[Bijection] 
\label{lemma:bijection}
For any $r,c\in[n]$, there exists a one-to-one correspondence between pairs $(\pi,T)\in S_n\times\mathcal{T}_r(n)$ where $\pi(r)=c$ and $r$-bi-trees $G'$ in $\mathcal{G}_r$ where $G'=G(\pi,T)$.
\end{lemma}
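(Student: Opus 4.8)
The plan is to construct the claimed bijection explicitly in both directions and verify the defining properties of an $r$-bi-tree. Going forward, fix $r, c \in [n]$. Given a pair $(\pi, T) \in S_n \times \mathcal{T}_r(n)$ with $\pi(r) = c$, recall that $G(\pi, T)$ is defined by adding, for each $u \neq r$, the edge $(u_L, \pi(u)_R)$, and for each directed edge $u \to v$ of $T$, the edge $(u_L, \pi(v)_R)$. First I would check the three properties of \Cref{def:rbitree} are satisfied. Property (iii) is immediate: each $u \neq r$ contributes exactly one ``matching'' edge $(u_L, \pi(u)_R)$ and exactly one ``tree'' edge $(u_L, \pi(v)_R)$ where $u \to v$ is the unique outgoing arc of $u$ in $T$ (every non-root vertex of an arborescence has out-degree one); and these two edges are distinct because $\pi$ is injective and $u \neq v$. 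Property (i) holds because $r$ has no outgoing arc in $T$ and the matching edge at $r$ was deliberately excluded, so $r_L$ is isolated. Property (ii), connectivity of the rest, is the substantive point: I would argue that following, from any $u_L$, the tree edge to $\pi(v)_R$ and then the matching edge back from $\pi(v)_R$ to $v_L$ exactly simulates the arc $u \to v$ of $T$; since $T$ is an arborescence toward $r$, iterating this reaches the vertex whose tree-arc points into $r$, i.e. lands on some $w_R$ with $w = r$ only through... more carefully, the path $u \to \cdots \to r$ in $T$ translates into an alternating path in $G(\pi,T)$ from $u_L$ ending at $c_R = \pi(r)_R$, so every left vertex $u_L$ ($u \neq r$) is connected to $c_R$, and hence all of them lie in one component together with the right vertices they touch.

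Next I would build the inverse map. Given an $r$-bi-tree $G'$, I need to recover $\pi$ and $T$. The key structural observation is that since every $u_L$ with $u \neq r$ has degree exactly $2$, the non-isolated part of $G'$ has exactly $2(n-1)$ edges on $2n - 1$ vertices in one connected component, hence contains exactly one cycle; I would show this cycle, together with the degree-$2$ condition, forces a canonical orientation. Concretely, I would show that from $G'$ one can read off $\pi$ as follows: $\pi(r) = c$ must be forced as the unique right-vertex of degree deficiency (more precisely, $c$ is identified since $c_R$ is the unique right vertex all of whose incident left-endpoints are ``tree'' edges — I would make this precise via the following cleaner route). Rather than extracting $c$ from $G'$ alone, note the lemma only asserts a bijection for \emph{fixed} $c$; so I would instead directly describe how, given that $\pi(r) = c$, the edges of $G'$ split canonically into a perfect-matching part $\{(u_L, \pi(u)_R) : u \neq r\}$ plus the edge $(r_L,\ldots)$ — wait, $r_L$ is isolated — into $\{(u_L,\pi(u)_R): u \neq r\}$ and a tree part $\{(u_L,\pi(v)_R): u\to v \in T\}$. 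The separation is obtained by orienting: direct each edge of $G'$ from its left endpoint "rightward" and then, using the matching, identify $\pi(u)$ as the second coordinate and pull tree edges back to arcs $u \to v$ where $v = \pi^{-1}(\text{other neighbor})$. I would verify this orients the unique cycle consistently and that contracting each matching edge turns $G'$ into an arborescence rooted at $r$, which recovers $T$.

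The main obstacle, I expect, is showing that the inverse construction is well-defined — i.e. that the splitting of the edges of an $r$-bi-tree into a ``matching part'' and a ``tree part'' is unique. A priori a left vertex $u_L$ of degree $2$ has two incident edges and there is no a priori label saying which is the matching edge; the argument must use the global structure (the root $r$, connectivity, the single-cycle count) to pin down the split. I would resolve this by an induction peeling off leaves: a left vertex $u_L$ whose ``tree neighbor'' should be a leaf of $T$ corresponds, after removing $u_L$ and one of its edges, to an $r$-bi-tree on fewer vertices; the degree-$2$ condition plus connectivity guarantees at each stage a vertex that can be peeled, and the choice of which edge to remove is forced by requiring the remainder to stay connected and acyclic-up-to-one-cycle. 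Carrying out this induction, together with checking that $G(\pi, T)$ and the inverse are mutually inverse, completes the proof; the routine verifications (that degrees, connectivity and the cycle count are preserved) I would relegate to \Cref{apx:matching} as the statement already promises.
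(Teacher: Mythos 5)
There is a genuine structural error that undermines your inverse construction. You claim that the non-isolated part of an $r$-bi-tree has $2(n-1)$ edges on $2n-1$ vertices and is connected, "hence contains exactly one cycle." But $2(n-1) = (2n-1) - 1$, so a connected graph with that vertex/edge count is a \emph{tree} — it contains \emph{no} cycle. Your subsequent plan ("orienting the unique cycle consistently," "acyclic-up-to-one-cycle") is built on this nonexistent cycle and cannot be carried out as stated.

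The paper's proof exploits precisely the fact that the non-isolated part is a tree. Its key lemma (Lemma~\ref{lem:bitreedecomp}) shows that for any $j\in[n]$ there is a \emph{unique} matching of size $n-1$ in an $r$-bi-tree leaving $j_R$ unmatched, and the construction is parity-of-distance from $j_R$ in the tree: match each $a_L$ to whichever of its two neighbors is farther from $j_R$. Uniqueness follows from the tree structure (no two shortest paths). Once this uniqueness lemma is in hand, taking $j = c$ recovers $\pi$ uniquely, the remaining $n-1$ edges give $T$, and it remains only to check $T$ is an $r$-rooted arborescence (which the paper does by following alternating matching/non-matching edges and using acyclicity to argue the walk must terminate at $c_R$). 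Your instinct that the crux is uniqueness of the matching/tree edge split is exactly right, and your observation that $c$ is fixed rather than extracted from $G'$ is also correct; but you would need to replace the cycle-based reasoning with the tree-distance-parity argument (or an equivalent) to close the gap. The "peel off leaves" induction could perhaps be made to work, but you have not specified how the choice of edge to remove at each step is forced, and the acyclicity (not unicyclicity) of the structure is what makes this forcing possible.
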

\begin{proof}[Proof sketch.] We prove the bijection in two parts: 

\vspace{1mm}
\noindent\emph{Part~(i)}: first, we claim $G(\pi,T)$ is an $r$-bi-tree. Notice that $G'=G(\pi,T)$ can be constructed from $(\pi,T)$ by a \emph{reverse} breadth-first search (BFS) walk on $T$ starting from root $r$, and then adding both edges $(u_L,\pi(v)_R)$ and $(u_L,\pi(u)_R)$ to $G'$ each time the walk moves from a vertex $v$ to a vertex $u$ (this is possible only if $u\rightarrow v$ is a directed edge in $T$). This step can alternatively be seen as adding a path of length $2$ from $\pi(v)_R$ to $\pi(u)_R$, passing through $u_L$, in $G'$. See \Cref{fig:bijection} (left to right) for a pictorial demonstration.  Now the claim can be proved as follows. As root $r$ has no outgoing edges in $T$, $r_L$ does not appear in any edges of $G'$ and remains isolated. Moreover, each $u\neq r$ is visited exactly once in the reverse BFS walk, which adds exactly two edges incident to $u_L$ in $G'$. Therefore, each $u_L$ for $u\neq r$ has degree $2$. Finally, $G'$ has no cycles, as we basically replace each edge in the undirected version of $T$ with a path of length 2 to construct $G'$. As the forest $G'$ has $2n-2$ edges, the remaining $2n-1$ vertices aside from $r_L$ should belong to a single connected component, which finishes the proof of our first claim. 

\vspace{1mm}
\noindent\emph{Part~(ii)}: second, we show the mapping $G(\pi,T)$ has an inverse. In other words, we propose an inverse mapping that given an $r$-bi-tree $G'$ uniquely returns a permutation $\pi$ (satisfying $\pi(r)=c)$ and $r$-rooted arborescence $T\in\mathcal{T}_r(n)$, so that $G(\pi,T)=G'$. To construct such a pair $(\pi,T)$, consider a BFS walk on the given undirected bipartite graph $G'$ starting from $c_R=\pi(r)_R$ (index the BFS tree layers by $0,1,2,\ldots$). We first show how to construct a permutation $\pi$ satisfying $\pi(r)=c$ from the walk. As $G'\setminus\{r_L\}$ is a single connected component, the BFS walk will visit all the vertices in $G'$ except for $r_L$. Moreover, in each odd layer of the BFS walk, it visits a left vertex $u_L$ with degree exactly $2$, as $G'$ is an $r$-bi-tree. Once the walk enters $u_L$, there is only one remaining indecent edge $(u_L,v_R)$ that can be added next to the BFS tree. Add this edge to the ``matching" $\pi$ by setting $\pi(v)=u$. At the end of the walk, the constructed $\pi$ (together with setting $\pi(r)=c$) gives a permutation as desired, simply because the BFS tree visits every right hand side vertex exactly once. Next, revisit the BFS walk and construct an arborescence $T$ by adding a directed edge $u\rightarrow v$ to $T$ for every edge $(u_L,\pi(v)_R)$ going from an even layer to an odd layer of the BFS tree (or equivalently, for every path of length $2$ in the BFS walk from an even layer vertex $\pi(v)_R$ to another even layer vertex $\pi(u)_R$ add a directed edge $u\rightarrow v$ to $T$). See \Cref{fig:bijection} (right to left) for a pictorial demonstration. As the BFS tree visits every vertex on the right side of $G'$ exactly once and $\pi$ is a permutation, the directed graph $T$ will be an arborescence rooted at $\pi^{-1}(c)=r$, as desired. Moreover, from the construction it is clear that a reverse BFS walk as described in the Part~(i) of the proof using $(\pi,T)$ will return $G'$. Hence, $G'=G(\pi,T)$. 
\end{proof}

\section{Necessary Conditions for Factories for Polytopes}\label{sec:necessary}

We now begin our exploration of the general combinatorial Bernoulli factory problem: for which polytopes $\P \subseteq [0, 1]^n$ does there exist a Bernoulli factory for $\P$? In this section we provide a necessary condition: any such $\P$ must be the intersection of $[0, 1]^n$ with an affine subspace. Recall that an affine subspace $\H$ of $\R^d$ is a set of points $x$ satisfying $Wx = b$ for some full-rank $k$-by-$d$ matrix $W$ and $b \in \R^k$ (in this case, we say the \textit{codimension} of $\H$ is $k$). 

\begin{theorem}\label{thm:necessary}
Let $\P \subseteq [0, 1]^n$ be a polytope such that $\P \cap (0, 1)^n \neq \emptyset$. If $\P$ is not of the form $\P = [0,1]^n \cap \H$ for some affine subspace $\H$ then no Bernoulli factory for $\P$ exists. 
\end{theorem}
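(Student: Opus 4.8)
The plan is to argue by contradiction, assuming a Bernoulli factory $\mathcal{F}$ for $\P$ exists. The whole argument hinges on one elementary observation about the decision tree $\mathcal{T}$ underlying $\mathcal{F}$: by \eqref{eq:leaf}, for any leaf $\ell$ and any $x \in (0,1)^n$ the probability $\Pr[\mathcal{F}(x) \to \ell] = c\prod_i x_i^{a_i}(1-x_i)^{b_i}$ is a product of strictly positive numbers, hence strictly positive. Consequently, writing $p_v(x) := \Pr[\mathcal{F}(x) = v]$, we get $p_v(x) > 0$ for \emph{every} $x \in (0,1)^n$ as soon as $\mathcal{T}$ contains even a single leaf labelled $v$ — the support of the output distribution is the same at every point of $(0,1)^n$. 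This is exactly the rigidity that a genuinely non-box facet cannot tolerate.

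First I would establish the polyhedral input. Since $\P \subseteq [0,1]^n$ but $\P$ is not of the form $\H \cap [0,1]^n$ for any affine subspace $\H$, in particular $\P \neq \operatorname{aff}(\P) \cap [0,1]^n$; from this I would deduce that some facet-defining inequality $\langle a, x\rangle \le \beta$ of $\P$ (viewed as a polytope inside $\operatorname{aff}(\P)$) is not a box inequality, i.e.\ its facet $F$ is not contained in any coordinate hyperplane $\{x_i = 0\}$ or $\{x_i = 1\}$ — for otherwise every facet-defining inequality would be a box inequality and, using $\P \subseteq [0,1]^n$, one gets $\P = \operatorname{aff}(\P)\cap[0,1]^n$. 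Because $F$ avoids all coordinate hyperplanes, any relative-interior point $x_0$ of $F$ has every coordinate strictly in $(0,1)$ (a tight box inequality at a relative-interior point of $F$ would be tight on all of $F$), so $x_0 \in \tilde{\P}$. And because $F$ is a proper face of $\P$, there is a vertex $v^\dagger$ of $\P$ with $\langle a, v^\dagger\rangle < \beta$.

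It remains to collide these two facts. On one hand, $\mathcal{T}$ must contain a leaf labelled $v^\dagger$: otherwise $p_{v^\dagger} \equiv 0$, and then $\E[\mathcal{F}(x)] = x$ for $x$ in the (nonempty) relative interior of $\P \subseteq \tilde{\P}$, together with a.s.\ termination on $\tilde{\P}$, writes every such $x$ as a convex combination of $V \setminus \{v^\dagger\}$, contradicting that $v^\dagger$ is a vertex of $\P = \operatorname{conv} V$. Hence, by the observation above, $p_{v^\dagger}(x_0) > 0$ since $x_0 \in (0,1)^n$. On the other hand, using $\E[\mathcal{F}(x_0)] = x_0$ and $\sum_v p_v(x_0) = 1$, we get $\beta = \langle a, x_0\rangle = \sum_v p_v(x_0)\langle a, v\rangle \le \beta\sum_v p_v(x_0) = \beta$, and equality forces $p_v(x_0) = 0$ for every vertex $v$ with $\langle a,v\rangle < \beta$ — in particular $p_{v^\dagger}(x_0) = 0$, a contradiction. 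I expect the only delicate point to be the polyhedral step, namely turning ``$\P$ is not of the form $\H \cap [0,1]^n$'' into ``$\P$ has a facet whose relative interior meets $(0,1)^n$'' while correctly discarding the degenerate cases (a box inequality that is in fact an equality on $\P$ is ruled out by $\P \cap (0,1)^n \neq \emptyset$, and $\operatorname{relint}(F)\subseteq(0,1)^n$ needs the tightness argument); the probabilistic core reduces to positivity of Bernstein monomials. This is the rigorous form of the picture in Figure~\ref{fig:2d_fac}: the ``nearby points'' are a relative-interior point of $\P$, where $v^\dagger$ must be output with positive probability, and the facet point $x_0$, where it must be output with probability zero, even though no finite number of coin flips (i.e.\ no single leaf) can separate the two.
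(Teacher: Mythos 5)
Your proposal is correct, and the probabilistic core is identical to the paper's (their Lemma~\ref{lem:bfpos}): a Bernstein monomial is strictly positive on $(0,1)^n$, so any leaf that ever fires on $\tilde\P$ fires at every point of $\tilde\P$. What differs is the polyhedral packaging. The paper introduces the lattice of \emph{open faces}, proves that two distinct open faces of $\P$ lying inside $(0,1)^n$ already rule out a factory (Lemma~\ref{lem:char1}, via Lemma~\ref{lem:convex}), and then shows separately (Lemma~\ref{lem:linear}) that a non-box polytope must have at least two such open faces. You instead pick a single facet-defining inequality $\langle a,x\rangle\le\beta$ whose facet $F$ is not contained in any coordinate hyperplane, locate $x_0\in\mathrm{relint}(F)\subset(0,1)^n$, and get the vanishing $p_{v^\dagger}(x_0)=0$ directly from $\beta=\sum_v p_v(x_0)\langle a,v\rangle\le\beta$ rather than from the face-lattice lemma. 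This is a little more concrete and linear-algebraic, while the paper's version is more structural and is reused to derive the strong-factory analogue (Theorem~\ref{thm:necessary_strong}) by projecting onto the minimal cube face containing $\P$; your argument would need to be redone in that reduced cube, which the paper handles uniformly.

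Two small points worth tightening if you wrote this up. First, the step ``every facet in a box hyperplane $\Rightarrow$ $\P=\mathrm{aff}(\P)\cap[0,1]^n$'' silently uses that $\P\cap(0,1)^n\neq\emptyset$ forces $\P\cap\{x_i=c\}$ to be a \emph{proper} face, so that a facet sitting in $\{x_i=c\}$ coincides with that box face and the facet inequality is equivalent, on $\mathrm{aff}(\P)$, to the corresponding box inequality; state this, since it is exactly where the hypothesis $\P\cap(0,1)^n\neq\emptyset$ enters. Second, the line ``writes every such $x$ as a convex combination of $V\setminus\{v^\dagger\}$, contradicting that $v^\dagger$ is a vertex'' is correct but elides a closure step: an individual relative-interior point certainly \emph{can} lie in $\mathrm{conv}(V\setminus\{v^\dagger\})$ (think of the center of a square). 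The contradiction comes from ``\emph{all} of $\mathrm{relint}(\P)$ is contained in the closed set $\mathrm{conv}(V\setminus\{v^\dagger\})$, hence so is its closure $\P$, hence $v^\dagger\in\mathrm{conv}(V\setminus\{v^\dagger\})$,'' and that last containment is what violates extremality. With those two clarifications the argument is complete.
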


Since (non-strong) Bernoulli factories for $\P$ are only required to work for $x \in \P \cap (0, 1)^n$, the constraint that $\P \cap (0, 1)^n \neq \emptyset$ is necessary. For strong Bernoulli factories that work for all $x \in \P$, we have the following stronger theorem.

\begin{theorem}\label{thm:necessary_strong}
Let $\P$ be a polytope. If $\P$ is not of the form $\P = [0, 1]^n \cap \H$ for some affine subspace $\H$, then no strong Bernoulli factory for $\P$ exists.
\end{theorem}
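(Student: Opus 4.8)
The plan is to reduce Theorem~\ref{thm:necessary_strong} to the non-strong Theorem~\ref{thm:necessary} by projecting away the coordinates that $\P$ pins to $\{0,1\}$. We may assume $\P\subseteq[0,1]^n$ (otherwise a factory for $\P$ is not even defined and the statement is vacuous) and that $\P$ is not a single point (single points are of the prescribed form). Suppose toward a contradiction that $\mathcal F$ is a strong Bernoulli factory for $\P$; we read ``strong'' as also demanding $\E[\mathcal F(x)]=x$ for \emph{all} $x\in\P$, which is the intended notion and is in any case forced, by continuity of $x\mapsto\Pr[\mathcal F(x)=v]$ on $\P$, whenever $\tilde\P$ is dense in $\P$ (in particular whenever $\tilde\P\neq\emptyset$). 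Writing $\H=\mathrm{aff}(\P)$, the hypothesis applied with this $\H$ gives the strict inclusion $\P\subsetneq[0,1]^n\cap\H$.

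Let $I=\{i\in[n]: x_i\text{ is constant over }\P\text{ with value in }\{0,1\}\}$, let $c_I\in\{0,1\}^I$ be those constants, and let $\pi\colon\R^n\to\R^{[n]\setminus I}$ be the coordinate projection. Put $\P'=\pi(\P)\subseteq[0,1]^{[n]\setminus I}$ and $\H'=\pi(\H)=\mathrm{aff}(\P')$; since the dropped coordinates are constant on $\P$, $\pi$ restricts to an affine bijection $\P\to\P'$ taking vertices to vertices. I would then verify three claims. \emph{(i) The factory transports.} Short-circuiting each node of the tree $\mathcal F$ labeled $x_i$ with $i\in I$ down its $c_i$-edge (that coin is deterministic) yields a decision tree over the coordinates $\{x_j:j\notin I\}$ and the constants in $(0,1)$; call it $\mathcal F'$. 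Running $\mathcal F'$ on coins for $x'\in\tilde\P'$ exactly simulates $\mathcal F$ on the point $(x',c_I)\in\P$, and outputting $\pi$ of the returned vertex gives a Bernoulli factory for $\P'$: it terminates a.s.\ and $\E[\mathcal F'(x')]=\pi(\E[\mathcal F((x',c_I))])=\pi((x',c_I))=x'$. \emph{(ii) The hypothesis transports.} If $\P'=[0,1]^{[n]\setminus I}\cap\H''$ for some affine $\H''$, then the lift $\widehat\H=\{x:x_I=c_I,\ \pi(x)\in\H''\}$ is affine and $[0,1]^n\cap\widehat\H=\{x\in[0,1]^n:x_I=c_I,\ \pi(x)\in\P'\}=\P$, contradicting the hypothesis on $\P$; so $\P'$ is not of this form either. \emph{(iii) $\P'$ meets the open cube.} Set $\cQ'=[0,1]^{[n]\setminus I}\cap\H'$; it contains $\P'$ and so is full-dimensional in $\H'$. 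For each $i\notin I$, $x_i$ is not identically $0$ and not identically $1$ on $\P'$ (by the definition of $I$ and the bijection $\pi|_\P$), hence the same holds on $\cQ'$ (else it would be constant on $\mathrm{aff}(\cQ')=\H'\supseteq\P'$); since $\{x_i=0\}$ and $\{x_i=1\}$ support $\cQ'$, the faces $\cQ'\cap\{x_i=0\}$ and $\cQ'\cap\{x_i=1\}$ are proper and hence disjoint from $\mathrm{relint}(\cQ')$. Intersecting over $i\notin I$ gives $\mathrm{relint}(\cQ')\subseteq(0,1)^{[n]\setminus I}$, and since $\mathrm{relint}(\P')\subseteq\mathrm{relint}(\cQ')$ is nonempty we get $\P'\cap(0,1)^{[n]\setminus I}\neq\emptyset$.

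Given (ii) and (iii), Theorem~\ref{thm:necessary} applies to $\P'$ and rules out any Bernoulli factory for $\P'$, contradicting (i). This proves the theorem.

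I expect the only genuine obstacle to be claim (iii): that once the affine hull no longer pins any coordinate to $\{0,1\}$, the relative interior of $\H'\cap[0,1]^{[n]\setminus I}$ — and hence of $\P'$ — lies strictly inside the open cube. The key sub-fact is that a box inequality $x_i\ge 0$ or $x_i\le 1$ that is not forced to equality on all of $\H'$ carves out a \emph{proper} face of $\H'\cap[0,1]^{[n]\setminus I}$, and proper faces are disjoint from the relative interior; this, together with monotonicity of relative interiors under inclusions of convex sets with equal affine hull, is all the polyhedral geometry that is needed. Claims (i) and (ii) are bookkeeping. (Alternatively one could argue directly, via the facet plus Bernstein-monomial-rigidity argument behind Theorem~\ref{thm:necessary}, but the reduction above isolates precisely the extra content of the strong version, namely handling coordinates that are deterministically $0$ or $1$.)
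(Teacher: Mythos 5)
Your proof is correct and follows the same route as the paper: project away the coordinates that $\P$ pins to $\{0,1\}$ (the paper phrases this as passing to the minimal face of $[0,1]^n$ containing $\P$), transport the strong factory to the projected polytope $\P'$, and invoke Theorem~\ref{thm:necessary}. Your claims~(ii) and~(iii) fill in details that the appendix glosses over --- in particular the appendix asserts (with what appears to be a typo, ``$\P' \cap [0,1]^m \neq \emptyset$'' where $(0,1)^m$ is intended) precisely your claim~(iii), and like you it implicitly uses $\E[\mathcal F(x)] = x$ at boundary points $x \in \P \setminus \tilde\P$ --- but the underlying argument is the same.
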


The full proofs of Theorems \ref{thm:necessary} and \ref{thm:necessary_strong} can be found in  \Cref{app:necessary} of the online supplement. In the remainder of this section, we provide a sketch of the main ideas in this proof. 

Before we proceed, it will prove illustrative to understand some of the obstacles to producing one-parameter Bernoulli factories for certain functions $f: [0, 1] \rightarrow [0, 1]$ (i.e., the classic Bernoulli factory setting studied in \citep{keane1994bernoulli}). Consider, for example, the function $f(x) = |x - 0.5|$. On first glance, since $f(x) \in [0, 1]$ for all $x \in [0, 1]$, it might appear possible to construct a one-bit Bernoulli factory $\mathcal{F}$ for $f$. However, this is impossible. One reason why is that since $f(0.6) = 0.1 > 0$, there must be some finite sequence of coin flips where $\mathcal{F}$ outputs $1$ (i.e., a leaf $\ell$ labelled $1$ in the tree for $\mathcal{F}$ where $\Pr[\mathcal{F}(0.6) \rightarrow \ell] > 0$). But this finite sequence of coin flips must also occur with positive probability when $x = 0.5$, so $f(0.5)$ must also be strictly positive. In general, if any non-constant $f(x): [0, 1] \rightarrow [0, 1]$ achieves the value $0$ or $1$ on $(0, 1)$, this argument shows it is not possible to construct a Bernoulli factory for $f$.

\begin{wrapfigure}{r}{0.25\textwidth}
\centering
\begin{tikzpicture}[scale=.85]

\draw[line width=1pt] (0,-1)--(0,2)--(3,2)--(3,-1)--cycle;
\draw[line width=1pt, color=blue, fill=blue!20!white] (0,-1)--(0,2)--(3,2)--cycle;
    \node[circle,fill,inner sep=1.5pt] at (1.5,.5) {};
  \node[circle,fill,inner sep=1.5pt] at (1,1) {};
  \node[circle,fill,inner sep=1.5pt] at (0,2) {};
  \node at (1.6,.2) {$x_1$};
  \node at (.8,1.3) {$x_2$};
  \node at (-.2,2.2) {$v$};
\end{tikzpicture}
\caption{The factory should output $v$ at $x_2$ but not at $x_1$.}
\label{fig:2d_fac}
\end{wrapfigure}
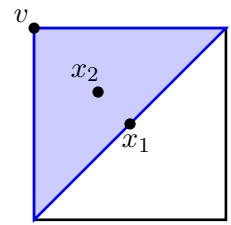

Similar obstructions appear when designing Bernoulli factories for polytopes. Consider, for example, the polytope $\P \subseteq [0, 1]^2$ with vertices $(0, 0)$, $(0, 1)$, and $(1, 1)$ (see Figure \ref{fig:2d_fac}) and assume to the contrary we have a Bernoulli factory $\mathcal{F}$ for $\P$. Let $v = (0, 1)$. Note that for a point $x_2$ in the interior of $\P$, $\mathcal{F}(x_2)$ must output $v$ with positive probability (since $x_2$ cannot be written as a convex combination of the two other vertices). Similarly, for a point $x_1$ in the middle of the edge connecting $(0, 0)$ and $(1, 1)$, $\mathcal{F}(x_1)$ must output $v$ with \textit{zero} probability. But these two statements are incompatible; if $\mathcal{F}(x_1)$ outputs $v$ with positive probability, there is some leaf $\ell$ labelled $v$ in the protocol tree for $\mathcal{F}$ such that $\Pr[\mathcal{F}(x_2) \rightarrow \ell] > 0$. But $\Pr[\mathcal{F}(x) \rightarrow \ell]$ is just a Bernstein monomial in $x$, so if $\Pr[\mathcal{F}(x_2) \rightarrow \ell] > 0$ for an $x_2 \in (0, 1)^2$, it follows that $\Pr[\mathcal{F}(x_1) \rightarrow \ell] > 0$ (since $x_1$ also lies in $(0, 1)^2$). This means no Bernoulli factory for $\P$ can exist.

The general proof of Theorem \ref{thm:necessary} proceeds along these lines. We formalize this by looking at the \textit{open faces} of $\P$. A face of $\P$ is a set of points in $\P$ which maximize a linear functional. The polytope $\P$ in Figure \ref{fig:2d_fac} contains $7$ faces: one 2-dimensional face (all of $\P$), three 1-dimensional faces (the edges of $\P$) and three 0-dimensional faces (the vertices of $\P$). The faces of $\P$ form a lattice; an open face is the set of points that belong to some face in $\P$ but no sub-faces (e.g. the interior of $\P$).

We begin by showing that if there are two different open faces of $\P$ contained in $(0, 1)^n$, then there is no Bernoulli factory for $\P$ (Lemma \ref{lem:char1} of the online supplement). For example, the $\P$ in Figure \ref{fig:2d_fac} has two open faces that are subsets of $(0, 1)^n$: the 2-dimensional open face $\mathrm{int}(\P)$ and the 1-dimensional open edge between $(0, 0)$ and $(1, 1)$. The proof of this Lemma is similar to the reasoning above; if we have two points $x_1, x_2$ in the interior of $(0, 1)^n$ that belong to different open faces of $\P$, we can show that there is some vertex which must occur with positive probability in $\mathcal{F}(x_1)$ but with zero probability in $\mathcal{F}(x_2)$. This, however, is impossible for the same reason as above (since a non-zero Bernstein monomial is positive everywhere on $(0, 1)^n$).

We then show that if the unique open face of $\P$ contained in $(0, 1)^n$ is the interior of $\P$, $\P$ is the intersection of $[0, 1]^n$ and an affine space (Lemma \ref{lem:linear} of the online supplement). To see this, we prove the contrapositive -- assume $\P$ is not the intersection of $[0, 1]^n$ with an affine subspace. Then look at the affine span $\mathcal{H}$ of $\P$, and let $\cQ$ be the polytope formed by the intersection of $[0, 1]^n$ and $\mathcal{H}$. We now know $\P$ is strictly contained in $\cQ$ -- using this, we can show that there is a boundary face of $\P$ in the interior of $\cQ$. But then there are two open faces of $\P$ in the interior of $(0, 1)^n$: this boundary face and the interior of $\P$. Combining Lemmas \ref{lem:char1} and \ref{lem:linear}, we arrive at Theorem \ref{thm:necessary}. The proof of Theorem \ref{thm:necessary_strong} proceeds similarly -- it suffices to look at the smallest face of $[0, 1]^n$ containing $\P$.

\section{Bernoulli Factories for Generic Polytopes}\label{sec:generic}

We will start by building Bernoulli factories for polytopes of the form $\P = [0,1]^n \cap \H$ for generic affine spaces $\H$. Later, we will extend this construction to non-generic spaces. A  $k$-dimensional affine subspace $\H$ can be written in the form
$$\H = \{x \in \R^n \mid W x = b\}$$
where $W$ is a $k \times n$-matrix of rank $k$ and $b \in \R^k$. We will let $w^{i}$ denote the $i$-th column of $W$. Given a subset $S\subseteq [n]$ we define the matrix $W_S$ to be the submatrix formed by the columns of $W$ indexed by elements in $S$.

\paragraph*{Generic subspaces}
An affine subspace $\H$ is said to be \emph{generic} if for each subset $S$ of size $k$ such that $W_S$ is non-singular and for each subset $B \subseteq [n]\setminus S$ the solution $W_S^{-1}(b- \sum_{i\in B} w^i)$ has no coordinates in $\{0,1\}$. Equivalently, $\H$ is generic if every vertex in $\P$ has exactly $k$ coordinates in the open interval $(0,1)$.

It is easy to check that for any fixed matrix $W$, the set of $b$ such that $Wx = b$ is non-generic forms a set of measure zero. So, by slightly perturbing $b$ it is always possible to obtain a generic subspace from a non-generic one. Many subspaces of interest in combinatorial optimization (e.g. $k$-subset, matchings, flows) are non-generic. We will later study these spaces as limits of generic affine spaces.

\paragraph*{Vertices and partitions} It is useful to represent each vertex of the polytope $\P$ with a partition of the set $[n]$ into three parts, indicating which coordinates of the vertex are equal to $0$, $1$, or lie in the open interval $(0, 1)$. We define the set of relevant partitions as follows:
$$\Part_{[n],k} \triangleq \{ (A,S,B) \mid A \cup S \cup B = [n], \abs{S} = k \text{ and } \abs{A} + \abs{S} + \abs{B} = n \}$$
We define the set of \emph{valid} partitions for the polytope $\P$ as:
$$\U \triangleq \{ (A,S,B) \in \Part_{[n],k} \mid \det W_S \neq 0 \text{ and } W_S^{-1}(b- \textstyle\sum_{i\in B} w^i) \in (0,1)^k \}$$

For polytopes formed from generic subspaces, there is a bijective mapping between vertices $v \in V$ and valid partitions $\pi \in \U$. Given a valid partition $\pi = (A,S,B)$ consider the vertex $v \in \P$ such that $v_i = 0$ for $i\in A$, $v_i=1$ for $i \in B$ and $v_S = W_S^{-1}(b- \sum_{i\in B} w^i)$ (where $v_S$ is a shorthand for the coordinates of $x$ corresponding to indices in $S$). Similarly, given a vertex $v \in V$ we can represent it by the partition $\pi = (A,S,B)$ where $A$ corresponds to the indices where $v_i =0$, $B$ corresponds to the indices where $v_i = 1$ and $S$ corresponds to the remaining indices.

Given a partition $\pi \in \U$ we will let $v^{\pi} \in V$ denote the vertex associated with this partition; likewise, given a vertex $v \in V$, we will let $\pi^{v} \in \U$ be the partition corresponding to this vertex. We write $A_{\pi}, S_{\pi}$ and $B_{\pi}$ to refer to the subsets in the partition $\pi$.

\paragraph*{Factory Construction} We will construct a Bernoulli factory for $\P$ by defining a Bernstein polynomial for each partition $\pi = (A,S,B) \in \Part_{[n],k}$:
\begin{equation}\label{eq:generic_factory}
P_{\pi}(x) \triangleq \abs{\det W_{S}} \cdot \prod_{i \in A} (1-x_i) \cdot \prod_{i \in B} x_i \cdot \prod_{i \in S} x_i (1-x_i)
\end{equation}
Now, for each vertex $v \in V$ we define $P_v(x)$ to be the polynomial associated with the corresponding partition $\pi^{v} \in \U$, i.e. $P_{v}(x) = P_{\pi^{v}}(x)$. At this point it is useful to note that for constructing the factory we only need $P_\pi$ for $\pi \in \U$, but we define the polynomials more generally since they will be useful in the proof.

\begin{theorem}\label{thm:valid_factory}
For a generic affine subspace $\H$, the Bernoulli race over Bernstein polynomials given by $P_{v}(x) = P_{\pi^{v}}(x)$ is a strong Bernoulli factory for $\P = [0,1]^n \cap \H$.
\end{theorem}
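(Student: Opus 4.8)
The plan is to invoke Theorem~\ref{thm:factory_construction}: since each $P_\pi(x)$ in \Cref{eq:generic_factory} is manifestly a nonzero Bernstein polynomial (a product of factors $x_i$, $1-x_i$ with a positive coefficient $\abs{\det W_S}$), it suffices to check the two conditions of that theorem on $\P = [0,1]^n \cap \H$, namely the vector identity $\sum_{v\in V} P_v(x)(v-x) = 0$ and the strict positivity $\sum_{v\in V} P_v(x) > 0$. The positivity is the easy part: for $x \in \P$ there is at least one vertex $v$ whose associated partition $\pi^v=(A,S,B)$ has $x_i \in (0,1)$ whenever $i \in S$ (and $x_i$ can still be anything in $[0,1]$ elsewhere), so the corresponding monomial is strictly positive; actually one should argue that for \emph{any} $x\in\P$ some valid partition $\pi$ yields $P_\pi(x)>0$, which follows because $x$ lies in the convex hull of the vertices and the generic structure forces a suitable basic set $S$ with $x_S\in(0,1)^{\abs S}$ or the corresponding factors on $A\cup B$ nonzero — this needs a short argument about which vertices have full support on their "slack" coordinates relative to $x$.

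The substance is the vector identity. Following the roadmap in the introduction ("zonotope tilings"), I would fix a coordinate direction and reduce the $\R^n$-valued identity to $n$ scalar identities, then reduce each scalar identity $\sum_\pi P_\pi(x)(v^\pi_j - x_j) = 0$ to a polynomial identity that holds for \emph{all} $x\in\R^n$ (not just $x\in\P$), exactly as was done in the matching section: multiply through by a suitable linear form coming from a row of $W$ so that the affine constraint $Wx=b$ gets absorbed, turning the restricted identity into an unrestricted one. Concretely, using $v^\pi_S = W_S^{-1}(b - \sum_{i\in B}w^i)$ one can express $v^\pi_j$ (for $j\in S$) via Cramer's rule as a ratio of determinants, and the factor $\abs{\det W_S}$ in $P_\pi$ is designed precisely to clear the denominator, leaving a signed sum of Bernstein monomials weighted by $k\times k$ minors of $W$.

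The key step — and the one I expect to be the main obstacle — is showing that this resulting signed sum of minor-weighted Bernstein monomials vanishes identically. The paper's stated strategy is to realize it geometrically: associate to each partition $(A,S,B)$ (equivalently, to each vertex of $\P$) a cell in a zonotopal arrangement built from the columns $w^1,\dots,w^n$ of $W$, interpret the signs $\sign\det W_S$ as orientations of these cells, and then observe that the claimed identity is the statement that two different zonotope tilings of the same region agree — so the alternating sum telescopes/cancels cell by cell. Making this precise requires (i) setting up the correct zonotope and its two tilings, (ii) matching Bernstein monomials $\prod_A(1-x_i)\prod_B x_i\prod_S x_i(1-x_i)$ to volumes or indicator contributions of the tiles, and (iii) checking the signs line up with $(-1)$-factors from cofactor expansions; this is the technical heart and where the real work lies. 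Once the unrestricted polynomial identity is established for each coordinate, restricting back to $\P$ and dividing by $\sum_v P_v(x)>0$ gives $\E[\mathcal F(x)]=x$, and a.s.\ termination on all of $\P$ follows from the positivity condition via Theorem~\ref{thm:factory_construction}, yielding a \emph{strong} factory as claimed.
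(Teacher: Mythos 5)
Your outline correctly identifies the two conditions from Theorem~\ref{thm:factory_construction} that must be checked, and you have the right instincts about Cramer's rule, the role of $\abs{\det W_S}$ in clearing denominators, and zonotope tilings as the mechanism for the sign cancellation. However, as it stands the proposal has genuine gaps on both conditions. For positivity, your argument is incomplete and you flag this yourself; the clean way to do it (and what the paper does) is to write $x$ as a convex combination of vertices and take \emph{any} vertex $v$ with positive weight, then observe that $A_x \subseteq A_{\pi^v}$, $S_{\pi^v}\subseteq S_x$, $B_x \subseteq B_{\pi^v}$, so every factor of $P_{\pi^v}(x)$ is strictly positive. You do not need to find a vertex whose $S$-coordinates of $x$ are interior and whose $A\cup B$ factors happen to be nonzero -- the convex-combination argument gives you all of that automatically.

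For the vector identity, the proposal stops at exactly the point where the work begins: you write that making the tiling argument precise "is the technical heart and where the real work lies" without supplying it, so the central Lemma~\ref{lemma:identity} and the Tiling Lemma~\ref{lemma:tiling} are asserted rather than proved. Moreover, your proposed reduction -- "multiply through by a suitable linear form coming from a row of $W$ so that the affine constraint $Wx=b$ gets absorbed, exactly as was done in the matching section" -- is the wrong template. That multiply-by-$\sum_i x_{r,i}=1$ trick is specific to the doubly stochastic structure of Section~\ref{sec:matching_factory} and does not transfer to general $W$. What the paper actually does is: split the coordinate-$j$ sum by whether $j\in A_\pi, B_\pi, S_\pi$; re-index each piece over partitions of $[n]\setminus j$; use Cramer's rule on $v^\pi_j-x_j$ (this is where $Wx=b$ enters, once); cancel $x_j(1-x_j)$; and then group by the basic set $S'$. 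Only after grouping by $S'$ do you get a genuine polynomial identity (Lemma~\ref{lemma:identity}), which is then proved by showing that the positive-sign parallelotopes $Z_A^i, Z_B^i$ and the negative-sign ones each tile $\Zon(w^1,\dots,w^k,w^j)$ with disjoint interiors (using Lemmas~\ref{lemma:disjoint_pm} and~\ref{lemma:disjoint_pp} plus a volume count). That genericity keeps $q(\pi')$ off tile boundaries is also an essential check you would need to make. So: right destination, but the route you sketch would not get you there, and the parts you acknowledge as "the real work" are precisely what a proof must contain.
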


\subsection{Proof of Theorem \ref{thm:valid_factory}}

To show the race over our Bernstein polynomials is a strong Bernoulli factory for $\P$, we need to check Conditions~\eqref{eq:not_vanish}
and \eqref{eq:factory} in Theorem \ref{thm:factory_construction}. 

\subsubsection{Checking Condition \eqref{eq:not_vanish}} 

We start with the easier condition, that $\sum P_{v}(x)$ does not vanish on $\P$. Fix an $x \in \P$ and let
$$A_x = \{i\mid x_i = 0\} \quad S_x = \{i \mid 0< x_i < 1\} \quad B_x = \{i \mid x_i = 1\}.$$ 
Write $x$ as a convex combination of vertices and pick any vertex $v$ with positive weight in this combination. Note that if $x_i = 0$, then $v_i = 0$; likewise, if $x_i = 1$, then $v_i = 1$ (since $\P \subseteq [0, 1]^n$). It follows that if vertex $v$ corresponds to partition $\pi \in \U$, then
$$A_x \subseteq A_\pi \qquad S_\pi \subseteq S_x \qquad B_x \subseteq B_\pi.$$ 
Now, observe that
$$ P_\pi(x) =  \abs{\det W_{S_\pi}} \cdot \prod_{i \in A_\pi} (1-x_i) \cdot \prod_{i \in B_\pi} x_{i} \cdot \prod_{i \in S_\pi} x_{i} (1-x_{i}) = \abs{\det W_{S_\pi}} \prod_{i \in S_\pi} x_{i} (1-x_{i})> 0.$$
It follows that $\sum_{\pi \in \U} P_\pi(x) > 0$.

\subsubsection{Rewriting Condition~\eqref{eq:factory}}
The interesting part of the proof is to show that Condition~\eqref{eq:factory} holds. Recall that Condition~\eqref{eq:factory} states that
$$\sum_{v\in V}P_{v}(x)(v-x) = 0$$
must hold for all $x \in \P$. Since $x$ and $v$ are $n$-dimensional vectors, this is a vector equation. We will check this condition for each coordinate. Fix a coordinate $j \in [n]$ and split the sum $\sum_{\pi}P_{\pi}(x)(v^{\pi}-x)$ over all partitions $\pi \in \U$ depending on whether $j$ belongs to $A_\pi$, $B_\pi$ or $S_\pi$:
\begin{equation}\label{eq:coord_factory}
\sum_{\pi \in \U \mid j \in A_\pi} -x_j P_{\pi}(x) + \sum_{\pi \in \U \mid j \in B_\pi}  (1-x_j) P_{\pi}(x)  + \sum_{\pi \in \U \mid j \in S_\pi}  (
v_j^\pi -x_j) P_{\pi}(x) = 0
\end{equation}

We will now rewrite each of the terms below as sums over partitions in $\Part_{[n]\setminus j, k}$ (i.e. partitions of the set $[n]\setminus j$ into three parts $(A, B, S)$ where $S$ has $k$ elements).

\paragraph*{First term of \Cref{eq:coord_factory}} Given a partition $\pi = (A,S,B) \in \U$ with $j \in A$ consider a partition $\pi' = (A \setminus j, S, B)$. This establishes a bijective mapping between $\{\pi \in \U \mid j \in A_{\pi}\}$ and the set:
$$\U_A^j \triangleq \{ (A',S',B') \in \Part_{[n]\setminus j,k}  \mid \det W_{S'} \neq 0 \text{ and } W_{S'}^{-1}(b- \textstyle\sum_{t\in B'} w^t) \in (0,1)^k \}$$
which allows us to rewrite the first term in \Cref{eq:coord_factory} as follows:

\begin{equation}\label{eq:a}
\sum_{\pi \in \U| j \in A_\pi} -x_j P_{\pi}(x) = \sum_{\pi' \in \U_A^j} - x_j(1-x_j) \cdot P_{\pi'}(x)
\end{equation}
Here we define $P_{\pi'}(x)$ analogously to the definition of $P_{\pi}(x)$ in \eqref{eq:generic_factory}. Observe that $P_{\pi'}(x)$ does not have any terms depending on $x_j$; it is a polynomial in the $(n-1)$ other variables.

\paragraph*{Second term of \Cref{eq:coord_factory}}  Similarly for the second term, we can establish a bijective mapping between $\{\pi \in \U \mid j \in B_{\pi}\}$ and the set:
$$\U_B^j \triangleq \{ (A',S',B') \in \Part_{[n]\setminus j,k}  \mid \det W_{S'} \neq 0 \text{ and } W_{S'}^{-1}(b- w^j -\textstyle\sum_{t\in B'} w^t) \in (0,1)^k \}.$$
This allows us to rewrite:
\begin{equation}\label{eq:b}
\sum_{\pi \in \U| j \in B_\pi} (1-x_j) P_{\pi}(x) = \sum_{\pi' \in \U_B^j} x_j(1-x_j) P_{\pi'}(x).
\end{equation}

\paragraph*{Last term of \Cref{eq:coord_factory}}

Let's first examine the term $v^\pi_j-x_j$ in the last expression of \Cref{eq:coord_factory}. For this, it is useful to establish a bit of additional notation. Given a set $S$ of size $k$, recall that the matrix $W_S$ is the square matrix formed by taking the columns with indices in $S$ (in increasing order of the indices). Given coordinates $j \in S$ and $i \notin S$ we will define $W_{S[j\rightarrow i]}$ to be the matrix formed by replacing column $w^j$ by $w^i$. For example, if $S = \{2,3,5,7\}$ then:
$$W_S = [w^2 \text{ } w^3 \text{ } w^5 \text{ } w^7 ] \quad \text{and} \quad W_{S[5 \rightarrow 11]} = [w^2 \text{ } w^3 \text{ } w^{11} \text{ } w^7 ] $$
Note that the order where the $i$th column is inserted matters. With that, we are ready to state the next lemma:

\begin{lemma}
If $x \in \P$ and $v^{\pi}$ is a vertex corresponding to partition $\pi = (A,S,B)$ then for any coordinate $j \in S$ we have that
$$v^{\pi}_j - x_j = \sum_{i \in A}  \frac{\det W_{S[ j \rightarrow i]}}{\det W_{S}}  x_i - \sum_{i \in B}  \frac{\det W_{S[ j \rightarrow i]}}{\det W_{S}} (1-x_i).$$
\end{lemma}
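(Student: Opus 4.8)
The plan is to derive the claimed expression directly from the defining equation of the vertex $v^\pi$. Recall that $v^\pi$ is the vertex of $\P = [0,1]^n \cap \H$ associated with the partition $\pi = (A,S,B)$, characterized by $v^\pi_i = 0$ for $i \in A$, $v^\pi_i = 1$ for $i \in B$, and $v^\pi_S = W_S^{-1}(b - \sum_{i \in B} w^i)$. In particular $v^\pi$ satisfies $W v^\pi = b$, which we rewrite by splitting the matrix-vector product over the blocks $A$, $S$, $B$: since $v^\pi_i = 0$ on $A$ and $v^\pi_i = 1$ on $B$, we get $W_S v^\pi_S = b - \sum_{i \in B} w^i$. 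The key observation is that any $x \in \P$ also satisfies $Wx = b$, so subtracting gives $W_S(v^\pi_S - x_S) = \sum_{i \in A} w^i x_i - \sum_{i \in B} w^i (1 - x_i)$, where on the right-hand side I have moved the contributions of the coordinates in $A$ and $B$ over from the $Wx$ side (writing $x_i = x_i - 0$ for $i \in A$ and $x_i - 1 = -(1-x_i)$ for $i \in B$), and the coordinates in $S$ cancel against $W_S x_S$.

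From here the result follows by Cramer's rule. Since $\pi$ is a valid partition, $W_S$ is invertible, so $v^\pi_S - x_S = W_S^{-1}\bigl(\sum_{i \in A} w^i x_i - \sum_{i \in B} w^i(1-x_i)\bigr)$. Fix the coordinate $j \in S$; I want the component of this vector indexed by $j$. By linearity it suffices to compute, for a single column $w^i$ with $i \notin S$, the $j$-th coordinate of $W_S^{-1} w^i$. But $W_S^{-1} w^i$ is exactly the solution $y$ of $W_S y = w^i$, and by Cramer's rule its $j$-th coordinate equals $\det(W_S \text{ with column } j \text{ replaced by } w^i) / \det W_S = \det W_{S[j \to i]} / \det W_S$. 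Substituting this for each $i \in A$ (with weight $x_i$) and each $i \in B$ (with weight $-(1-x_i)$) yields precisely
\[
v^\pi_j - x_j = \sum_{i \in A} \frac{\det W_{S[j \to i]}}{\det W_S}\, x_i - \sum_{i \in B} \frac{\det W_{S[j \to i]}}{\det W_S}\, (1 - x_i),
\]
as claimed.

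There is essentially no hard step here; the only points requiring a little care are bookkeeping ones. First, one must be careful about the indexing convention for $W_{S[j \to i]}$: the column $w^i$ is inserted in the position that $w^j$ occupied (not appended), which is exactly the convention needed so that the Cramer's rule determinant lines up with the coordinate of $W_S^{-1} w^i$ corresponding to the index $j$ within $S$. Second, one should note that the identity uses both $W v^\pi = b$ and $W x = b$, i.e. it genuinely relies on $x$ lying in the affine subspace $\H$, not merely in the cube — this is why the hypothesis $x \in \P$ (rather than $x \in [0,1]^n$) is stated. I would present the proof in exactly the two moves above: (1) subtract the two affine equations and isolate $W_S(v^\pi_S - x_S)$; (2) invert $W_S$ and read off the $j$-th coordinate via Cramer's rule.
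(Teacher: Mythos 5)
Your proof is correct and is essentially the same argument as the paper's: both express $W_S v^\pi_S = b - \sum_{i\in B} w^i$, use $Wx = b$ to replace $b$, isolate $v^\pi_S - x_S$ by applying $W_S^{-1}$, and read off the $j$-th coordinate via Cramer's rule with the column-replacement convention for $W_{S[j\to i]}$. The only cosmetic difference is that you subtract the two linear systems before inverting, whereas the paper first writes $v_S = W_S^{-1}(b - \sum_{i\in B} w^i)$ and then substitutes the expansion of $b$; the content is identical.
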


\begin{proof}
We can write the $S$-components of $v$ as:
$$v_{S} = W_{S}^{-1}\left(b - \sum_{i \in B} w^i\right).$$
Since $x \in \P$ we know that
$$b = \sum_i w^i x_i = W_{S} x_{S} + \sum_{i \in A \cup B} w^i x_i.$$
Replacing this in the expression above we get:
$$v_{S} = x_{S} + W_{S}^{-1}\left(\sum_{i \in A} w^i \cdot x_i - \sum_{i \in B} w^i (1-x_i) \right)$$
Since $j \in S$ we can look at the $j$th component of the expression above. Observe that the $j$-th component of $W_{S}^{-1} w^i$ can be obtained via Cramer's rule and is given by
$$ [W_{S}^{-1} w^i]_j = \frac{\det W_{S[j\rightarrow i]}}{\det W_{S}}. $$
\end{proof}

The previous lemma allows us to rewrite the last term in \Cref{eq:coord_factory} as follows:
\begin{equation}\label{eq:c0} 
\begin{aligned}
\sum_{\pi \in \U \mid j \in S_\pi}  (
v_j^\pi -x_j) P_{\pi}(x) = & \sum_{i \neq j}  \sum_{\stackrel{\pi \in \U \mid j \in S_\pi,}{i \in A_\pi}} \frac{\det W_{S_\pi[ j \rightarrow i]}}{\det W_{S_\pi}} \cdot x_i P_\pi(x) \\ & - \sum_{i \neq j} \sum_{\stackrel{\pi \in \U \mid j \in S_\pi,}{ i \in B_\pi}}  \frac{\det W_{S_\pi[ j \rightarrow i]}}{\det W_{S_\pi}} \cdot (1-x_i)  P_\pi(x) 
\end{aligned}
\end{equation}

As before we will rewrite each of these terms as sums of partitions over $[n]\setminus j$. Starting with the first term, observe that for a fixed $i \neq j$ we can establish a bijective mapping between $\{\pi \in \U \mid j \in S_\pi \text{ and } i \in A_\pi\}$ and the set:
$$\U_A^i \triangleq \{ (A',S',B') \in \Part_{[n]\setminus j,k} \mid i \in S',~ \det W_{S'[i \rightarrow j]} \neq 0, \text{ and } W_{S'[i \rightarrow j]}^{-1}(b-  \textstyle\sum_{i\in B'} w^i) \in (0,1)^k \}$$
by mapping $\pi = (A,S,B)$ to $\pi' = (A \setminus i, S \cup i \setminus j, B)$.  We now note that:
$$x_i P_{\pi}(x) =  \frac{\abs{\det W_S}}{\abs{\det W_{S[j \rightarrow i]}}} x_j (1-x_j) P_{\pi'}(x).$$
If we define $\sigma_{ij}(S)$ for a set $S$ of size $k$ with $i \in S$ and $j \notin S$ as
$$\sigma_{ij}(S) \triangleq \sign \left( \frac{\det W_{S[ i \rightarrow j]}}{\det W_{S}} \right) \in \{-1, 0, +1\}$$
then we can rewrite the first term in \Cref{eq:c0} in the form
\begin{equation}\label{eq:ca}
   \sum_{\pi \in \U \mid j \in S_\pi, i \in A_\pi} \frac{\det W_{S_\pi[ j \rightarrow i]}}{\det W_{S_\pi}} \cdot x_i P_v(x)  = x_j (1-x_j)  \sum_{\pi' \in \U_A^i} \sigma_{ij}(S_{\pi'}) \cdot P_{\pi'}(x)
\end{equation}

Similarly, for the second term of \Cref{eq:c0} we can establish a bijective mapping between $\{ \pi \in \U \mid j \in S_\pi, i \in B_\pi\}$ and
$$\U_B^i \triangleq \{ (A',S',B') \in \Part_{[n]\setminus j,k}  \mid i \in S',~ \det W_{S'[i \rightarrow j]} \neq 0, \text{ and } W_{S'[i \rightarrow j]}^{-1}(b- w^j - \textstyle\sum_{i\in B'} w^i) \in (0,1)^k \}$$
by mapping $\pi = (A,S,B)$ to $\pi' = (A, S \cup i \setminus j, B  \setminus i)$.  Again, note that:
$$(1-x_i) P_{\pi}(x) =  \frac{\abs{\det W_S}}{\abs{\det W_{S[j \rightarrow i]}}} x_j (1-x_j) P_{\pi'}(x)$$
which allows us to write:
\begin{equation}\label{eq:cb}
 - \sum_{\pi| j \in S_\pi, i \in B_\pi} \frac{\det W_{S_\pi[ j \rightarrow i]}}{\det W_{S_\pi}} \cdot (1-x_i) P_\pi(x)  = x_j (1-x_j)  \sum_{\pi' \in \U_B^i} - \sigma_{ij}(S_{\pi'}) P_{\pi'}(x) 
\end{equation}

\paragraph*{Combining the terms} We have now rewritten all the terms of \Cref{eq:coord_factory} as sums of polynomials defined over partitions of $[n]\setminus j$. Combining Equations~\eqref{eq:a}, \eqref{eq:b}, \eqref{eq:c0}, \eqref{eq:ca} and \eqref{eq:cb}, we can rewrite  \eqref{eq:coord_factory} as
$$
 \sum_{\pi' \in \U_A^j} - P_{\pi'}(x) + \sum_{\pi' \in \U_B^j}  P_{\pi'}(x) + \sum_{i \neq j} \left[ \sum_{\pi' \in \U_A^i} \sigma_{ij}(S_{\pi'}) \cdot P_{\pi'}(x) -  \sum_{\pi' \in \U_B^i}  \sigma_{ij}(S_{\pi'}) \cdot P_{\pi'}(x) \right] = 0
$$
after cancelling all $x_j(1-x_j)$ terms. Our main goal is to prove this identity. It is useful to group together all partitions $\pi'$ for which $S_{\pi'}$ is the same. We will then show the following lemma:

\begin{lemma}\label{lemma:identity}
For any fixed $S' \subseteq [n] \setminus j$ with $\abs{S'} = k$ the following is an identity:
\begin{equation}\label{eq:identity}
 \sum_{\stackrel{\pi' \in \U_A^j}{S_{\pi'}=S'}} - P_{\pi'}(x) + \sum_{\stackrel{\pi' \in \U_B^j}{S_{\pi'}=S'}}  P_{\pi'}(x) + \sum_{i \neq j}  \sum_{\stackrel{\pi' \in \U_A^i}{S_{\pi'}=S'}} \sigma_{ij}(S') P_{\pi'}(x)  - \sum_{i \neq j} \sum_{\stackrel{\pi' \in \U_B^i}{S_{\pi'}=S'}}  \sigma_{ij}(S') P_{\pi'}(x) = 0.
\end{equation}
\end{lemma}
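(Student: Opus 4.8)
The plan is to prove \Cref{eq:identity} for a fixed $S' \subseteq [n]\setminus j$ with $|S'|=k$ by reorganizing it into a statement purely about the geometry of one $k\times k$ linear system, after factoring out the common Bernstein factor. Observe that every polynomial $P_{\pi'}(x)$ appearing in \Cref{eq:identity} has $S_{\pi'}$ equal either to $S'$ or to $S'\cup i \setminus j$ for some $i\neq j$; in the latter case one can use the identity $x_i P_{\pi'}(x) = \frac{|\det W_{S'\cup i\setminus j}|}{|\det W_{S'}|}\,x_j(1-x_j)\,P_{\pi''}(x)$-type relation (already used to derive \eqref{eq:ca}, \eqref{eq:cb}) in reverse to re-express everything with the same ``support'' set of coordinates. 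Concretely, every term in \Cref{eq:identity}, after pulling out the common monomial $M(x) \triangleq \prod_{i\in S'} x_i(1-x_i)$ and $|\det W_{S'}|$, becomes $\pm 1$ times a product of the form $\prod_{i\in A'}(1-x_i)\prod_{i\in B'}x_i$ over the partition $([n]\setminus j)\setminus S' = A'\sqcup B'$ — i.e. a single vertex of the sub-cube indexed by $[n]\setminus(S'\cup j)$. So after dividing by $M(x)|\det W_{S'}|$, \Cref{eq:identity} becomes a $\pm 1$-weighted sum over a collection of such monomials, and the claim is that it vanishes identically.

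The key idea is to recognize this $\pm1$-weighted sum as the difference of two zonotope tilings, as flagged in the introduction. Fix $S'$ and consider the affine-linear map that, given an assignment of each coordinate in $R \triangleq [n]\setminus(S'\cup j)$ to $\{0,1\}$ (a vertex of the cube $[0,1]^R$) plus a choice of $j$ going to $0$ or $1$, produces the point $W_{S'}^{-1}(b - \sum_{i\in B'} w^i - [j\in B']w^j)$ in $\R^k$; the partition $\pi'$ is ``valid'' precisely when this point lands in $(0,1)^k$. The sign $\sigma_{ij}(S')$ records how the orientation changes when swapping $j$ for $i$ in the basis. The plan is: (i) interpret the four groups of terms in \Cref{eq:identity} — the $\U_A^j$, $\U_B^j$ terms and the $\sum_{i\neq j}$ over $\U_A^i$, $\U_B^i$ — as the two ways of slicing a single $(|R|+1)$-dimensional zonotope (generated by the columns $\{w^i : i \in R\cup\{j\}\}$ pulled back through $W_{S'}^{-1}$, sitting over the box $(0,1)^k$) into cells, once ``resolving'' the $j$-direction first and once resolving it last; (ii) observe that the indicator of the box $(0,1)^k$ expanded via inclusion–exclusion over which generators are ``inside'' vs ``outside'' gives exactly the signed monomial sum on each side; (iii) conclude the two expansions are equal, hence their difference — which is \Cref{eq:identity} — is zero.

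Somewhat more self-containedly, one can avoid explicit zonotope language and argue combinatorially: the signed sum $\sum_{\pi'} \pm \prod_{i\in A'}(1-x_i)\prod_{i\in B'}x_i$ equals $\sum_{\text{valid }\pi'} \pm \Pr[\text{cube-vertex sampled} = (A',B')]$-style expression, and the claim is equivalent to a telescoping/cancellation statement. The natural approach is induction on $|R|$: pick any $i_0 \in R$ and split each sum according to whether $i_0 \in A'$ or $i_0 \in B'$; the terms where $S_{\pi'} = S'$ split cleanly, while the terms where $S_{\pi'} = S'\cup i\setminus j$ need the Cramer's-rule cofactor expansion along the column indexed by $i_0$ (or by $j$) to relate $\det W_{S'[j\to i]}$ across the two cases. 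Carefully bookkeeping the signs $\sigma_{ij}$, $\sigma_{i_0 j}$ and the alternating signs from cofactor expansion should collapse \Cref{eq:identity} for $R$ to two instances for $R\setminus i_0$, which vanish by the inductive hypothesis; the base case $|R|=0$ is the statement that $-1 + 1 + \sum_{i\neq j}(\text{at most one }\sigma_{ij})$ collapses, handled directly via a single Cramer's-rule identity $\sum_{i} \frac{\det W_{S'[j\to i]}}{\det W_{S'}} w^i_{(\cdot)} = w^j_{(\cdot)}$ type relation restricted to the relevant row.

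The main obstacle I expect is the sign bookkeeping: tracking how $\sigma_{ij}(S_{\pi'})$, the $\pm1$ attached to each monomial from $|\det W_S|/|\det W_{S[j\to i]}|$ being an \emph{absolute} value, and the alternating signs from Cramer's rule / cofactor expansion all interact, and checking that the ``valid partition'' membership conditions ($\U_A^i$ etc.) are exactly matched up by the bijections so that no boundary terms are dropped. The zonotope-tiling viewpoint is what makes the cancellation conceptually transparent — two tilings of the same region have the same total (signed) volume decomposition — so the real work is setting up that dictionary precisely; I would lean on it to organize the proof and relegate the sign verification to a lemma about orientations of simplicial/cubical cells under basis exchange. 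I would also double-check the genericity hypothesis is used exactly where needed: it guarantees no cube-vertex image lands on the boundary of $(0,1)^k$, so the inclusion–exclusion (equivalently, the two tilings) have no degenerate shared cells to worry about.
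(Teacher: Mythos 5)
Your high-level plan matches the paper's: factor out the common Bernstein monomial and the scalar $|\det W_{S'}|$, reduce the identity to a $\pm1$-weighted sum of cube-monomials indexed by partitions of $R=[n]\setminus(S'\cup\{j\})$, parametrize those partitions by points $q(\pi') = b-\sum_{t\in B'}w^t$ in $\R^k$ (equivalently $W_{S'}^{-1}q(\pi')$), interpret each summand's membership condition as ``$q(\pi')$ lies in a certain open parallelotope,'' and cancel term-by-term using the fact that genericity keeps the $q(\pi')$ off the boundaries. You also correctly flag where genericity is needed. So the skeleton is right.

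The gap is that you treat the key step as given. You write that ``two tilings of the same region have the same total (signed) volume decomposition,'' but the entire content of the lemma is \emph{showing} that the positive-sign regions (namely $Z_B^j$, plus $Z_A^i$ when $\sigma_{ij}(S')=+1$ and $Z_B^i$ when $\sigma_{ij}(S')=-1$) are pairwise disjoint and cover $\Zon(w^1,\dots,w^k,w^j)$, and likewise for the negative-sign regions. This is a nontrivial claim that depends crucially on the sign assignment via $\sigma_{ij}(S')$: with a different sign rule the parallelotopes would overlap. The paper proves it as a separate ``Tiling Lemma'' (Lemma~\ref{lemma:tiling}), whose disjointness half uses an orthogonal decomposition of $w^i$ and $w^t$ relative to the hyperplane spanned by the other generators (Lemmas~\ref{lemma:disjoint_pm} and~\ref{lemma:disjoint_pp}), and whose coverage half uses the volume formula $\Vol(\Zon(\{w^r\}_{r\in I})) = \sum_{T\subseteq I, |T|=k}|\det W_T|$. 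Your proposal never establishes either of these; invoking ``two tilings of the same region'' is circular without them. Once the Tiling Lemma is in hand, the conclusion is exactly the simple cancellation you describe: each $q(\pi')$ is either outside the big zonotope (coefficient $0$) or in exactly one $+$ cell and one $-$ cell.

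Two smaller issues. First, the opening paragraph misreads the lemma: all four sums in \eqref{eq:identity} already constrain $S_{\pi'}=S'$, so there is no ``$S'\cup i\setminus j$ case'' to reverse; the rewriting via \eqref{eq:ca}--\eqref{eq:cb} has already been done before the lemma is stated. Second, the zonotope you need is generated by the $k+1$ columns $\{w^i : i\in S'\cup\{j\}\}$ and lives in $\R^k$; it is not an ``$(|R|+1)$-dimensional zonotope generated by $\{w^i: i\in R\cup\{j\}\}$.'' The set $R$ indexes which test-points $q(\pi')$ to check, not the generators of the zonotope. Your alternative inductive-plus-Cramer route is genuinely different from the paper's argument, but as written it is entirely speculative; if you pursue it you will need a precise statement of the cofactor identity that plays the role of the Tiling Lemma, and the sign bookkeeping you flag as the ``main obstacle'' is exactly where that argument would succeed or fail.
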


We will prove Lemma \ref{lemma:identity} by showing that each term $P_{\pi'}(x)$ appears twice in the expression, once with a positive sign and one with a negative sign. One nice aspect of focusing on a fixed $S'$ is that the magnitude of all leading coefficients in the Bernstein monomials are the same, so we need only worry about signs of these coefficients.

Interestingly, the proof that will follow will be geometric and will be based on decompositions of zonotopes.

\subsubsection{Partitions and Zonotopes}

A \emph{zonotope} is a polytope formed by the Minkowski sum of line segments. In other words, given vectors $w^1,\hdots, w^k$ we will define their associated zonotope as:
$$\Zon(w^1, \hdots, w^k) \triangleq \{ w^1 x_1 + \hdots + w^k x_k \mid x_i \in [0,1] \} $$
and the open zonotope as $\Zon^0(w^1, \hdots, w^k)$ as the interior of $\Zon(w^1, \hdots, w^k)$. Whenever $\det [w^1, \hdots, w^k] \neq 0$, this is given by:
$$\Zon^0(w^1, \hdots, w^k) \triangleq \{ w^1 x_1 + \hdots + w^k x_k| x_i \in (0,1) \} $$
Otherwise $\Zon^0(w^1, \hdots, w^k)$ is empty. Note that since each column vector $w^i \in \R^k$, these zonotopes are subsets of $\R^k$.

We can now rewrite the sets $\U_A^j$, $\U_B^j$, $\U_A^i$ and $\U_B^i$ in terms of membership in certain zonotopes. Since we are focusing on $S'$, let us focus on only the partitions that have $S'$. Let:
$$\Part_{[n]\setminus j, k}(S') \triangleq  \{ \pi' \in \Part_{[n]\setminus j, k}| S_{\pi'} = S' \} $$
$$\U_A^j(S') \triangleq \U_A^j \cap \Part_{[n]\setminus j, k}(S')$$
and similarly for the other sets. We will refer to the columns in $S'$ as $w^1, \hdots, w^k$ (i.e., $W_{S'} = [w^1, \hdots, w^k]$). Additionally, we will assume $\det W_{S'} \neq 0$ (otherwise Lemma \ref{lemma:identity} is trivial). Finally,  associate with each partition $\pi'$ the following vector:
$$q(\pi') \triangleq b - \sum_{t \in B_{\pi'}} w^t$$

We can now write our sets in terms of membership of $q(\pi')$ in a corresponding zonotope. In particular, we have that:
\begin{equation}\label{eq:ua}
\begin{aligned}
& \U_A^j(S') = \{ \pi' \in \Part_{[n]\setminus j, k}(S') \mid  q(\pi') \in Z_A^j\}, & & Z_A^j \triangleq \Zon^0(w^1,\hdots, w^k)  \\
& \U_B^j(S') = \{ \pi' \in \Part_{[n]\setminus j, k}(S') \mid  q(\pi') \in Z_B^j\}, & & Z_B^j \triangleq w^j + \Zon^0(w^1,\hdots, w^k) 
\end{aligned}
\end{equation}

For $\U_A^i(S')$ and $\U_B^i(S')$ we have the condition that $i \in S'$. Hence $\U_A^i(S') = \U_B^i(S') = \emptyset$ if $i \notin S'$ or $W_{S'[i \rightarrow j]} = 0$ and otherwise:
\begin{equation}\label{eq:ua2}
\begin{aligned}
& \U_A^i(S') = \{ \pi' \in \Part_{[n]\setminus j, k}(S') \mid  q(\pi') \in Z_A^i\}, & & Z_A^i \triangleq \Zon^0(w^1,\hdots, w^{i-1}, w^j, w^{i+1}, \hdots w^k)  \\
& \U_B^i(S') = \{ \pi' \in \Part_{[n]\setminus j, k}(S') \mid  q(\pi') \in Z_B^i\}, & & Z_B^i \triangleq w^i + \Zon^0(w^1,\hdots, w^{i-1}, w^j, w^{i+1}, \hdots w^k) 
\end{aligned}
\end{equation}

When we loop over all partitions in $\pi' \in \Part_{[n]\setminus j, k}(S')$, we will observe that its corresponding $q(\pi')$ either belongs to none of these zonotopes or to exactly two. In the latter case, we will show that it gets assigned opposite signs. To build intuition, we start with the case where $k=2$, where we can geometrically visualize the proof.

\subsubsection{Geometric illustration of Lemma \ref{lemma:identity} for $k=2$}\label{sec:illustration}
Assume that $S' = \{1,2\}$ with $j \notin S'$. The partitions $\U_A^i(S')$ and $\U_B^i(S')$ for $i\neq 1,2$ are empty and can be ignored. We are then left with the following terms:
\begin{equation}
 -\sum_{\U_A^j(S')} P_{\pi'}(x)
 + \sum_{\U_B^j(S')} P_{\pi'}(x)
 + \sigma_1 \sum_{\U_A^1(S')} P_{\pi'}(x)
 + \sigma_2 \sum_{\U_A^2(S')} P_{\pi'}(x)
 - \sigma_1 \sum_{\U_B^1(S')} P_{\pi'}(x)
 - \sigma_2 \sum_{\U_B^2(S')} P_{\pi'}(x) = 0.
\end{equation}
Here we abbreviate $\sigma_{ij}(S')$ as $\sigma_i$ since $j$ and $S'$ are fixed. For now, assume that $\det[w^j w^2]$ and $\det[w^1 w^j]$ are non-zero such that $\sigma_1, \sigma_2 \in \{-1,+1\}$. We will consider 4 cases depending on the sign patterns of $(\sigma_1, \sigma_2)$.

We now can go over all partitions $\pi' \in \Part_{[n]\setminus j, k}(S')$ and assign them a positive sign whenever $q(\pi')$ falls in a region with positive sign or a negative sign if they fall in a region with negative sign. The sign will depend on the sign patterns of $\sigma_1, \sigma_2$. For $k=2$ it is instructive to look at each of the four sign patterns.

\paragraph*{Sign pattern $\boldsymbol{\sigma_1 = \sigma_2 = +1}$.}
  We have:
$$\sigma_1 = \sign\left( \frac{\det [w^j w^2]}{\det [w^1 w^2]} \right) \qquad \sigma_2 = \sign\left( \frac{\det [w^1 w^j]}{\det [w^1 w^2]} \right)$$
Geometrically, this means that the the sign of the angle from $w^1$ to $w^2$ (if the angle is in $(-\pi, \pi]$) is the same as the sign of the angle from $w^j$ to $w^2$ and the sign of the angle from $w^1$ to $w^j$. Figure \ref{fig:2d_pp} shows a configuration of such vectors.
\vspace{-3mm}
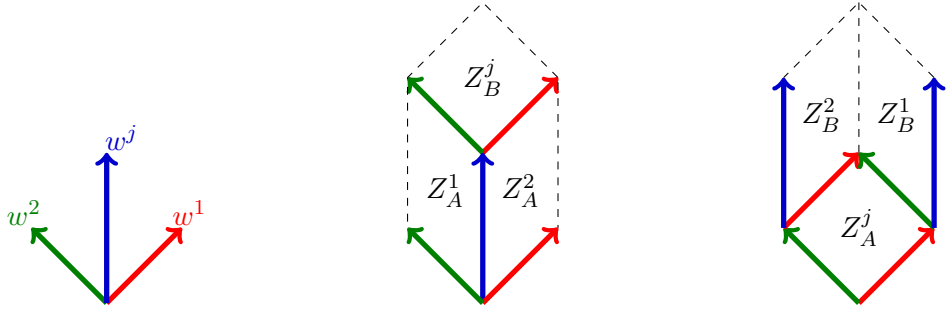
\begin{figure}[h]
\centering
\begin{tikzpicture}[scale=0.8]
  \draw [->, line width=2pt, color=red] (0,0) -- (1,1);
  \node at (1.1, 1.2) {\color{red} $w^1$};
  \draw [->, line width=2pt, color=black!50!green] (0,0) -- (-1,1);
  \node at (-1.1, 1.2) {\color{black!50!green} $w^2$};
\draw [->, line width=2pt, color=black!20!blue] (0,0) -- (0,2);
\node at (.2, 2.2) {\color{black!20!blue} $w^j$};

\begin{scope}[xshift=5cm]
  \draw [->, line width=2pt, color=red] (0,2) -- (1,3);
  \draw [->, line width=2pt, color=black!50!green] (0,2) -- (-1,3);
  \draw [dashed] (1,3) -- (0,4) -- (-1,3) -- (-1,1) -- (0,0) -- (1,1) -- cycle;
  \draw [->, line width=2pt, color=black!20!blue] (0,0) -- (0,2);
\draw [->, line width=2pt, color=red] (0,0) -- (1,1);
  \draw [->, line width=2pt, color=black!50!green] (0,0) -- (-1,1);
  \node at (0,3) {$Z_B^j$};
  \node at (-.5,1.5) {$Z_A^1$};
  \node at (.5,1.5) {$Z_A^2$};
\end{scope}

\begin{scope}[xshift=10cm]
  \draw [->, line width=2pt, color=red] (0,0) -- (1,1);
  \draw [->, line width=2pt, color=black!50!green] (0,0) -- (-1,1);
   \draw [->, line width=2pt, color=red] (-1,1) -- (0,2);
  \draw [->, line width=2pt, color=black!50!green] (1,1) -- (0,2);
  \draw [->, line width=2pt, color=black!20!blue] (1,1) -- (1,3);
\draw [->, line width=2pt, color=black!20!blue] (-1,1) -- (-1,3);
  \draw [dashed] (1,3) -- (0,4) -- (-1,3);
  \draw [dashed] (0,2) -- (0,4);
\node at (0,1) {$Z_A^j$};
  \node at (-.5,2.5) {$Z_B^2$};
  \node at (.5,2.5) {$Z_B^1$};
\end{scope}

\end{tikzpicture}
\caption{$\sigma_1 = \sigma_2 = +1$}
\label{fig:2d_pp}
\end{figure}

Under this sign pattern, the signs attributed to each region are the following:
$$Z_A^j (-) \quad Z_A^1 (+) \quad Z_A^2 (+) \quad 
Z_B^j (+) \quad Z_B^1 (-) \quad Z_B^2 (-) $$
It is simple to see in the picture that the pairwise intersection between the positive regions is disjoint. The same is true for the negative regions. Finally, their union generates the same set. In fact, both are tilings of the zonotope $\Zon(w^1, w^2, w^j)$
by smaller parallelograms formed by removing one of the vectors.

Before we proceed to other sign patterns, observe that it is not quite true that $Z_B^j \cup Z_A^1 \cup Z_A^2 = Z_A^j \cup Z_B^1 \cup Z_B^2$. The precise statement is that the union of their (topological) closures is the same (where $\bar X$ is the closure of $X$):
$$\bar Z_B^j \cup \bar Z_A^1 \cup \bar Z_A^2 = \bar Z_A^j \cup \bar Z_B^1 \cup \bar Z_B^2$$
This is, however, enough for our purposes since the $q(\pi)$ can never be in the boundary $Z^i_A$ or $Z^i_B$ due to our \emph{genericity} condition. To see this, observe that we can rewrite the definition of genericity $W_{S_\pi}^{-1}(b- \sum_{i \in B_\pi} w^i) \in (0,1)^k$ equivalently as $q(\pi) \in \Zon_0^j(w^1, \hdots, w^k)$.

\paragraph*{Sign pattern $\boldsymbol{\sigma_1 = +1, \sigma_2 = -1}$.} In Figure \ref{fig:2d_pm} we depict an example configuration of vectors $w^1$, $w^2$ and $w^j$ satisfying this sign pattern. Based on this sign pattern, the regions get assigned the signs:
$$Z_A^j (-) \quad Z_A^1 (-) \quad Z_A^2 (+) \quad 
Z_B^j (+) \quad Z_B^1 (+) \quad Z_B^2 (-) $$
Again we (visually) observe the same phenomenon:

\begin{figure}[h]
\centering
\begin{tikzpicture}[scale=0.8]
  \draw [->, line width=2pt, color=red] (0,0) -- (1,1);
  \node at (1.1, 1.2) {\color{red} $w^1$};
  \draw [->, line width=2pt, color=black!50!green] (0,0) -- (-1,1);
  \node at (-1.1, 1.2) {\color{black!50!green} $w^2$};
\draw [->, line width=2pt, color=black!20!blue] (0,0) -- (-2,0);
\node at (-2.2,.2) {\color{black!20!blue} $w^j$};

\begin{scope}[xshift=5cm]
  \draw [->, line width=2pt, color=red] (-2,0) -- (-1,1);
  \draw [->, line width=2pt, color=black!50!green] (-2,0) -- (-3,1);
  \draw [->, line width=2pt, color=black!20!blue] (0,0) -- (-2,0);
  \draw [dashed] (-1,1) -- (-2,2) -- (-3,1);
  \node at (-2,1) {$Z_B^j$};
  \draw [->, line width=2pt, color=red] (0,0) -- (1,1);
  \draw [->, line width=2pt, color=black!50!green] (1,1) -- (0,2);
  \draw [->, line width=2pt, color=black!20!blue] (1,1) -- (-1,1);
  \draw [dashed] (0,2) -- (-2,2);
  \node at (-.5,1.5) {$Z_B^1$};
  \node at (-.5,.5) {$Z_A^2$};
\end{scope}

\begin{scope}[xshift=10cm]
  \draw [->, line width=2pt, color=red] (0,0) -- (1,1);
  \draw [->, line width=2pt, color=black!50!green] (0,0) -- (-1,1);
  \draw [dashed] (1,1) -- (0,2) -- (-1,1);
  \node at (0,1) {$Z_A^j$};
    \draw [->, line width=2pt, color=black!20!blue] (0,0) -- (-2,0);
  \draw [dashed] (-2,0) -- (-3,1) -- (-1,1);
  \node at (-1.5,.5) {$Z_A^1$};
  \draw [->, line width=2pt, color=red] (-1,1) -- (0,2);
    \draw [->, line width=2pt, color=black!20!blue] (-1,1) -- (-3,1);
  \draw [dashed] (-3,1) -- (-2,2) -- (0,2);
  \node at (-1.5,1.5) {$Z_B^2$};
\end{scope}
\end{tikzpicture}
\caption{$\sigma_1 = +1,  \sigma_2 = -1$}
\label{fig:2d_pm}
\end{figure}
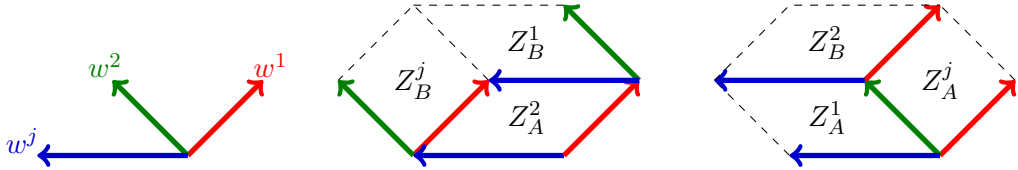
\vspace{-3mm}
\paragraph*{Sign pattern $\boldsymbol{\sigma_1 = -1, \sigma_2 = +1}$.} In Figure \ref{fig:2d_mp} we depict an example configuration of vectors $w^1$, $w^2$ and $w^j$ satisfying this sign pattern. Based on this sign pattern, the regions get assigned the signs:
$$Z_A^j (-) \quad Z_A^1 (+) \quad Z_A^2 (-) \quad 
Z_B^j (+) \quad Z_B^1 (-) \quad Z_B^2 (+) $$

\begin{figure}[h]
\centering
\begin{tikzpicture}[scale=0.8]
  \draw [->, line width=2pt, color=red] (0,0) -- (1,1);
  \node at (1.1, 1.2) {\color{red} $w^1$};
  \draw [->, line width=2pt, color=black!50!green] (0,0) -- (-1,1);
  \node at (-1.1, 1.2) {\color{black!50!green} $w^2$};
\draw [->, line width=2pt, color=black!20!blue] (0,0) -- (2,0);
\node at (2.2,.2) {\color{black!20!blue} $w^j$};
\begin{scope}[xshift=5cm]
  \draw [->, line width=2pt, color=red] (2,0) -- (3,1);
  \draw [->, line width=2pt, color=black!50!green] (2,0) -- (1,1);
  \draw [->, line width=2pt, color=black!20!blue] (0,0) -- (2,0);
  \draw [dashed] (1,1) -- (2,2) -- (3,1);
  \node at (2,1) {$Z^j_B$};
   \draw [->, line width=2pt, color=black!50!green] (0,0) -- (-1,1);
  \draw [dashed] (-1,1) -- (1,1);
  \node at (.5,.5) {$Z^1_A$};
   \draw [->, line width=2pt, color=black!20!blue] (-1,1) -- (1,1);
   \draw [->, line width=2pt, color=red] (-1,1) -- (0,2);
   \draw [dashed] (0,2) -- (2,2);
  \node at (.5,1.5) {$Z_B^2$};
\end{scope}
\begin{scope}[xshift=10cm]
  \draw [->, line width=2pt, color=red] (0,0) -- (1,1);
  \draw [->, line width=2pt, color=black!50!green] (0,0) -- (-1,1);
  \draw [dashed] (1,1) -- (0,2) -- (-1,1);
  \node at (0,1) {$Z^j_A$};
  \draw [->, line width=2pt, color=black!20!blue] (0,0) -- (2,0);
  \draw [dashed] (2,0) -- (3,1) -- (1,1);
  \node at (1.5,.5) {$Z_A^2$};
  \draw [->, line width=2pt, color=black!20!blue] (1,1) -- (3,1);
  \draw [->, line width=2pt, color=black!50!green] (1,1) -- (0,2);
  \draw [dashed] (3,1) -- (2,2) -- (0,2);
  \node at (1.5,1.5) {$Z_B^1$};
\end{scope}
\end{tikzpicture}
\caption{$\sigma_1 = -1,  \sigma_2 = +1$}
\label{fig:2d_mp}
\end{figure}
\vspace{-4mm}
\noindent\textbf{Sign pattern $\boldsymbol{\sigma_1 = -1, \sigma_2 = -1}$.} In Figure \ref{fig:2d_mm} we depict an example configuration of vectors $w^1$, $w^2$ and $w^j$ satisfying this sign pattern. Based on this pattern, the regions get assigned the following signs:
$$Z_A^j (-) \quad Z_A^1 (-) \quad Z_A^2 (-) \quad 
Z_B^j (+) \quad Z_B^1 (+) \quad Z_B^2 (+) $$
\vspace{-4mm}
\begin{figure}[h]
\centering
\begin{tikzpicture}[scale=0.8]
  \draw [->, line width=2pt, color=red] (0,0) -- (1,1);
  \node at (1.1, 1.2) {\color{red} $w^1$};
  \draw [->, line width=2pt, color=black!50!green] (0,0) -- (-1,1);
  \node at (-1.1, 1.2) {\color{black!50!green} $w^2$};
\draw [->, line width=2pt, color=black!20!blue] (0,0) -- (0,-2);
\node at (.2,-2.2) {\color{black!20!blue} $w^j$};

\begin{scope}[xshift=5cm]
  \draw [dashed] (0,-2) -- (1,-1) -- (1,1) -- (0,2) -- (-1,1) -- (-1,1) -- (-1,-1) -- cycle;
  \draw [dashed] (-1,-1) -- (0,0) -- (1,-1);
  \draw [dashed] (0,0) -- (0,2);
  \draw [->, line width=2pt, color=red] (0,-2) -- (1,-1);
  \draw [->, line width=2pt, color=red] (-1,1) -- (0,2);
  \draw [->, line width=2pt, color=black!50!green] (1,1) -- (0,2);
  \draw [->, line width=2pt, color=black!50!green] (0,-2) -- (-1,-1);

   \draw [->, line width=2pt, color=black!20!blue] (1,1) -- (1,-1);
    \draw [->, line width=2pt, color=black!20!blue] (-1,1) -- (-1,-1);

   \node at (0,-1) {$Z_B^j$};
   \node at (.5,.5) {$Z_B^1$};
   \node at (-.5,.5) {$Z_B^2$};
\end{scope}
\vspace{-5mm}
\begin{scope}[xshift=10cm]
  \draw [dashed] (0,-2) -- (1,-1) -- (1,1) -- (0,2) -- (-1,1) -- (-1,1) -- (-1,-1) -- cycle;
  \draw [dashed] (0,-2) -- (0,0) -- (1,1);
  \draw [dashed] (0,0) -- (-1,1);
  \draw [->, line width=2pt, color=red] (0,0) -- (1,1);
  \draw [->, line width=2pt, color=black!50!green] (0,0) -- (-1,1);
\draw [->, line width=2pt, color=black!20!blue] (0,0) -- (0,-2);
  \node at (0,1) {$Z_A^j$};
  \node at (-.5,-.5) {$Z_A^1$};
  \node at (.5,-.5) {$Z_A^2$};
\end{scope}
\end{tikzpicture}
\caption{$\sigma_1 = -1,  \sigma_2 = -1$}
\label{fig:2d_mm}
\end{figure}

\noindent\textbf{Patterns with $\boldsymbol{\sigma_1 = 0$ or $\sigma_2 = 0}$.} If either $\det[w^j w^2] = 0$ ($w^j$ is parallel to $w^2$) or $\det[w^1 w^j] = 0$ ($w^j$ is parallel to $w^1$) then we can recover these patterns as limits of the previous patterns. Note that both determinants can't be simultaneously zero (unless $w^j = 0$) since we assume $w^1$ and $w^2$ are not parallel. We depict what the pattern $\sigma_1 = +1$, $\sigma_2 = 0$ looks like in Figure \ref{fig:2d_pz}. The sign patterns become:
$$Z_A^j (-) \quad Z_A^1 (0) \quad Z_A^2 (+) \quad 
Z_B^j (+) \quad Z_B^1 (0) \quad Z_B^2 (-) $$
The regions $Z_A^1$ and $Z_B^1$ disappear since $\det[w^j w^2]=0$. We can see that even in these degenerate cases, we still obtain a tiling of the zonotope $\Zon(w^1, w^2, w^j)$. The remaining cases are analogous.
\vspace{-2mm}
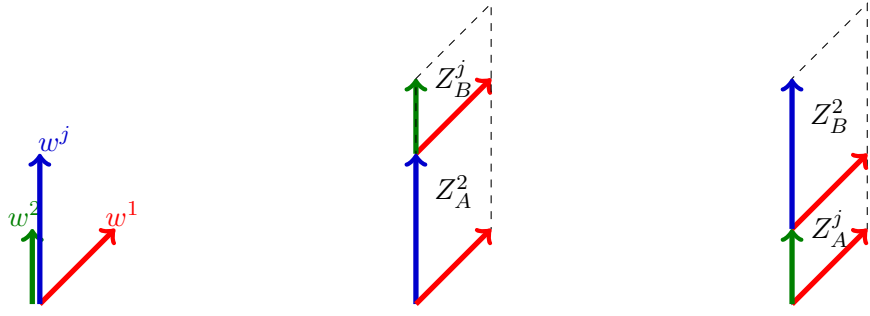
\begin{figure}[h]
\centering
\begin{tikzpicture}[scale=0.9]
\begin{scope}[]
  \draw [->, line width=2pt, color=red] (0,0) -- (1,1);
  \node at (1.1, 1.2) {\color{red} $w^1$};
  \draw [->, line width=2pt, color=black!50!green] (-.1,0) -- (-.1,1);
  \node at (-.2, 1.2) {\color{black!50!green} $w^2$};
\draw [->, line width=2pt, color=black!20!blue] (0,0) -- (0,2);
\node at (.2, 2.2) {\color{black!20!blue} $w^j$};
\end{scope}

\begin{scope}[xshift=5cm]
  \draw [->, line width=2pt, color=red] (0,2) -- (1,3);
  \draw [->, line width=2pt, color=black!50!green] (0,2) -- (0,3);
  \draw [dashed] (1,3) -- (1,4) -- (0,3) -- (0,0) -- (1,1) -- cycle;
  \draw [->, line width=2pt, color=black!20!blue] (0,0) -- (0,2);
\draw [->, line width=2pt, color=red] (0,0) -- (1,1);
  \node at (.5,3) {$Z_B^j$};
  \node at (.5,1.5) {$Z_A^2$};
\end{scope}

\begin{scope}[xshift=10cm]
  \draw [dashed] (1,2) -- (1,4) -- (0,3) -- (0,0) -- (1,1) -- cycle;
  \draw [->, line width=2pt, color=red] (0,0) -- (1,1);
  \draw [->, line width=2pt, color=black!50!green] (0,0) -- (0,1);
   \draw [->, line width=2pt, color=red] (0,1) -- (1,2);
\draw [->, line width=2pt, color=black!20!blue] (0,1) -- (0,3);
\node at (.5,1) {$Z_A^j$};
  \node at (.5,2.5) {$Z_B^2$};
\end{scope}

\end{tikzpicture}
\caption{$\sigma_1 = +1,  \sigma_2 = 0$}
\label{fig:2d_pz}
\end{figure}

\subsubsection{Proof of Lemma \ref{lemma:identity} for general $k$.} 
We now apply the geometric intuition developed in the last section to prove the general case of Lemma \ref{lemma:identity}. The main step will be to prove the following geometric lemma. 

\begin{lemma}[Tiling]\label{lemma:tiling}
The zonotopes with a positive sign in Lemma \ref{lemma:identity} have disjoint interior. Moreover, the union of their closures is the zonotope $\Zon(w^1, \hdots, w^k, w^j)$. The same is true for all the regions with a negative sign.
\end{lemma}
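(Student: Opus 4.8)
The plan is to exhibit both the positive-sign zonotopes and the negative-sign zonotopes as two instances of a single general decomposition of the big zonotope $\Zon(w^1,\hdots,w^k,w^j)$ into $k+1$ sub-zonotopes, each obtained by deleting one of the $k+1$ generating vectors and translating the result appropriately. I would first set up the following notation: write $u^0 = w^j$ and $u^\ell = w^\ell$ for $\ell \in [k]$, so that the big zonotope is $\Zon(u^0,u^1,\hdots,u^k)$. The natural decomposition to use is the standard ``generic translate'' tiling of a zonotope generated by $m$ vectors in $\R^{m-1}$: for a generic direction $\rho$, the zonotope $\Zon(u^0,\hdots,u^k)$ decomposes (up to measure zero) into the $k+1$ pieces $T_\ell = c_\ell + \Zon^0(u^0,\hdots,\widehat{u^\ell},\hdots,u^k)$, where $c_\ell$ is the sum of those $u^t$ ($t\ne \ell$) for which $\langle u^t,\rho\rangle$ and $\langle u^\ell,\rho\rangle$ satisfy a fixed sign relation. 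This is a classical fact about zonotopal tilings / the Bohne–Dress correspondence; alternatively it follows from a direct argument that a generic point $p$ in the big zonotope has a representation $p = \sum t_i u^i$ with $t_i\in[0,1]$ in which exactly one coordinate is forced to $0$ or $1$, determined by on which side of the facet-normals $p$ lies.

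Next I would match this abstract tiling to the two sign classes appearing in Lemma \ref{lemma:identity}. The key computation is to identify, for each $\ell\in\{0,1,\hdots,k\}$, which of $Z_A^\ell, Z_B^\ell$ carries the positive sign and which the negative sign, and to check that exactly one translate $c_\ell$ (namely $0$, i.e. the ``$A$'' case) realizes the positive-sign piece while the other translate ($w^\ell$, the ``$B$'' case) realizes the negative-sign piece, consistently across all $\ell$. Here the bookkeeping goes through the sign $\sigma_{ij}(S')$: by definition $\sigma_{ij}(S') = \sign(\det W_{S'[i\to j]}/\det W_{S'})$, and this is precisely the sign that records whether deleting $u^i$ from the big generating set and putting back $u^0=w^j$ preserves or reverses orientation — which is exactly what governs whether the corresponding translate in the generic tiling is the $A$-piece or the $B$-piece. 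I would verify, using the $k=2$ pictures (Figures \ref{fig:2d_pp}–\ref{fig:2d_pz}) as a sanity check, that the assignment ``$Z_A^j$ gets $-$, $Z_B^j$ gets $+$; and $Z_A^i,Z_B^i$ get $+\sigma_i$ resp. $-\sigma_i$'' lines up with one choice of generic direction $\rho$ for the positive family and the opposite direction $-\rho$ (or a different generic $\rho'$) for the negative family. Since reversing the sign-convention direction in the generic tiling swaps every $c_\ell=0$ with $c_\ell=u^\ell$, the two families are genuinely the same decomposition read with opposite orientation conventions, which is why both tile $\Zon(u^0,\hdots,u^k)$.

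The main obstacle, I expect, is not the existence of the generic zonotopal tiling — that is standard — but rather the sign bookkeeping: showing rigorously that the ad hoc signs produced by the algebraic manipulation in Lemma \ref{lemma:identity} (the $-1$ on $Z_A^j$, the $+1$ on $Z_B^j$, and the $\pm\sigma_{ij}(S')$ on the $Z_A^i, Z_B^i$) coincide with the ``delete generator $\ell$, use translate $c_\ell$'' labels of one consistent generic tiling, and that the complementary labels give a second consistent generic tiling. Concretely this requires expressing $\sigma_{ij}(S')$ as the relative orientation $\sign\det[u^0,\hdots,\widehat{u^i},\hdots,u^k] / \det[u^1,\hdots,u^k]$ and then relating it to the facet-crossing data of the generic direction. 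A secondary (but genuinely routine) point is the degenerate case where some $\det W_{S'[i\to j]}=0$: there the corresponding $Z_A^i,Z_B^i$ are empty (the zonotope degenerates to lower dimension and is measure zero), so those terms simply drop and the tiling statement is inherited by a limiting/continuity argument from the generic configurations, exactly as illustrated in Figure \ref{fig:2d_pz}. Once the tiling lemma is established, Lemma \ref{lemma:identity} follows immediately: grouping partitions $\pi'$ by the value $q(\pi')\in\R^k$, each generic $q(\pi')$ lies in the interior of exactly one positive-sign zonotope and exactly one negative-sign zonotope (by the two tilings), so the monomial $P_{\pi'}(x)$ — whose coefficient magnitude depends only on $S'$ and is therefore the same for all these partitions — appears once with $+$ and once with $-$, and everything cancels.
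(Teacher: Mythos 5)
Your approach is genuinely different from the paper's, and it is worth comparing the two before pointing at the gaps.

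\paragraph{What the paper does.} The paper does not invoke zonotopal tiling theory. It proves disjointness of the positively signed regions directly: Lemmas \ref{lemma:disjoint_pm} and \ref{lemma:disjoint_pp} show, by decomposing $w^i = h_i + c_i h_\perp$ against the hyperplane spanned by the other generators and comparing the signs of the $c_i$'s (which are controlled by $\sigma_i$), that any two positively signed parallelotopes have empty intersection. Coverage is then established not by a combinatorial argument but by a volume count: each region $Z_A^i, Z_B^i$ is a parallelotope with volume $|\det W_{I \setminus \{i\}}|$, and the standard formula $\Vol(\Zon(\{w^r\}_{r\in I})) = \sum_{|T|=k}|\det W_T|$ shows the disjoint pieces exhaust the big zonotope. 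This is fully self-contained and elementary. Your plan, by contrast, imports the corank-one zonotopal tiling as a black box and reduces the problem to a sign-matching calculation. That is a reasonable strategy in principle, and if completed would be more structural, but it is not what the paper does.

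\paragraph{Gap 1: the translate formula is wrong.} You assert that the tile omitting generator $u^\ell$ is $T_\ell = c_\ell + \Zon^0(u^0,\hdots,\widehat{u^\ell},\hdots,u^k)$ where $c_\ell$ is a sum of the \emph{other} generators $u^t$ ($t\neq\ell$) selected by a sign condition. This does not match the regions in Lemma \ref{lemma:identity}: the paper's zonotopes are translated by either $0$ (the $Z_A^\ell$ case) or by the omitted generator $u^\ell$ itself (the $Z_B^\ell$ case, $Z_B^\ell = w^\ell + Z_A^\ell$), never by a sum over $t\neq\ell$. This is also what the corank-one fine-tiling picture actually produces: when $k{+}1$ generators in $\R^k$ satisfy a unique dependency $\sum_\ell \alpha_\ell u^\ell = 0$, extremizing the free parameter $s$ in the family of representations $p = \sum_\ell(t_\ell + s\alpha_\ell)u^\ell$ forces exactly one coordinate $t_\ell$ to $0$ or $1$, and the resulting tile is $\Zon(\{u^t\}_{t\neq\ell})$ or $u^\ell + \Zon(\{u^t\}_{t\neq\ell})$. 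So the decomposition you would need to invoke is the ``extremize one coordinate'' one, with translates in $\{0,u^\ell\}$, not your $c_\ell$. As written, your claimed tiling does not exist.

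\paragraph{Gap 2: the sign bookkeeping is not carried out.} You correctly flag the sign-matching as ``the main obstacle,'' but you only describe what a proof would need to do (``I would verify \ldots,'' ``Concretely this requires expressing $\sigma_{ij}(S')$ as the relative orientation \ldots and then relating it to the facet-crossing data''). You do not actually establish that the assignment ``$Z_A^j \mapsto -$, $Z_B^j \mapsto +$, $Z_A^i \mapsto \sigma_i$, $Z_B^i \mapsto -\sigma_i$'' coincides with one tiling for the positive class and the complementary tiling for the negative class. Since this is precisely the content of the lemma, the proposal at present is a plan rather than a proof. If you want to complete your route, the cleanest path is to observe that the dependency coefficients $\alpha_\ell$ for the generators $\{w^1,\hdots,w^k,w^j\}$ satisfy (by Cramer's rule) $\alpha_i/\alpha_j = -\det W_{S'[i\to j]}/\det W_{S'} $, so that $\sigma_i = -\sign(\alpha_i/\alpha_j)$, and then to show that the tile with $t_\ell$ pushed to $0$ (resp.\ $1$) has the same sign pattern as your positive (resp.\ negative) class; but that computation needs to actually appear. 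Alternatively, adopt the paper's disjointness-plus-volume argument, which avoids classifying the tiles by dependency sign at all.
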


Intuitively, the (proof of the) Tiling Lemma says the following: there are $2(k+1)$ distinct terms in Lemma \ref{lemma:identity}; $(k+1)$ of these are positive and $(k+1)$ of these are negative. Each of these terms correspond to a zonotope (in fact, a parallelotope) in $q(\pi)$ space. The $(k+1)$ positive parallelotopes partition the zonotope $\Zon(w^1, \hdots, w^k, w^j)$, as do the $(k+1)$ negative parallelotopes.

Given the Tiling Lemma it is straightforward to show Lemma \ref{lemma:identity}:

\begin{proof}[Proof of Lemma \ref{lemma:identity}]
Given $\pi'$ in \Cref{eq:identity} then it appears in a given term iff $q(\pi')$ is in the corresponding zonotope. Since all the zonotopes corresponding to positive terms are non-overlapping, it can appear at most once with a positive term.

If it does appear with a positive term, then $q(\pi') \in \Zon(w^1, \hdots, w^k, w^j)$. By the fact that the instance is generic, $q(\pi')$ can't be in the boundary of any of the smaller zonotopes. Since the union of zonotopes with negative sign is also $\Zon(w^1, \hdots, w^k, w^j)$ then it must also appear in exactly one such zonotope. Hence it also appears exactly once with negative sign in \Cref{eq:identity}.
\end{proof}

We now devote the rest of the section to a proof of Lemma \ref{lemma:tiling}. The regions which get assigned a positive sign are i) $Z_B^j$, ii) $Z_A^i$ if $\sigma_i = +1$, and iii) $Z_B^i$ if $\sigma_i = -1$ (as in Section \ref{sec:illustration}, we suppress $j$ and $S'$ in $\sigma_{ij}(S')$). It is convenient to assign to $j$ a sign $\sigma_j = -1$ so to get a more uniform treatment of $j$ and $\{1, \hdots, k\}$; in particular, we are now simply considering all $Z_{B}^i$ where $\sigma_{i} = -1$, and all $Z_{A}^i$ where $\sigma_i = 1$. With this notation in mind, we first show that $Z_{B}^i$ and $Z_{A}^t$ must be disjoint if $\sigma_{i} \neq \sigma_{t}$. We further show that $Z_A^i$ and $Z_A^t$ must be disjoint if $\sigma_{i} =\sigma_{t}$.
\begin{lemma}\label{lemma:disjoint_pm}
Consider $i,t \in I \triangleq \{1,\hdots, k, j\}$ with $i\neq t$. If $\sigma_i = -\sigma_t$ then $Z_B^i$ and $Z_A^t$ are disjoint.
\end{lemma}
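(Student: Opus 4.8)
The plan is to exploit the fact that the $k+1$ vectors $\{w^\ell : \ell\in I\}$, where $I\triangleq\{1,\dots,k,j\}$, lie in $\R^k$ with $w^1,\dots,w^k$ linearly independent (recall $\det W_{S'}\neq 0$), so that their space of linear dependences is exactly one-dimensional. First I would write this dependence down explicitly: since $w^j\in\mathrm{span}(w^1,\dots,w^k)$, Cramer's rule applied to $W_{S'}c = w^j$ gives $w^j = \sum_{\ell\in[k]} c_\ell\, w^\ell$ with $c_\ell = \det W_{S'[\ell\to j]}/\det W_{S'}$, so that $\sigma_\ell = \sign(c_\ell)$ for every $\ell\in[k]$. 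Setting $c_j\triangleq -1$ — which is exactly compatible with the convention $\sigma_j=-1$ fixed just before the lemma — the unique-up-to-scaling linear relation among the $w^\ell$'s is $\sum_{\ell\in I} c_\ell\, w^\ell = 0$, and $\sigma_\ell = \sign(c_\ell)$ now holds for \emph{all} $\ell\in I$.

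Next I would record the uniform descriptions $Z_A^t = \Zon^0\big(\{w^\ell : \ell\in I\setminus\{t\}\}\big)$ and $Z_B^i = w^i + \Zon^0\big(\{w^\ell : \ell\in I\setminus\{i\}\}\big)$, valid for every $i,t\in I$ (for $i=j$ or $t=j$ these reduce to the definitions of $Z_A^j, Z_B^j$, and for $i,t\in[k]$ to those of $Z_A^i, Z_B^i$ given above). Assume toward a contradiction that some point $p$ lies in $Z_B^i\cap Z_A^t$. Then $p$ can be written both as $w^i + \sum_{\ell\in I\setminus\{i\}}\lambda_\ell w^\ell$ and as $\sum_{\ell\in I\setminus\{t\}}\mu_\ell w^\ell$ with all $\lambda_\ell,\mu_\ell\in(0,1)$. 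Subtracting these two expressions yields a linear dependence among $\{w^\ell : \ell\in I\}$ in which — crucially because $i\neq t$ — the coefficient of $w^i$ equals $1-\mu_i$ and the coefficient of $w^t$ equals $\lambda_t$, both strictly positive (every other coefficient is of the form $\lambda_\ell-\mu_\ell$, whose sign we will not need).

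Finally, by one-dimensionality of the dependence space, this relation must equal $\alpha\cdot\big(\sum_{\ell\in I} c_\ell w^\ell\big)$ for some scalar $\alpha$. Matching the coefficients of $w^i$ and of $w^t$ gives $\alpha c_i = 1-\mu_i > 0$ and $\alpha c_t = \lambda_t > 0$; in particular $\alpha\neq 0$, and $\sigma_i = \sign(c_i) = \sign(\alpha) = \sign(c_t) = \sigma_t$ with both nonzero — contradicting $\sigma_i = -\sigma_t$. (The remaining way to have $\sigma_i = -\sigma_t$, namely $\sigma_i=\sigma_t=0$, is vacuous: then each of the two $\Zon^0$'s above is taken over a linearly dependent family, so $Z_B^i = Z_A^t = \emptyset$.)

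There is no genuinely hard step here — the argument is plain linear algebra once one observes that the relevant dependence space is one-dimensional. The only places demanding care are the bookkeeping (the hypothesis $i\neq t$ is precisely what keeps the $w^i$- and $w^t$-coefficients of the derived relation "clean": a lone $+1$ minus a single $\mu_i$, and a lone $\lambda_t$, with no further cancellation) and lining up the sign conventions so that the choice $c_j=-1$ matches $\sigma_j=-1$, which is what lets the single dependence vector carry all the $\sigma_\ell$ information.
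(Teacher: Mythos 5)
Your proof is correct, and it takes a noticeably cleaner route than the paper's while exploiting the same underlying geometry. Both arguments begin identically: assume a common point of $Z_B^i$ and $Z_A^t$, and subtract the two representations to obtain a linear dependence among $\{w^\ell\}_{\ell\in I}$ in which the coefficients of $w^i$ and $w^t$ are $1-\mu_i>0$ and $\lambda_t>0$. From there the paper decomposes $w^i$ and $w^t$ orthogonally against the hyperplane spanned by $\{w^r: r\in I\setminus\{i,t\}\}$ and shows the two perpendicular components $c_i, c_t$ have the same sign via a determinant-ratio manipulation, with a separate case analysis for $i,t\in[k]$ versus one of the indices being $j$. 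You instead observe that the $(k{+}1)$-tuple $\{w^\ell\}_{\ell\in I}$ has a one-dimensional space of linear dependences, construct its canonical representative $\sum_{\ell\in I} c_\ell w^\ell = 0$ via Cramer's rule with $c_j = -1$ (so that $\sigma_\ell=\sign(c_\ell)$ uniformly over all of $I$), and then simply match coefficients: the derived dependence is $\alpha$ times the canonical one, forcing $\alpha c_i>0$ and $\alpha c_t>0$ and hence $\sigma_i=\sigma_t$ (both nonzero). This avoids the orthogonal decomposition and its sign bookkeeping entirely and handles $i=j$ and $t=j$ on the same footing as $i,t\in[k]$ thanks to the $c_j=-1$ convention. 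Your parenthetical on the $\sigma_i=\sigma_t=0$ case is correct but, as you note, not strictly necessary: the main matching argument already fails gracefully there since $\alpha c_i > 0$ is impossible when $c_i = 0$.
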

\begin{lemma}\label{lemma:disjoint_pp}
Consider $i,t \in I := \{1,\hdots, k, j\}$ with $i\neq t$. If $\sigma_i = \sigma_t$ then regions $Z_A^i$ and $Z_A^t$ are disjoint. The regions $Z_B^i$ and $Z_B^k$ are also disjoint
\end{lemma}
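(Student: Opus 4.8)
The plan is to follow the proof of Lemma~\ref{lemma:disjoint_pm} almost line for line, changing only which configuration of signs produces the contradiction. Keeping the convention $\sigma_j = -1$, recall that for every $i \in I := \{1,\hdots,k,j\}$ the region $Z_A^i$ is the open zonotope on the $k$ vectors $\{w^r : r \in I\setminus\{i\}\}$ and $Z_B^i = w^i + Z_A^i$. First I would handle the degenerate cases: if $\sigma_i = 0$ for some $i \in \{1,\hdots,k\}$, then $\det W_{S'[i\to j]} = 0$, the vectors $\{w^r : r \in I\setminus\{i\}\}$ are linearly dependent, and so $Z_A^i = Z_B^i = \emptyset$ and the claim is vacuous. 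Hence I may assume $\sigma_i, \sigma_t \neq 0$, so that both $\{w^r : r \in I\setminus\{i\}\}$ and $\{w^r : r\in I\setminus\{t\}\}$ are bases of $\R^k$; in particular $\H := \mathrm{span}\{w^r : r \in I\setminus\{i,t\}\}$ has dimension exactly $k-1$ and admits a unit normal $h_\perp$.

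Next, suppose toward a contradiction that $x \in Z_A^i \cap Z_A^t$. Writing $x$ in the two zonotope coordinate systems, equating, and isolating the $w^i$ and $w^t$ terms gives $\lambda_t w^t - \lambda'_i w^i = \sum_{r \in I\setminus\{i,t\}}(\lambda'_r - \lambda_r) w^r$ with all $\lambda, \lambda' \in (0,1)$, so the right-hand side lies in $\H$. Decomposing $w^i = h_i + c_i h_\perp$ and $w^t = h_t + c_t h_\perp$ with $h_i, h_t \in \H$ (and $c_i, c_t \neq 0$ because the relevant sets are bases), and projecting onto $h_\perp$, I get $\lambda_t c_t = \lambda'_i c_i$; since $\lambda_t, \lambda'_i > 0$, this forces $c_i$ and $c_t$ to have the \emph{same} sign. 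But the Cramer's-rule/determinant computation already carried out in the proof of Lemma~\ref{lemma:disjoint_pm} establishes the unconditional identity $\sigma_i/\sigma_t = -\sign(c_i/c_t)$ (that computation never actually used the hypothesis $\sigma_i = -\sigma_t$; it only expanded $\det W_{S'[i\to j]}$, $\det W_{S'[t\to j]}$ and $\det W_{S'}$ along the common block spanning $\H$). Therefore $\sigma_i = \sigma_t$ forces $c_i$ and $c_t$ to have \emph{opposite} signs --- a contradiction. This gives $Z_A^i \cap Z_A^t = \emptyset$.

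The case of $Z_B^i$ versus $Z_B^t$ is handled identically: the analogue of the displayed identity becomes $(1 - \lambda'_i)w^i + (\lambda_t - 1)w^t \in \H$, and since $1-\lambda'_i > 0$ while $\lambda_t - 1 < 0$, projecting onto $h_\perp$ again yields $(1-\lambda'_i)c_i = (1-\lambda_t)c_t$ with both scalar coefficients positive, so $c_i$ and $c_t$ must share a sign --- contradicting $\sigma_i/\sigma_t = -\sign(c_i/c_t)$ exactly as before. Here, as in Lemma~\ref{lemma:disjoint_pm}, the sub-case where one of $i,t$ equals $j$ is treated the same way, using $\sigma_j = -1$.

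The only delicate point --- the ``main obstacle'' --- is the sign bookkeeping behind the identity $\sigma_i/\sigma_t = -\sign(c_i/c_t)$: one must track how reordering columns into the last two positions, and how substituting $w^j$ into the $i$-slot versus the $t$-slot, affect the signs of the three determinants, and separately treat the case where one index is $j$. But this is precisely the computation performed inside the proof of Lemma~\ref{lemma:disjoint_pm}, so it can simply be invoked; no new idea is needed beyond the observation that for two regions sharing the \emph{same} $\sigma$-sign the orthogonal-component constraint and the $\sigma$-constraint are incompatible (rather than compatible, as in the opposite-sign situation of Lemma~\ref{lemma:disjoint_pm}).
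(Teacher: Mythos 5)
Your proof is correct and follows essentially the same route as the paper, which only sketches this lemma in a short paragraph (``The proof follows a similar pattern as the proof of Lemma~\ref{lemma:disjoint_pm}''); you have filled in exactly the details the paper leaves implicit, including the correct observation that the determinant bookkeeping in Lemma~\ref{lemma:disjoint_pm} yields the unconditional identity $\sigma_i/\sigma_t = -\sign(c_i/c_t)$, and the correct ``both coefficients positive'' reformulation $(1-\lambda'_i)c_i = (1-\lambda_t)c_t$ for the $Z_B$ case.
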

See \Cref{appendix:missingproofs} in the online supplement for the proof of these lemmas. With these two lemmas, we are ready to prove the Tiling Lemma:
\begin{proof}[Proof of Lemma \ref{lemma:tiling}]
Lemmas \ref{lemma:disjoint_pm} and \ref{lemma:disjoint_pp} show that the regions assigned positive sign are disjoint. It is straightforward to see that they are all contained in $\Zon(\{w^r\}_{r \in I})$ for $I = \{1,\hdots,k,j\}$ since it is simple to express a point in each region as $\sum_{r \in I} \lambda_r w_r$ with $\lambda_r \in [0,1]$. To show that their closures are exactly the zonotope it is enough to argue that their volumes sum up to the volume of $\Zon(\{w^r\}_{r \in I})$. Note that (since the zonotopes $Z_{A}^i$ and $Z_{B}^i$ are actually parallelotopes generated by $k$ vectors):
$$\Vol(Z_A^i) = \Vol(Z_B^i) = |\det W_{I \setminus \{i\}}|$$
where $W_{I \setminus i}$ is the matrix formed by columns $w^r$ for $r \in I \setminus i$. Finally the formula for computing the volume of a zonotope (see e.g. \citep{gover2010determinants}) is:
$$\Vol(\Zon(\{w^r\}_{r \in I})) = \sum_{T \subseteq I| \abs{T}=k} |\det W_{T}|$$ 
which is equal to the sum of volumes of the smaller parallelotopes.
\end{proof}

\section{Bernoulli Factories for Non-Generic Polytopes}\label{sec:non_generic}

In this section, we demonstrate how to obtain a factory for a non-generic polytope as a limit of factories for generic polytopes. We then look at the specific case of the $k$-subset polytope ($\{x \in [0, 1]^n \mid \sum_{i} x_i = k\}$), where we recover the statistical method known as Sampford sampling.
\begin{theorem}\label{thm:non_generic}
Consider a polytope of the form $\P = [0,1]^n \cap \H$, where $\H$ is a possibly non-generic affine subspace. Then there exist 
Bernstein polynomials $P_v(x)$ for each vertex $v$ of $\P$ such that the corresponding Bernoulli race over Bernstein polynomials is a Bernoulli factory for $\P$.
\end{theorem}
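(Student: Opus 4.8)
The plan is to realize the non-generic polytope $\P = [0,1]^n \cap \H$ as a limit of generic polytopes $\P_t = [0,1]^n \cap \H_t$, where $\H_t = \{x : Wx = b_t\}$ with $b_t \to b$, and to take a corresponding limit of the Bernstein polynomials furnished by \Cref{thm:valid_factory}. The subtlety the excerpt has already flagged is that while each generic factory is a \emph{strong} factory, the limiting polynomials may blow up (or vanish) on the boundary of $[0,1]^n$, so we should only expect a (non-strong) Bernoulli factory in the limit. Concretely, for each generic $\H_t$ we have, for every partition $\pi \in \Part_{[n],k}$, the polynomial $P^{(t)}_\pi(x) = \abs{\det W_S} \prod_{i \in A}(1-x_i)\prod_{i\in B}x_i \prod_{i \in S} x_i(1-x_i)$ as in \eqref{eq:generic_factory}; note the leading coefficient $\abs{\det W_S}$ does \emph{not} depend on $t$, only the \emph{set} $\U_t$ of valid partitions does. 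So the whole $t$-dependence is combinatorial: which partitions are valid.

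The key steps, in order. \textbf{(1)} Choose the perturbation direction carefully. Write $b_t = b + t\,\delta$ for a fixed generic direction $\delta$ and all small $t>0$; by the measure-zero argument in \Cref{sec:generic}, a generic $\delta$ makes $\H_t$ generic for all small $t \neq 0$. \textbf{(2)} Identify the limiting valid set. As $t \to 0^+$, each vertex $v^{(t)}$ of $\P_t$ converges to a vertex (or a face point) of $\P$; a partition $\pi=(A,S,B)$ that is valid for $\H_t$ for all small $t>0$ corresponds in the limit to a vertex $v^\pi$ of $\P$ with $v^\pi_i=0$ on $A$, $=1$ on $B$, and $v^\pi_S = W_S^{-1}(b - \sum_{i\in B}w^i) \in [0,1]^k$ (note the \emph{closed} box). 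Several generic partitions may collapse onto the same vertex of $\P$. \textbf{(3)} Define the limiting polynomials. For each vertex $v$ of $\P$, set $P_v(x) = \sum_{\pi} P^{(t)}_\pi(x)$ summed over all partitions $\pi$ that are valid for small $t>0$ and whose limit vertex is $v$; since $P^{(t)}_\pi$ is independent of $t$, this sum is a well-defined Bernstein polynomial, independent of $t$. It is nonzero because at least one such $\pi$ exists for every vertex $v$ (any generic $\H_t$ has a vertex near $v$). \textbf{(4)} Verify \eqref{eq:factory} on $\P$. For each fixed small $t>0$, \Cref{thm:valid_factory} gives $\sum_{\pi \in \U_t} P^{(t)}_\pi(x)(v^{\pi,t}-x) = 0$ for all $x \in \P_t$; since $P^{(t)}_\pi$ and the combinatorial data are constant in $t$ (for $t$ in a punctured neighborhood of $0$), and $v^{\pi,t} \to v^\pi$, and $\P_t \to \P$ in the Hausdorff sense, taking $t \to 0^+$ yields $\sum_v P_v(x)(v - x) = 0$ for all $x \in \P$. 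One should argue the limit carefully: both sides are polynomials in $(x, t)$, the identity holds on the set $\{(x,t) : x \in \P_t,\ t \in (0,\epsilon)\}$, which is Zariski-dense in $\{(x,t): x\in \H_t\}$ near $t=0$ after intersecting with the box — equivalently, fix $x \in \P\cap (0,1)^n$ and note $x$ can be written as $x = x^{(t)} + O(t)$ with $x^{(t)} \in \P_t$, and expand. \textbf{(5)} Invoke \Cref{thm:factory_construction}: nonzero Bernstein polynomials satisfying \eqref{eq:factory} give a Bernoulli factory for $\P$ (working on $\tilde\P = \P \cap (0,1)^n$, which is all that is required). Termination a.s. on $\tilde\P$ is automatic since nonzero Bernstein polynomials are strictly positive on $(0,1)^n$, so $\sum_v P_v(x) > 0$ there.

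\textbf{Main obstacle.} The delicate point is Step (4): making the limiting argument rigorous, i.e., showing that the identity $\sum_{\pi\in\U_t} P^{(t)}_\pi(x)(v^{\pi,t} - x)=0$ valid for each small $t>0$ passes to the limit to give an identity valid for all $x \in \P$ (not merely $x$ in the relative interior). The cleanest route is to observe that $\U_t$ is \emph{eventually constant} as $t \to 0^+$ (this is where genericity of the direction $\delta$ is used — the set of partitions $\pi$ with $q(\pi) = b + t\delta - \sum_{i\in B}w^i \in \Zon^0(W_S)$ stabilizes), so that $P_v(x)$ is genuinely $t$-independent and the vector identity becomes a single polynomial identity in $x$ that holds on the Zariski-dense subset $\bigcup_{t\in(0,\epsilon)} (\P_t\cap(0,1)^n)$ of $\H$, hence on all of $\H \supseteq \P$. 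The secondary subtlety worth a sentence is the degeneracy case $v^\pi_S$ landing on the boundary of $[0,1]^k$ — here two generic partitions differing by swapping an index between $A$ (or $B$) and $S$ merge, and one must check the merged polynomial is still the correct Bernstein polynomial for the limit vertex; this is exactly the phenomenon illustrated by the vanishing regions $Z_A^1, Z_B^1$ in Figure \ref{fig:2d_pz}, and it does not affect the identity since those terms simply drop out consistently on both the positive and negative sides of \eqref{eq:identity}.
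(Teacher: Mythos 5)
Your proposal is correct and follows essentially the same approach as the paper: perturb $b$ to obtain generic affine subspaces $\H_t$, observe that the polynomials $P_\pi$ depend only on $W$ and on the combinatorial set of valid partitions $\U_t$, show $\U_t$ eventually stabilizes, sum $P_\pi$ over partitions mapping to the same limit vertex, and pass the vector identity of \Cref{thm:valid_factory} to the limit. The only cosmetic difference is in how stabilization of $\U_t$ is obtained: the paper samples $b_t$ uniformly from balls of radius $1/t$ and passes to a subsequence on which $\U_t$ is constant (pigeonhole over the finite set $\Part_{[n],k}$), whereas you fix a generic direction $\delta$ and argue $\U_t$ is eventually constant along the ray $b + t\delta$ as $t \to 0^+$.
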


This gives us a valid Bernoulli factory that always terminates as long as all coins belong to the open set $(0,1)^n$, i.e., no coin is deterministically $0$ or $1$. In Section \ref{sec:impossibility_race_over_bernstein} of the online supplement we show that this limitation is unavoidable.

\begin{proof}[Proof of Theorem \ref{thm:non_generic}]
Let $\H = \{x \in \R^n \mid Wx = b\}$ for a $k \times n$ matrix $W$ and a vector $b \in \R^k$. For each $t = 1,2,3 \dots$ sample a vector $b_t \in \R^k$ uniformly from the ball of radius $1/t$ around $b$ and define $\H_t$ as the hyperplane:
$\H_t = \{x \in \R^k \mid Wx = b_t \}$. Since the set $\{b \in \R^k \mid Wx=b \text{ is generic} \}$ has measure zero, then $\H_t$ is generic almost surely for all $t$. Let $\P_t = [0,1]^n \cap \H_t$ and $\U_t$ be the set of valid partitions for polytope $\P_t$. Since $\U_t$ is a subset of the finite set $\Part_{[n],k}$, there are finitely many possibilities, so one of them must occur infinitely often. Passing to this subsequence if necessary, we can assume $\U_t$ is the same for all $t$. With this, observe that the Bernoulli factory is exactly the same for all polytopes $\P_t$ since the polynomials in Theorem \ref{thm:factory_construction} depend on $W$ and $\U$ but not directly on $b$. It follows that:
$$\sum_{\pi \in \U_t} P_\pi(x) \cdot (x - v^{t,\pi}) = 0, \quad \forall x \in \P_t$$
where $v^{t,\pi}$ is the vertex in $\P_t$ associated with partition $\pi$. Note that as $t \rightarrow \infty$ vertex $v^{t,\pi} \rightarrow v^\pi$ for the vertex $v^\pi$ in $\P$ associated with partition $\pi$. However, note that the correspondence is no longer 1-to-1, i.e., two different partitions $\pi_1, \pi_2$ may map to the same vertex in $\P$. 

Now take any $x \in \P$ and write it as a limit $x_t \rightarrow x$ with $x_t \in \P_t$. We know that: 
$\sum_{\pi \in \U_t} P_\pi(x) \cdot (x_t - v^{t,\pi}) = 0$. Taking the limit $t \rightarrow \infty$ we obtain $\sum_{\pi \in \U_t} P_\pi(x) \cdot (x - v^{\pi}) = 0$ which establishes condition \eqref{eq:factory}. Note that the polynomial associated with each vertex is:
\begin{equation}\label{eq:nongeneric_poly}
P_v(x) = \sum_{\pi \in \U_t| v = v^\pi} P_{\pi}(x).
\end{equation}
Finally, observe that condition \eqref{eq:not_vanish} is trivial for $x \in (0,1)^n$ since all Bernstein monomials are strictly positive at such points.
\end{proof}

\paragraph*{Warning} The proof of the previous theorem (and in particular \Cref{eq:nongeneric_poly}) give a recipe for constructing Bernoulli factories for non-generic hyperplanes. In practice, to construct a factory one can add a tiny perturbation to $b$, compute set $\U_t$ and use the formula in \Cref{eq:nongeneric_poly}.

One may be tempted to ignore the perturbation and try to apply \Cref{eq:nongeneric_poly} directly using $\U$ instead of $\U_t$. For this, one also needs to change $(0,1)^n$ to $[0,1]^n$ in the definition of $\U$ so that all vertices are represented. This approach, however, fails. One example is the $3$-by-$3$ perfect matching polytope. We implement\footnote{See the code in \url{https://gist.github.com/renatoppl/f9151d44e8ef798737e9ce75efbf0d1d}. The implementation is done in the computational algebra system SageMath.} the factory for matching both with and without the perturbation. With the perturbation we obtain a multiple\footnote{The polynomial obtained by the generic recipe has degree $n^2+2n-1$ while the one in Section \ref{sec:matching_factory} has degree $2n-1$. They differ by a factor of $\prod_{ij}(1-x_{ij})$. } of the polynomial  in Section \ref{sec:matching_factory}. If instead we do not add a perturbation, we obtain a family of polynomials that do not satisfy \Cref{eq:factory}.

\subsection{Sampling a $k$-subset (Sampford Sampling)}

We now show that for $k$-subset sampling, the recipe in Theorem \ref{thm:non_generic} recovers the procedure known as Sampford sampling \citep{sampford1967sampling}. Consider the polytope
\begin{equation}\label{eq:p_alpha}
\P_{\alpha,n} = \left\{x \in [0,1]^n| \sum_{i=1}^n x_i = \alpha \right\}
\end{equation}
for $\alpha \in (0,n)$. The vertices of $\P_{\alpha,n}$ are the vectors $v \in [0,1]^n$ having $k = \lfloor \alpha \rfloor$ coordinates equal to $1$, one coordinate equal to $\alpha - k$ and the remaining coordinates equal to 0.

If $\alpha$ is not an integer, then $\P_{\alpha,n}$ is generic and we can apply the construction in \eqref{eq:generic_factory} directly. The polynomial associated with vertex $v = (\alpha-k, 1, \hdots, 1, 0, \hdots, 0)$ is $P_v(x) = x_1(1-x_1) x_2 \hdots x_{k+1} (1-x_{k+2}) \hdots (1-x_n)$. Following the recipe for generic factories we obtain:
\begin{algorithm}[H]
\caption{{\sc Bernoulli Factory for $\P_{\alpha,n}$ for non-integer $\alpha$} (version 1)}
\label{alg:sampford_generic}
\begin{algorithmic} 
\State Pick a random vertex $v$
\State For each index such that $v_i = 1$, sample the $x_i$-coin and restart if it is $0$.
\State For each index such that $v_i = 0$, sample the $x_i$-coin and restart if it is $1$.
\State For the remaining index $i$ sample two coins $x_i$-coins and restart unless their outcome is $0$ and $1$.
\State Output vertex $v$
\end{algorithmic}
\end{algorithm}
We can slightly optimize this procedure by sampling the coins first and then picking a random vertex that matches the coins:
\begin{algorithm}[H]
\caption{{\sc Bernoulli Factory for $\P_{\alpha,n}$ for non-integer $\alpha$} (version 2)}
\label{alg:sampford_generic_2}
\begin{algorithmic} 
\State Sample the $x_i$-coin for each $i \in [n]$. Let $S$ be the indices that returned $1$.
\State If $\abs{S} \neq k+1$, restart.
\State Choose $i \in S$ uniformly at random and flip the $x_i$-coin again. If the coin returns $1$, restart.
\State Output the vertex such that $v_i = \alpha-k$, $v_j = 1$ for $j \in S \setminus \{i\}$ and $v_j = 0$ for $j \notin S$.
\end{algorithmic}
\end{algorithm}

Now, if $\alpha$ is an integer, the polytope $\P_{\alpha,n}$ becomes non-generic. According to the recipe given in Theorem \ref{thm:non_generic} we perturb $k$ to $\alpha = k \pm \epsilon$ and take the limit as $\epsilon$ goes to zero. Depending on the sign of the perturbation this can lead to two different factories. 
One option is to look at the factory for $k-\epsilon$ and round vertices $(1-\epsilon, 1, \hdots, 1, 0, \hdots, 0)$ to $(1, 1, \hdots, 1, 0, \hdots, 0)$. Following \Cref{eq:nongeneric_poly} if we have a vertex $v$ where $A = \{i| v_i = 1\}$ and $B = \{i| v_i = 0\}$ then the associated Bernstein polynomial is:
\begin{equation}\label{eq:sampford1}
P_v(x) = \prod_{i \in A} x_i \cdot \prod_{i \in B} (1-x_i) \cdot \left( k - \sum_{i\in A} x_i\right)
\end{equation}

\noindent Then factory obtained recovers the Sampford sampling \citep{sampford1967sampling} procedure :

\begin{algorithm}[H]
\caption{{\sc Sampford sampling (minus $\epsilon$ version)}}
\label{alg:sampford_minus}
\begin{algorithmic} 
\State Sample the $x_i$-coin for each $i \in [n]$. Let $S$ be the indices that returned $1$.
\State If $\abs{S} \neq k$, restart.
\State Choose $i \in S$ uniformly at random and flip the $x_i$-coin again. If the coin returns $1$, restart.
\State Output the vertex such that $v_j = 1$ for $j \in S$ and $v_j = 0$ for $j \notin S$.
\end{algorithmic}
\end{algorithm}

\noindent A second option is to look at the factory for $k+\epsilon$ and round vertices $(\epsilon, 1, \hdots, 1,  0, \hdots, 0)$ to $(0, 1, 1, \hdots, 1, 0, \hdots, 0)$, which leads to the following polynomial:
\begin{equation}\label{eq:sampford2}
P_v(x) = \prod_{i \in A} x_i \cdot \prod_{i \in B} (1-x_i) \cdot \left( \sum_{i\in B} x_i\right)
\end{equation}
which is an alternative implementation of Sampford sampling.  The factory then becomes:

\begin{algorithm}[H]
\caption{{\sc Sampford sampling (plus $\epsilon$ version)}}
\label{alg:sampford_plus}
\begin{algorithmic} 
\State Sample the $x_i$-coin for each $i \in [n]$. Let $S$ be the indices that returned $1$.
\State If $\abs{S} \neq k+1$, restart.
\State Choose $i \in S$ uniformly at random and flip the $x_i$-coin again. If the coin returns $1$, restart.
\State Output the vertex such that $v_j = 1$ for $j \in S \setminus \{i\}$ and $v_j = 0$ for $j \notin S \cup \{i\}$.
\end{algorithmic}
\end{algorithm}

Note that polynomials \eqref{eq:sampford1} and \eqref{eq:sampford2} are different polynomials and hence lead to different algorithms, but evaluate the same within the polytope $\P_k$.

\noindent\textbf{Can we terminate on the vertices?}
One interesting observation is that for both polynomials \eqref{eq:sampford1} and \eqref{eq:sampford2}, it is the case that for any vertex $v' \in V$ we have $\sum_{v \in V} P_v(v') = 0$. This means that although this factory terminates a.s. in the interior of $\P_{k,n}$ it does not terminate on the vertices (one can also directly check that the algorithms described above never terminate in the case that $x$ is a vertex of $\P_{k, n}$). Hence these are Bernoulli factories but not strong Bernoulli factories. Can this be fixed? For the special cases of $k=1$ and $k=n-1$ it is possible to obtain a strong Bernoulli factory. In the case $k=1$ take $P_i(x) = x_i$ (where $P_i$ is the polynomial corresponding to the vertex $e_i$ with $1$ in the $i$-th coordinate). It is easy to verify that $\sum_i x_i (x-e_i) = 0$ since $\sum_i x_i = 1$ and $\sum x_i e_i = x$. Similarly for $k=n-1$ we can take $P_i(x) = 1-x_i$ (where now $P_i$ corresponds to the vertex $1-e_i$ with $1$ in each coordinate except $i$). It is natural to ask whether this can also be done for other values of $k$. In \Cref{sec:impossibility_race_over_bernstein}, we give a partial negative answer to this question for any integral value of $k$ such that $1<k<n-1$.

\vspace{-3mm}
\begin{acks}[Acknowledgments]

The authors would like to thank Shaddin Dughmi, Jan Vondrak, Jason Hartline, and Bobby Kleinberg for several helpful comments. 

\end{acks}

\begin{appendix}

\end{appendix}

\bibliographystyle{imsart-number} 
\bibliography{bernoulli}       

\begin{thebibliography}{37}

\bibitem{anari2018log}
\begin{binproceedings}[author]
\bauthor{\bsnm{Anari},~\bfnm{Nima}\binits{N.}},
  \bauthor{\bsnm{Gharan},~\bfnm{Shayan~Oveis}\binits{S.~O.}} \AND
  \bauthor{\bsnm{Vinzant},~\bfnm{Cynthia}\binits{C.}}
(\byear{2018}).
\btitle{Log-concave polynomials, entropy, and a deterministic approximation
  algorithm for counting bases of matroids}.
In \bbooktitle{2018 IEEE 59th Annual Symposium on Foundations of Computer
  Science (FOCS)}
\bpages{35--46}.
\bpublisher{IEEE}.
\end{binproceedings}
\endbibitem

\bibitem{asmussen1992stationarity}
\begin{barticle}[author]
\bauthor{\bsnm{Asmussen},~\bfnm{S{\o}ren}\binits{S.}},
  \bauthor{\bsnm{Glynn},~\bfnm{Peter~W}\binits{P.~W.}} \AND
  \bauthor{\bsnm{Thorisson},~\bfnm{Hermann}\binits{H.}}
(\byear{1992}).
\btitle{Stationarity detection in the initial transient problem}.
\bjournal{ACM Transactions on Modeling and Computer Simulation (TOMACS)}
\bvolume{2}
\bpages{130--157}.
\end{barticle}
\endbibitem

\bibitem{blanchet2017exact}
\begin{barticle}[author]
\bauthor{\bsnm{Blanchet},~\bfnm{Jose}\binits{J.}} \AND
  \bauthor{\bsnm{Zhang},~\bfnm{Fan}\binits{F.}}
(\byear{2020}).
\btitle{Exact simulation for multivariate It{\^o} diffusions}.
\bjournal{Advances in Applied Probability}
\bvolume{52}
\bpages{1003--1034}.
\end{barticle}
\endbibitem

\bibitem{cai2019efficient}
\begin{binproceedings}[author]
\bauthor{\bsnm{Cai},~\bfnm{Yang}\binits{Y.}},
  \bauthor{\bsnm{Oikonomou},~\bfnm{Argyris}\binits{A.}},
  \bauthor{\bsnm{Velegkas},~\bfnm{Grigoris}\binits{G.}} \AND
  \bauthor{\bsnm{Zhao},~\bfnm{Mingfei}\binits{M.}}
(\byear{2021}).
\btitle{An Efficient $\epsilon$-BIC to BIC Transformation and Its Application
  to Black-Box Reduction in Revenue Maximization}.
In \bbooktitle{Proceedings of the 2021 ACM-SIAM Symposium on Discrete
  Algorithms (SODA)}
\bpages{1337--1356}.
\bpublisher{SIAM}.
\end{binproceedings}
\endbibitem

\bibitem{dale2015provable}
\begin{barticle}[author]
\bauthor{\bsnm{Dale},~\bfnm{Howard}\binits{H.}},
  \bauthor{\bsnm{Jennings},~\bfnm{David}\binits{D.}} \AND
  \bauthor{\bsnm{Rudolph},~\bfnm{Terry}\binits{T.}}
(\byear{2015}).
\btitle{Provable quantum advantage in randomness processing}.
\bjournal{Nature communications}
\bvolume{6}
\bpages{1--4}.
\end{barticle}
\endbibitem

\bibitem{dughmi2017bernoulli}
\begin{binproceedings}[author]
\bauthor{\bsnm{Dughmi},~\bfnm{Shaddin}\binits{S.}},
  \bauthor{\bsnm{Hartline},~\bfnm{Jason~D}\binits{J.~D.}},
  \bauthor{\bsnm{Kleinberg},~\bfnm{Robert}\binits{R.}} \AND
  \bauthor{\bsnm{Niazadeh},~\bfnm{Rad}\binits{R.}}
(\byear{2017}).
\btitle{Bernoulli factories and black-box reductions in mechanism design}.
In \bbooktitle{Proceedings of the 49th Annual ACM SIGACT Symposium on Theory of
  Computing}
\bpages{158--169}.
\end{binproceedings}
\endbibitem

\bibitem{flegal2012exact}
\begin{barticle}[author]
\bauthor{\bsnm{Flegal},~\bfnm{James~M}\binits{J.~M.}},
  \bauthor{\bsnm{Herbei},~\bfnm{Radu}\binits{R.}} \betal{et~al.}
(\byear{2012}).
\btitle{Exact sampling for intractable probability distributions via a
  Bernoulli factory}.
\bjournal{Electronic Journal of Statistics}
\bvolume{6}
\bpages{10--37}.
\end{barticle}
\endbibitem

\bibitem{gonccalves2017barker}
\begin{barticle}[author]
\bauthor{\bsnm{Gon{\c{c}}alves},~\bfnm{Fl{\'a}vio~B}\binits{F.~B.}},
  \bauthor{\bsnm{{\L}atuszy{\'n}ski},~\bfnm{Krzysztof}\binits{K.}},
  \bauthor{\bsnm{Roberts},~\bfnm{Gareth~O}\binits{G.~O.}} \betal{et~al.}
(\byear{2017}).
\btitle{Barker’s algorithm for Bayesian inference with intractable
  likelihoods}.
\bjournal{Brazilian Journal of Probability and Statistics}
\bvolume{31}
\bpages{732--745}.
\end{barticle}
\endbibitem

\bibitem{gonccalves2017exact}
\begin{barticle}[author]
\bauthor{\bsnm{Gon{\c{c}}alves},~\bfnm{Fl{\'a}vio~B}\binits{F.~B.}},
  \bauthor{\bsnm{{\L}atuszy{\'n}ski},~\bfnm{Krzysztof~G}\binits{K.~G.}} \AND
  \bauthor{\bsnm{Roberts},~\bfnm{Gareth~O}\binits{G.~O.}}
(\byear{2017}).
\btitle{Exact Monte Carlo likelihood-based inference for jump-diffusion
  processes}.
\bjournal{arXiv preprint arXiv:1707.00332}.
\end{barticle}
\endbibitem

\bibitem{gover2010determinants}
\begin{barticle}[author]
\bauthor{\bsnm{Gover},~\bfnm{Eugene}\binits{E.}} \AND
  \bauthor{\bsnm{Krikorian},~\bfnm{Nishan}\binits{N.}}
(\byear{2010}).
\btitle{Determinants and the volumes of parallelotopes and zonotopes}.
\bjournal{Linear Algebra and its Applications}
\bvolume{433}
\bpages{28--40}.
\end{barticle}
\endbibitem

\bibitem{goyal2012simulating}
\begin{barticle}[author]
\bauthor{\bsnm{Goyal},~\bfnm{Vineet}\binits{V.}} \AND
  \bauthor{\bsnm{Sigman},~\bfnm{Karl}\binits{K.}}
(\byear{2012}).
\btitle{On simulating a class of Bernstein polynomials}.
\bjournal{ACM Transactions on Modeling and Computer Simulation (TOMACS)}
\bvolume{22}
\bpages{1--5}.
\end{barticle}
\endbibitem

\bibitem{herbei2014estimating}
\begin{barticle}[author]
\bauthor{\bsnm{Herbei},~\bfnm{Radu}\binits{R.}} \AND
  \bauthor{\bsnm{Berliner},~\bfnm{L~Mark}\binits{L.~M.}}
(\byear{2014}).
\btitle{Estimating ocean circulation: an MCMC approach with approximated
  likelihoods via the Bernoulli factory}.
\bjournal{Journal of the American Statistical Association}
\bvolume{109}
\bpages{944--954}.
\end{barticle}
\endbibitem

\bibitem{huber2016nearly}
\begin{barticle}[author]
\bauthor{\bsnm{Huber},~\bfnm{Mark}\binits{M.}}
(\byear{2016}).
\btitle{Nearly Optimal Bernoulli Factories for Linear Functions}.
\bjournal{Combinatorics, Probability and Computing}
\bvolume{25}
\bpages{577--591}.
\end{barticle}
\endbibitem

\bibitem{huber2017optimal}
\begin{barticle}[author]
\bauthor{\bsnm{Huber},~\bfnm{Mark}\binits{M.}}
(\byear{2017}).
\btitle{Optimal linear Bernoulli factories for small mean problems}.
\bjournal{Methodology and Computing in Applied Probability}
\bvolume{19}
\bpages{631--645}.
\end{barticle}
\endbibitem

\bibitem{jerrum1996markov}
\begin{barticle}[author]
\bauthor{\bsnm{Jerrum},~\bfnm{Mark}\binits{M.}} \AND
  \bauthor{\bsnm{Sinclair},~\bfnm{Alistair}\binits{A.}}
(\byear{1996}).
\btitle{The Markov chain Monte Carlo method: an approach to approximate
  counting and integration}.
\bjournal{Approximation Algorithms for NP-hard problems, PWS Publishing}.
\end{barticle}
\endbibitem

\bibitem{keane1994bernoulli}
\begin{barticle}[author]
\bauthor{\bsnm{Keane},~\bfnm{MS}\binits{M.}} \AND
  \bauthor{\bsnm{O'Brien},~\bfnm{George~L}\binits{G.~L.}}
(\byear{1994}).
\btitle{A Bernoulli factory}.
\bjournal{ACM Transactions on Modeling and Computer Simulation (TOMACS)}
\bvolume{4}
\bpages{213--219}.
\end{barticle}
\endbibitem

\bibitem{kirchhoff1847ueber}
\begin{barticle}[author]
\bauthor{\bsnm{Kirchhoff},~\bfnm{Gustav}\binits{G.}}
(\byear{1847}).
\btitle{Ueber die Aufl{\"o}sung der Gleichungen, auf welche man bei der
  Untersuchung der linearen Vertheilung galvanischer Str{\"o}me gef{\"u}hrt
  wird}.
\bjournal{Annalen der Physik}
\bvolume{148}
\bpages{497--508}.
\end{barticle}
\endbibitem

\bibitem{latuszynski2011simulating}
\begin{barticle}[author]
\bauthor{\bsnm{{\L}atuszy{\'n}ski},~\bfnm{Krzysztof}\binits{K.}},
  \bauthor{\bsnm{Kosmidis},~\bfnm{Ioannis}\binits{I.}},
  \bauthor{\bsnm{Papaspiliopoulos},~\bfnm{Omiros}\binits{O.}} \AND
  \bauthor{\bsnm{Roberts},~\bfnm{Gareth~O}\binits{G.~O.}}
(\byear{2011}).
\btitle{Simulating events of unknown probabilities via reverse time
  martingales}.
\bjournal{Random Structures \& Algorithms}
\bvolume{38}
\bpages{441--452}.
\end{barticle}
\endbibitem

\bibitem{lebl2014basic}
\begin{barticle}[author]
\bauthor{\bsnm{Lebl},~\bfnm{Jiri}\binits{J.}}
(\byear{2014}).
\btitle{Basic analysis: Introduction to real analysis}.
\end{barticle}
\endbibitem

\bibitem{mendo2019asymptotically}
\begin{barticle}[author]
\bauthor{\bsnm{Mendo},~\bfnm{Luis}\binits{L.}}
(\byear{2019}).
\btitle{An asymptotically optimal Bernoulli factory for certain functions that
  can be expressed as power series}.
\bjournal{Stochastic Processes and their Applications}
\bvolume{129}
\bpages{4366--4384}.
\end{barticle}
\endbibitem

\bibitem{morina2019bernoulli}
\begin{barticle}[author]
\bauthor{\bsnm{Morina},~\bfnm{Giulio}\binits{G.}},
  \bauthor{\bsnm{Latuszynski},~\bfnm{Krzysztof}\binits{K.}},
  \bauthor{\bsnm{Nayar},~\bfnm{Piotr}\binits{P.}} \AND
  \bauthor{\bsnm{Wendland},~\bfnm{Alex}\binits{A.}}
(\byear{2019}).
\btitle{From the Bernoulli factory to a dice enterprise via perfect sampling of
  Markov chains}.
\bjournal{arXiv preprint arXiv:1912.09229}.
\end{barticle}
\endbibitem

\bibitem{mossel2005new}
\begin{barticle}[author]
\bauthor{\bsnm{Mossel},~\bfnm{Elchanan}\binits{E.}},
  \bauthor{\bsnm{Peres},~\bfnm{Yuval}\binits{Y.}} \betal{et~al.}
(\byear{2005}).
\btitle{New coins from old: computing with unknown bias}.
\bjournal{Combinatorica}
\bvolume{25}
\bpages{707--724}.
\end{barticle}
\endbibitem

\bibitem{nacu2005fast}
\begin{barticle}[author]
\bauthor{\bsnm{Nacu},~\bfnm{{\c{S}}erban}\binits{{\c{S}}.}} \AND
  \bauthor{\bsnm{Peres},~\bfnm{Yuval}\binits{Y.}}
(\byear{2005}).
\btitle{Fast simulation of new coins from old}.
\bjournal{The Annals of Applied Probability}
\bvolume{15}
\bpages{93--115}.
\end{barticle}
\endbibitem

\bibitem{niazadeh2021combinatorial}
\begin{binproceedings}[author]
\bauthor{\bsnm{Niazadeh},~\bfnm{Rad}\binits{R.}},
  \bauthor{\bsnm{Leme},~\bfnm{Renato~Paes}\binits{R.~P.}} \AND
  \bauthor{\bsnm{Schneider},~\bfnm{Jon}\binits{J.}}
(\byear{2021}).
\btitle{Combinatorial Bernoulli factories: matchings, flows, and other
  polytopes}.
In \bbooktitle{Proceedings of the 53rd Annual ACM SIGACT Symposium on Theory of
  Computing}
\bpages{833--846}.
\end{binproceedings}
\endbibitem

\bibitem{patel2019experimental}
\begin{barticle}[author]
\bauthor{\bsnm{Patel},~\bfnm{Raj~B}\binits{R.~B.}},
  \bauthor{\bsnm{Rudolph},~\bfnm{Terry}\binits{T.}} \AND
  \bauthor{\bsnm{Pryde},~\bfnm{Geoff~J}\binits{G.~J.}}
(\byear{2019}).
\btitle{An experimental quantum Bernoulli factory}.
\bjournal{Science advances}
\bvolume{5}
\bpages{eaau6668}.
\end{barticle}
\endbibitem

\bibitem{propp1998coupling}
\begin{barticle}[author]
\bauthor{\bsnm{Propp},~\bfnm{James}\binits{J.}} \AND
  \bauthor{\bsnm{Wilson},~\bfnm{David}\binits{D.}}
(\byear{1998}).
\btitle{Coupling from the past: a user’s guide}.
\bjournal{Microsurveys in discrete probability}
\bvolume{41}
\bpages{181--192}.
\end{barticle}
\endbibitem

\bibitem{propp1996exact}
\begin{barticle}[author]
\bauthor{\bsnm{Propp},~\bfnm{James~Gary}\binits{J.~G.}} \AND
  \bauthor{\bsnm{Wilson},~\bfnm{David~Bruce}\binits{D.~B.}}
(\byear{1996}).
\btitle{Exact sampling with coupled Markov chains and applications to
  statistical mechanics}.
\bjournal{Random Structures \& Algorithms}
\bvolume{9}
\bpages{223--252}.
\end{barticle}
\endbibitem

\bibitem{sampford1967sampling}
\begin{barticle}[author]
\bauthor{\bsnm{Sampford},~\bfnm{MR}\binits{M.}}
(\byear{1967}).
\btitle{On sampling without replacement with unequal probabilities of
  selection}.
\bjournal{Biometrika}
\bvolume{54}
\bpages{499--513}.
\end{barticle}
\endbibitem

\bibitem{schmon2019bernoulli}
\begin{binproceedings}[author]
\bauthor{\bsnm{Schmon},~\bfnm{Sebastian~M}\binits{S.~M.}},
  \bauthor{\bsnm{Doucet},~\bfnm{Arnaud}\binits{A.}} \AND
  \bauthor{\bsnm{Deligiannidis},~\bfnm{George}\binits{G.}}
(\byear{2019}).
\btitle{Bernoulli Race Particle Filters}.
In \bbooktitle{The 22nd International Conference on Artificial Intelligence and
  Statistics}
\bpages{2350--2358}.
\end{binproceedings}
\endbibitem

\bibitem{schrijver2003combinatorial}
\begin{bbook}[author]
\bauthor{\bsnm{Schrijver},~\bfnm{Alexander}\binits{A.}}
(\byear{2003}).
\btitle{Combinatorial optimization: polyhedra and efficiency}
\bvolume{24}.
\bpublisher{Springer Science \& Business Media}.
\end{bbook}
\endbibitem

\bibitem{singh2014entropy}
\begin{binproceedings}[author]
\bauthor{\bsnm{Singh},~\bfnm{Mohit}\binits{M.}} \AND
  \bauthor{\bsnm{Vishnoi},~\bfnm{Nisheeth~K}\binits{N.~K.}}
(\byear{2014}).
\btitle{Entropy, optimization and counting}.
In \bbooktitle{Proceedings of the forty-sixth annual ACM symposium on Theory of
  computing}
\bpages{50--59}.
\end{binproceedings}
\endbibitem

\bibitem{straszak2017real}
\begin{binproceedings}[author]
\bauthor{\bsnm{Straszak},~\bfnm{Damian}\binits{D.}} \AND
  \bauthor{\bsnm{Vishnoi},~\bfnm{Nisheeth~K}\binits{N.~K.}}
(\byear{2017}).
\btitle{Real stable polynomials and matroids: Optimization and counting}.
In \bbooktitle{Proceedings of the 49th Annual ACM SIGACT Symposium on Theory of
  Computing}
\bpages{370--383}.
\end{binproceedings}
\endbibitem

\bibitem{tutte1948dissection}
\begin{binproceedings}[author]
\bauthor{\bsnm{Tutte},~\bfnm{WT}\binits{W.}}
(\byear{1948}).
\btitle{The dissection of equilateral triangles into equilateral triangles}.
In \bbooktitle{Mathematical Proceedings of the Cambridge Philosophical Society}
\bvolume{44}
\bpages{463--482}.
\bpublisher{Cambridge University Press}.
\end{binproceedings}
\endbibitem

\bibitem{von}
\begin{barticle}[author]
\bauthor{\bsnm{Von~Neumann},~\bfnm{John}\binits{J.}}
(\byear{1951}).
\btitle{13 various techniques used in connection with random digits}.
\bjournal{Appl. Math Ser}
\bvolume{12}
\bpages{5}.
\end{barticle}
\endbibitem

\bibitem{wilhelmsen1974markov}
\begin{barticle}[author]
\bauthor{\bsnm{Wilhelmsen},~\bfnm{Don~R}\binits{D.~R.}}
(\byear{1974}).
\btitle{A Markov inequality in several dimensions}.
\bjournal{Journal of Approximation Theory}
\bvolume{11}
\bpages{216--220}.
\end{barticle}
\endbibitem

\bibitem{yuan2016experimental}
\begin{barticle}[author]
\bauthor{\bsnm{Yuan},~\bfnm{Xiao}\binits{X.}},
  \bauthor{\bsnm{Liu},~\bfnm{Ke}\binits{K.}},
  \bauthor{\bsnm{Xu},~\bfnm{Yuan}\binits{Y.}},
  \bauthor{\bsnm{Wang},~\bfnm{Weiting}\binits{W.}},
  \bauthor{\bsnm{Ma},~\bfnm{Yuwei}\binits{Y.}},
  \bauthor{\bsnm{Zhang},~\bfnm{Fang}\binits{F.}},
  \bauthor{\bsnm{Yan},~\bfnm{Zhaopeng}\binits{Z.}},
  \bauthor{\bsnm{Vijay},~\bfnm{R}\binits{R.}},
  \bauthor{\bsnm{Sun},~\bfnm{Luyan}\binits{L.}} \AND
  \bauthor{\bsnm{Ma},~\bfnm{Xiongfeng}\binits{X.}}
(\byear{2016}).
\btitle{Experimental quantum randomness processing using superconducting
  qubits}.
\bjournal{Physical review letters}
\bvolume{117}
\bpages{010502}.
\end{barticle}
\endbibitem

\bibitem{ziegler2012lectures}
\begin{bbook}[author]
\bauthor{\bsnm{Ziegler},~\bfnm{G{\"u}nter~M}\binits{G.~M.}}
(\byear{2012}).
\btitle{Lectures on polytopes}
\bvolume{152}.
\bpublisher{Springer Science \& Business Media}.
\end{bbook}
\endbibitem

\end{thebibliography}
\newpage
\begin{supplement}
 \stitle{}
 
\section{Missing Proofs of \Cref{sec:matching_factory}}
\label{apx:matching}

\subsection{Proof of Lemma \ref{thm:zls}}

\begin{proof}[Proof of Lemma \ref{thm:zls}]
Let $J$ be the $n$-by-$n$ all-$1$ matrix. We evaluate $\det[A+J]$ by performing row and column operations, and showing $n^2\cof_{i,j}[A]=\det[A+J]$ -- which in turn proves the statement in the theorem. For simplicity, consider the first row and column of $A+J$ (same argument applies to any row/column). Add all other rows to the first one. Now all the entries of the first row are equal to $n$, while the rest of the matrix is unaffected. Then add all other columns to the first column. Now the $(1,1)$ entry is equal to $n^2$, and every other entry of the first row and the first column is equal to $n$. Factor out $n$ from the first column (so it becomes $[n,1,\ldots,1]^T$), and subtract the first column from every other column. Now the $(1,1)$ entry is $n$, every other entry of first row is zero, every other entry of first column is $1$, and more importantly every entry $(i,j)$ not in the first row or column is \emph{exactly} equal to $A_{i,j}$, as we subtracted $1$ from each such entry $(i,j)$ of $A+J$. By writing the determinant with respect to entry $(1,1)$, we have $\det[A+J]=n^2\cof_{1,1}[A]$, as desired. 
\end{proof}

\subsection{Proof of Lemma~\ref{lemma:bijection}}

We first show the following lemma, which proves to be useful for showing that every $r$-bi-tree can be uniquely decomposed into a matching and arborescence.

\begin{lemma}\label{lem:bitreedecomp}
Let $G$ be an $r$-bi-tree, and fix a $j \in [n]$. Then there exists a unique matching in $G$ of size $n-1$ where $j_R$ remains unmatched.
\end{lemma}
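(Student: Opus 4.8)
The plan is to first strip away the isolated vertex and recognize the combinatorial heart of an $r$-bi-tree as a tree. Write $T = G \setminus \{r_L\}$. Condition~(i) of Definition~\ref{def:rbitree} says $r_L$ is isolated, so $G$ and $T$ have exactly the same edges and the same matchings; condition~(ii) says $T$ is connected on its $2n-1$ vertices; and condition~(iii) says every one of the $n-1$ left vertices $u_L$ with $u\neq r$ has degree exactly $2$. Since $r_L$ is isolated, every edge of $G$ has its left endpoint among these $n-1$ vertices, so $G$ has exactly $2(n-1)=2n-2$ edges. A connected graph on $2n-1$ vertices with $2n-2$ edges is a tree, hence $T$ is a tree with $n-1$ left vertices (each of degree $2$) and all $n$ right vertices. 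In particular a matching of $G$ of size $n-1$ must use all $n-1$ non-isolated left vertices and therefore misses exactly one right vertex; requiring $j_R$ to be unmatched pins that vertex down to $j_R$ and, symmetrically, saturates every other right vertex.

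For existence, I would root $T$ at $j_R$. Each left vertex $u_L$ (with $u\neq r$) is then a non-root vertex of degree $2$, hence has a unique parent and a unique child; take $M$ to be the set of edges joining each such $u_L$ to its child. This is a genuine matching: a right vertex $v_R$ occurs in $M$ exactly when some left vertex is its parent, and in a rooted tree every non-root vertex has a unique parent (which here is a left vertex, and never $r_L$, since $r_L\notin T$). Thus every right vertex other than the root $j_R$ is covered exactly once, $j_R$ is left uncovered, and $|M| = n-1$.

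For uniqueness I would avoid induction and argue via symmetric differences. Let $M, M'$ be two size-$(n-1)$ matchings of $G$ with $j_R$ unmatched in each; by the counting above both saturate every left vertex $u_L$ with $u\neq r$ and every right vertex except $j_R$, and neither touches $r_L$. Consequently, in the subgraph $M\triangle M'$ every vertex has degree $0$ or $2$: the vertices $u_L$ ($u\neq r$) and $v_R$ ($v\neq j$) are matched in both $M$ and $M'$, so they have even degree, while $r_L$ and $j_R$ have degree $0$. The symmetric difference of two matchings always decomposes into simple paths and even cycles, and the absence of degree-$1$ vertices rules out the paths; so $M\triangle M'$ is a disjoint union of cycles. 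But $M\triangle M' \subseteq T$, which is acyclic, forcing $M\triangle M' = \emptyset$, i.e.\ $M = M'$.

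The only real obstacle is the bookkeeping in the first step: one has to extract from the three abstract conditions of Definition~\ref{def:rbitree} the concrete fact that $T$ is a tree with precisely $n-1$ degree-$2$ left vertices and $n$ right vertices, and then notice that this edge/vertex count forces any size-$(n-1)$ matching to saturate the entire left side (minus $r_L$) while omitting exactly one right vertex. Once that structural picture is fixed, both halves are short — existence is the parent/child matching of a rooted tree, and uniqueness is the observation that the symmetric difference of two candidates lives inside a tree and so cannot contain a cycle.
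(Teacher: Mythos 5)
Your proof is correct. On existence, you and the paper do essentially the same thing: the paper orients each left vertex $u_L$ toward the neighbor at greater distance from $j_R$, which is exactly your ``child'' in the tree rooted at $j_R$, so the matchings you build coincide. On uniqueness, however, you take a genuinely different and cleaner route. The paper argues by choosing a ``bad'' edge of $E_j'$ at minimal distance from $j_R$ and deriving a contradiction with that minimality by propagating one step closer to the root; your argument instead observes that the symmetric difference $M \triangle M'$ of two such matchings has every vertex of even degree (both matchings saturate the same $2n-2$ vertices and omit the same two), hence decomposes into cycles, which is impossible inside the acyclic graph $T$. Your version avoids any reference to the distance stratification, is shorter, and immediately generalizes to the statement that in any forest a perfect (or near-perfect with prescribed missed vertices) matching is unique when it exists. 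The paper's minimality argument is more elementary in that it does not invoke the standard structure theorem for symmetric differences of matchings, but yours is the more transparent of the two.

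One small thing worth being explicit about, though you implicitly use it correctly: when you say each $u_L$ has ``a unique parent and a unique child,'' this relies on $u_L$ not being the root, which holds because the root $j_R$ is a right vertex; and that $v_R$ is covered for $v\neq j$ relies on $v_R$'s unique parent being a left vertex other than $r_L$, which holds because $T$ is bipartite and $r_L\notin T$. You note both points, so the argument is complete.
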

\begin{proof}
Note that since $G$ is an $r$-bi-tree, $j_R$ belongs to the unique non-empty connected component of $G$. \revcolor{This connected component is a tree on $2n-1$ vertices apart from $r_L$, simply because $G$ is bipartite and every vertex on the left has degree exactly $2$, which means this connected component has exactly $2n-2$ edges with $2n-1$ vertices --- hence it is a tree. } For any vertex $v \neq r_L$ in $G$, let its distance from $j_R$, \revcolor{denoted by $\textrm{dist}(j_R,v)$,}  be the length of the unique path from $j_R$ to $v$. 

Consider the set of all edges $E_{j}$ which connect a  node at distance $2k-1$ from $j_R$ with a node at distance $2k$ from $j_R$ \revcolor{for some integer $k$ },\revcolor{i.e., 
$$
E_j=\{e=(i_L,i'_R):e\in E(G),\exists k\in\mathbb{N}: \textrm{dist}(j_R,i_L)=2k-1,\textrm{dist}(j_R,i'_R)=2k\}
$$}
We first claim that $E_j$ is a matching in $G$ of size $n-1$ where $j_R$ remains unmatched. To see this, note that by the bipartite structure of $G$, each node $a_L \neq r_L$ is an odd distance $d$ from $j_R$. Exactly one of the two neighbors of $a_L$ must be at distance $d+1$ from $j_R$ (the other is at distance $d-1$). We therefore have $n-1$ edges, one incident to each $a_L \neq r_L$. To show that this is a valid matching, we must show that no two edges are incident to the same vertex $b_R$ on the right. Assume there is some vertex $b_R$ on the right matched with two vertices on the left, $a_L$ and $a'_L$. Then by the construction of $E_{j}$ if $b_R$ has distance $d$ from $j_R$, both $a_L$ and $a'_L$ have distance $d-1$ from $j_R$ and are connected to $b_R$. But this implies there are at least two different shortest paths from $j_R$ to $b_R$, contradicting the fact that the connected component is a tree; it follows that $E_{j}$ is a matching. Finally, since $j_R$ is at distance $0$ from itself, $E_j$ contains no edges matching $j_R$. 

It remains to show that this matching $E_j$ is the unique matching of size $n-1$ where $j_R$ remains unmatched. Assume to the contrary that there is another matching $E'_j$ of size $n-1$ in $G$ where $j_R$ remains unmatched. Since $E'_j \neq E_j$, $E'_j$ must contain an edge connecting a node at distance $2k$ from $j_R$ with a node at distance $2k+1$ from $j_R$. Let $d$ be the minimum value of $k$ such that there is an edge in $E_j$ connecting vertex $v_R$ at distance $2k$ with vertex $w_L$ at distance $2k+1$. If $d = 0$, then $v_R = j_R$, contradicting that $j_R$ is not matched. Otherwise, note that $v_R$ must be adjacent to some vertex $w'_L$ at distance $2d-1$ from $j_R$. Now, since $|E'_j| = n-1$, $w'_L$ must be matched (since there are only $n-1$ non-isolated nodes on the left), and $w'_L$ cannot be matched to $v_R$ (since $v_R$ is already matched to $w_L$). Since $w'_L$ is on the left, it has a unique other neighbor $v'_R$ distinct from $v_R$, and it follows that $w'_L$ must be matched to $v_R$. But $v'_R$ must have distance $2d-2$ from $j_R$, and we now have an edge connecting a node at distance $2d-2$ and a node at distance $2d-1$, contradicting the minimality of $d$.
\end{proof}

We are now ready to provide a detailed proof of \Cref{lemma:bijection}. 
\begin{proof}[Proof of Lemma~\ref{lemma:bijection}]
 To begin, we will show that $G(\pi, T)$ is always an $r$-bi-tree; i.e. that $G(\pi, T) \in \mathcal{G}_r$. To prove this, it suffices to check that $G(\pi, T)$ satisfies the three conditions in the definition of an $r$-bi-tree (\Cref{def:rbitree}):

\begin{enumerate}
    \item \textbf{The vertex $r_L$ is an isolated vertex:} The vertex $r_{L}$ is explicitly excluded from the edges from the matching $\pi$. The vertex $r$ is the root of the arborescence $T$ and thus has outdegree 0 and contributes no edges containing $r_L$ to the arborescence component of $G(\pi, T)$. 
    \item \textbf{The remainder of the vertices (aside from $r_{L}$) belong to a single connected component:} To see this, first add the edges from the matching. This creates $n-1$ connected components of size $2$, each containing a pair of vertices of the form $\{v_L, \pi(v)_R\}$. Contract all these components into single vertices, and identify each such component with its left vertex $v_L$; to distinguish it from the original vertex $v_L$, we will label this contracted vertex $\overline{v}$. For convenience of notation, we will additionally relabel the isolated vertex $\pi(r)_{R}$ as $\overline{r}$. Now, note that each edge $u \rightarrow v$ in the arborescence adds an edge from $u_L \in \overline{u}$ to $\pi(v)_R \in \overline{v}$. Since in the arborescence there is a path from any vertex to $r$, adding the edges from the arborescence to the bipartite graph implies there is a path from any component to $\overline{r}$, and therefore the vertices in this graph (aside from $r_L$) form a single connected component.
    \item \textbf{Each vertex $j_{L}$ (where $j \neq r$) on the left side has degree $2$:} The vertex $j_L \neq r_L$ is connected to one vertex $\pi(j)_R$ through the matching. Since $j$ is a non-root vertex in the arborescence, it has outdegree $1$ and there exists some edge $j \rightarrow p(j)$ in the arborescence. This contributes the edge $(j_L, \pi(p(j))_{R})$ to $G(\pi, T)$ (note that $\pi(p(j))_R \neq \pi(j)_R$ since $\pi$ is a permutation and $p(j) \neq j$). The vertex $j_L$ belongs to no other edges, and thus has degree $2$.
\end{enumerate}

To complete the bijection, we must show that for any $r$-bi-tree $G'$, there is a unique $(\pi, T)$ pair (with $\pi(r) = c$) such that $G' = G(\pi, T)$. Note that if $G' = G(\pi, T)$, then $\pi$ must correspond to a matching of size $n-1$ in $G'$ where all vertices are matched except $r_L$ and $\pi(r)_R$. If we further impose that $\pi(r) = c$, then $\pi$ must correspond to a matching of size $n-1$ in $G'$ where all vertices are matched except $r_L$ and $c_R$. By Lemma \ref{lem:bitreedecomp}, there is a unique such matching $\pi$ contained in $G'$. Removing the edges corresponding to this matching leaves $n-1$ edges remaining in $G'$. Along with the knowledge of $\pi$, this can be converted uniquely into a directed graph $T$ with $n-1$ edges: for each edge $(u_L, v_R)$ remaining in $G'$, there is a directed edge from $u$ to $\pi^{-1}(v)$ in $T$. 

It now remains to show that $T$ is an arborescence rooted at $r$. Since $T$ has $n-1$ edges, it suffices to show that from any vertex $v$ it is possible to reach $r$. To show this, we will show there is a sequence of vertices $v = v^{(1)}, v^{(2)}, \dots, v^{(k)} = r$ in $T$ such that there exists a path of the form 

$$v^{(1)}_L \rightarrow \pi(v^{(2)})_R \rightarrow v^{(2)}_L \rightarrow \pi(v^{(3)})_R \rightarrow \dots \rightarrow \pi(v^{(k)})_R$$

\noindent
in $G'$. By the construction of $T$, this implies there exists a directed path $v^{(1)} \rightarrow v^{(2)} \rightarrow \dots \rightarrow v^{(k)}$ in $T$ and thus a path from $v$ to $r$. To see that such a path exists, call the edges in $G'$ belonging to the matching $\pi$ ``matching edges'' and the remaining edges ``arborescence edges''. Note that each vertex on the left (except for $r_L$) is incident to exactly one matching edge and exactly one arborescence edge; each vertex on the right (except for $c_R$) is incident to exactly one matching edge. Therefore, repeatedly execute the following procedure, starting from $v_L$: follow the arborescence edge out of $v_L$ to some vertex $w_R$, and follow the matching edge from $w_R$ back to some vertex $v'_L$. This procedure must either end up at $c_R$ at some point (in which case there is no matching edge out of $c_R$ so we terminate) or it must end up in a cycle. However, since the connected component containing $v_L$ in $G'$ is a tree, we cannot end up in a cycle -- it follows that such a path exists to $c_R$, and therefore that $T$ is an arborescence.

\end{proof}
\section{Algebraic Proof of \Cref{prop:matching-factory} (by Darij Grinberg)}\label{appendix:algebraic_proof}

The combinatorial proof of  \Cref{prop:matching-factory}  constructs a bijection between the terms of the polynomial in \Cref{eq:rootatr} and the set of r-bi-trees. Here we present an alternative proof of \Cref{eq:rootatr} communicated to us by Darij Grinberg. The alternate proof is very elegant and purely algebraic.

To recap, the main goal in  \Cref{prop:matching-factory} is to show that for the polynomials $$Q_\pi(x) = \sum_{T\in\mathcal{T}_r(n)}\prod_{(u, v) \in T} x_{u,\pi(u)}x_{u, \pi(v)}$$
the sum $\sum_{\pi \in S_n |\pi(r)=c} Q_\pi(x)$ is independent of $c$. For that, the proof will construct a symmetric polynomial $S(x)$ that depends on neither $c$ nor $r$. Then we obtain the sum above as a projection depending only on $r$ in the ring of polynomials, therefore concluding that the expression is independent of $c$.

\paragraph*{Projection to polynomials of homogeneous multi-degree} Consider the ring $$R = \Z[x_{ij}; 1 \leq i,j \leq n]$$ and associate with each monomial $\prod_{ij} x_{ij}^{a_{ij}}$ the multi-degree $(\sum_j a_{1j}, \sum_j a_{2j}, \hdots, \sum_j a_{nj})$. Now, let $2e_{-r}$ be the multi-degree-vector that has $0$ in the $r$-th component and $2$ in any other component and $F_r : R \rightarrow R$ be the $\Z$-linear map that sends each polynomial in $R$ to its homogeneous component of multi-degree  $2e_{-r}$. In other words, it annihilates all but the monomials with multi-degree $2e_{-r}$.

\paragraph*{Symmetric polynomial} First define terms $y_{uv} \in R$ that are symmetric in $u$ and $v$ ($y_{uv} = y_{vu}$) as follows:
$$y_{uv} = \sum_{i=1}^n x_{iu} x_{iv}$$
and then use them to define the polynomial:
$$S(x) = \sum_{T \in \mathcal{T}(n)} \prod_{(u,v) \in T} y_{uv}  = \sum_{T \in \mathcal{T}_c(n)} \prod_{(u,v) \in T} y_{uv} $$
where $\mathcal{T}(n)$ (without a subscript) represents the set of trees on the set $[n]$. This is well defined since the terms $y_{uv}$ are symmetric.  Since the term is symmetric, we can root it at any point and replace the trees in $\mathcal{T}(n)$ by the arborescences $\mathcal{T}_c(n)$ (as done in the second equality above).

For each arborescence $T \in \mathcal{T}_c(n)$ we can re-write the term $\prod_{(u,v) \in T} y_{uv}$ as follows:

$$\prod_{(u,v) \in T} y_{uv} = \prod_{(u,v) \in T} \sum_i x_{iu} x_{iv} = \sum_{\alpha:T \rightarrow[n]} \prod_{(u,v) \in T} x_{\alpha(u,v), u} x_{\alpha(u,v), v} 
$$
where the sum in the last term is a sum over all maps $\alpha:T \rightarrow [n]$. The last equality follows simply from expanding the product. Using the fact that each arborescence $T \in \mathcal{T}_c(n)$ has exactly one outgoing edge from each vertex $u \in [n] \setminus c$ and no outgoing edge from $c$ we can re-write the above as:
$$\prod_{(u,v) \in T} y_{uv} =  \sum_{\sigma:[n] \setminus c \rightarrow [n] } \prod_{(u,v) \in T} x_{\sigma(u), u} x_{\sigma(u), v} =  \sum_{\substack{\sigma:[n] \rightarrow [n]\\ \sigma(c)=r} } \prod_{(u,v) \in T} x_{\sigma(u), u} x_{\sigma(u), v}
$$
where the second equality follows from extending the maps $\sigma:[n] \setminus c \rightarrow [n]$ to a map $\sigma:[n] \rightarrow [n]$ by mapping $c$ to $r$. Note that this doesn't alter the set of maps and it doesn't affect the expression since for all $(u,v) \in T$ we have $u \neq c$ (since the arborescence is rooted at $c$).

Putting it all together we can re-write:
$$S(x) =  \sum_{\substack{\sigma:[n] \rightarrow [n]\\ \sigma(c)=r} }  \sum_{T \in \mathcal{T}_c(n)} \prod_{(u,v) \in T} x_{\sigma(u), u} x_{\sigma(u), v}$$
Note that although $c$ and $r$ are in the formula above, the polynomial $S(x)$ itself doesn't depend on them since it is defined symmetrically.

\paragraph*{Projecting the polynomial} Now we apply the projection $F_r$ to the symmetric polynomial $S(x)$. Its effect is to annihilate a monomial $\prod_{(u,v) \in T} x_{\sigma(u), u} x_{\sigma(u), v}$ unless it has multidegree $2e_{-r}$. Note that multi-degree of this monomial is $\sum_{(u,v) \in T} 2e_{\sigma(u)} = \sum_{u \in [n]\setminus c} 2e_{\sigma(u)}$ which is $2e_{-r}$ if and only if $\sigma$ is a permutation. This allows us to write the projected polynomial as:
$$F_r(S)(x) =  \sum_{\substack{\sigma \in S_n\\ \sigma(c)=r} } \sum_{T \in \mathcal{T}_c(n)} \prod_{(u,v) \in T} x_{\sigma(u), u} x_{\sigma(u), v} $$
where the only modification is to sum over all permutations $\sigma \in S_n$ instead of all maps $\sigma:[n] \rightarrow [n]$.  Finally we re-write the sum in terms of the inverse permutation $\pi = \sigma^{-1} \in S_n$ and use the fact that iterating over all trees $\pi^{-1}(T) = \{(\pi^{-1}(u), \pi^{-1}(v)); (u,v) \in T\}$ is the same as iterating over all trees in $\mathcal{T}_r(n)$. Doing that we obtain:
$$\begin{aligned}
F_r(S)(x) & =  \sum_{\substack{\pi \in S_n\\ \pi(r)=c} } \sum_{T \in \mathcal{T}_c(n)} \prod_{(u,v) \in T} x_{\pi^{-1}(u), u} x_{\pi^{-1}(u), v} \\
& = \sum_{\substack{\pi \in S_n\\ \pi(r)=c} } \sum_{T \in \mathcal{T}_c(n)} \prod_{(u,v) \in \pi^{-1}(T)} x_{u, \pi(u)} x_{u, \pi(v)} \\
& = \sum_{\substack{\pi \in S_n\\ \pi(r)=c} } \sum_{T \in \mathcal{T}_r(n)} \prod_{(u,v) \in T} x_{u, \pi(u)} x_{u, \pi(v)}\\
& = \sum_{\substack{\pi \in S_n\\ \pi(r)=c} } Q_\pi(x)
\end{aligned}$$

\paragraph*{Conclusion} The statement that the sum $\sum_{\substack{\pi \in S_n; \pi(r)=c} } Q_\pi(x)$ doesn't depend on the choice of $c$ follows from the fact that this is equal to $F_r(S)$ and neither $S$ nor $F_r$ depend on the choice of $c$.

\begin{proof}[Proof of Lemma \ref{lem:bernstein_bound}]
Note that since $P(x)$ is a Bernstein polynomial, $P(x) \geq 0$ for $x \in [0, 1]^n$ (so $|P(x)| \leq \eps$ holds for all $x \in \P$). 

We need to restrict ourself to a subspace where $\P$ is full-dimensional in order to apply Lemma \ref{lem:wilhelmsen}. Assume $\H_{0}(\P)$ is $m$-dimensional for some $m \leq n$. Let $H$ be an orthogonal linear transformation mapping $\H_{0}(\P)$ to $\R^m$. Fix an arbitrary point $p_{0} \in \P$, and let $P_{\perp}(x): \R^m \rightarrow \R$ be the degree $m$ polynomial defined via $P_{\perp}(y) = P(p_{0} + H^{-1}y)$. Note that this same mapping maps $\P$ to a (full-dimensional) polytope $\P_{\perp} \subset \R^m$, so in particular $|P_{\perp}(y)| \leq \eps$ for all $y \in \P_{\perp}$. 

By Wilhelmsen's inequality (Lemma \ref{lem:wilhelmsen}), this implies that $|\partial_{w}P_{\perp}(y)| \leq (2d^2 / \omega(\P_{\perp}))\eps$ for any unit norm $w \in \R^m$. However, since $H$ is orthogonal, it is straightforward to verify that for $u \in \H_{0}(P)$ with $\Vert u \Vert=1$, that $\partial_{Hu}P_{\perp}(y) = \partial{u}P(p_{0} + H^{-1}y)$. Likewise, $\omega(\P_{\perp})$ is simply $\omega_{\H_{0}(\P)}(\P)$. The theorem statement follows.
\end{proof}

\section{Proofs of Theorems \ref{thm:necessary} and \ref{thm:necessary_strong}}
\label{app:necessary}

In this section we provide proofs of Theorem \ref{thm:necessary} and \ref{thm:necessary_strong}. We will prove Theorem \ref{thm:necessary} in two parts. We will first prove a necessary condition on the face structure of a polytope $\P$ for which it is possible to construct a Bernoulli factory. We will then show that this condition only holds for polytopes formed by the intersection of $[0, 1]^n$ and an affine subspace.

\paragraph*{Polyhedral Combinatorics}

We begin with some preliminaries from polyhedral combinatorics. Given a polytope $\P \subset \R^n$, we say a subset $F \subseteq \P$ is a  \textit{face} of $\P$ if $F = \arg\max_{p \in \P} w^{T}p$ for some vector $w \in \R^n$; in other words, $F$ is the set of points maximizing a linear functional over $\P$. The dimension of a face $F$ is the smallest dimension of an affine subspace of $\R^n$ containing $F$. In three dimensions, for example, the vertices of $\P$ are its $0$-dimensional faces, the edges of $\P$ are its $1$-dimensional faces, the facets of $\P$ are its $2$-dimensional faces, and $\P$ itself is its own $3$-dimensional face (assuming $\P$ is full-dimensional). 

The faces of a polytope $\P$ form a graded lattice under containment \citep{ziegler2012lectures}. Given a face $F$, we define the corresponding \textit{open face} $\tilde{F}$ to be the set of points in $F$ which belong to no lower-dimensional faces. Note that the open faces of $\P$ partition $\P$, since each point in $\P$ belongs to a unique maximal face. Let $D_{\P}(x)$ be the set of vectors $w$ such that $x \in \arg \max_{p \in \P} \dot{w}{p}$. The following alternate characterization of open faces will prove useful.

\begin{lemma}\label{lem:open_equiv}
Two points $x, x' \in \P$ belong to the same open face of $\P$ iff $D_{\P}(x) = D_{\P}(x')$. 
\end{lemma}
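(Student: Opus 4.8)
The plan is to rephrase everything in terms of the \emph{minimal face} containing a point. For $x \in \P$, let $F(x)$ be the intersection of all faces of $\P$ that contain $x$. Since the faces of $\P$ form a lattice under containment (as already invoked above), this intersection is itself a face, and by construction it is the unique smallest face of $\P$ containing $x$. The first step is the observation that $x$ lies in the open face $\tilde{G}$ of a face $G$ if and only if $G = F(x)$: if $G = F(x)$ and $x$ also lay in some face $H$ with $\dim H < \dim G$, then $F(x) \subseteq H$ would give $G \subseteq H$, contradicting the dimension count in the graded face lattice; conversely, if $x \in \tilde{G}$ then $F(x) \subseteq G$, and $F(x) \neq G$ would make $F(x)$ a proper — hence strictly lower-dimensional — subface of $G$ containing $x$, contradicting $x \in \tilde{G}$. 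Consequently, $x$ and $x'$ lie in the same open face if and only if $F(x) = F(x')$, and it suffices to prove $F(x) = F(x') \iff D_{\P}(x) = D_{\P}(x')$.

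For the forward implication I would show that $D_{\P}(x)$ is determined by $F(x)$ alone, via the identity
$$D_{\P}(x) = \{\, w \in \R^n : F(x) \subseteq F_w \,\}, \qquad F_w \triangleq \arg\max_{p \in \P}\langle w, p\rangle.$$
Indeed, $w \in D_{\P}(x)$ means exactly $x \in F_w$; since $F_w$ is a face of $\P$ containing $x$ and $F(x)$ is the smallest such face, this forces $F(x) \subseteq F_w$, and conversely $F(x) \subseteq F_w$ gives $x \in F(x) \subseteq F_w$. Hence $F(x) = F(x')$ immediately yields $D_{\P}(x) = D_{\P}(x')$.

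For the reverse implication I would use the standard fact that every face of a polytope is \emph{exposed}: for any face $G$ of $\P$ there exists a vector $w_G$ with $F_{w_G} = G$ exactly — not merely $G \subseteq F_{w_G}$ — e.g. by taking $w_G$ in the relative interior of the normal cone of $G$, or by averaging outer normals of the facets containing $G$ (\citealt{ziegler2012lectures}). Granting this, suppose $D_{\P}(x) = D_{\P}(x')$ and set $G = F(x)$, $G' = F(x')$. Choosing $w_G$ with $F_{w_G} = G$, we have $x \in G = F_{w_G}$, so $w_G \in D_{\P}(x) = D_{\P}(x')$, whence $x' \in F_{w_G} = G$ and therefore $G' = F(x') \subseteq G$. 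The symmetric argument gives $G \subseteq G'$, so $G = G'$ and $x, x'$ lie in the same open face.

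The only ingredient that is not a direct unwinding of definitions is the exposedness of all faces of a polytope; this is the step I would be most careful to cite precisely (it can fail for general convex bodies but always holds for polyhedra), while the reduction to minimal faces and the description of $D_{\P}(x)$ follow purely from the lattice structure of faces already used in the surrounding discussion.
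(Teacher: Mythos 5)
Your proof is correct and follows the same underlying logic as the paper's, just with a slightly different organizational lens: you route everything through the minimal face $F(x)$ and the identity $D_{\P}(x) = \{w : F(x) \subseteq F_w\}$, whereas the paper argues the two contrapositives directly without naming $F(x)$. One small remark: the exposedness of faces, which you flag as the nontrivial ingredient requiring a careful citation, is automatic here because the paper \emph{defines} a face of $\P$ to be a set of the form $\arg\max_{p\in\P}\langle w, p\rangle$. Under that definition, exposedness is tautological and the paper simply writes ``This face $F$ is equal to $\arg\max_{p\in\P}\langle w, p\rangle$ for some $w$'' without further justification. Your version is more robust in the sense that it would survive a different (e.g.\ intersection-with-supporting-hyperplane or lattice-theoretic) definition of face, at the cost of needing the exposedness fact as an external input; the paper's version is shorter because its definitional choice makes that input free. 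Both are valid; the content of the argument is identical.
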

\begin{proof}
First, assume $D_{\P}(x) \neq D_{\P}(x')$. We will then show that $x$ and $x'$ cannot belong to the same open face of $\P$. If $D_{\P}(x) \neq D_{\P}(x')$, then without loss of generality, there exists a $w \in D_{\P}(x)$ such that $w \not\in D_{\P}(x')$. This means that $x$ belongs to the face $\arg\max_{p\in \P}\dot{w}{p}$, but $x'$ does not belong to this face. Since there is a face that $x$ belongs to but not $x'$, $x$ and $x'$ cannot belong to the same open face of $\P$. 

Now, assume that $x$ and $x'$ belong to different open faces of $\P$. We will show that $D_{\P}(x) \neq D_{\P}(x')$. Since $x$ and $x'$ belong to different open faces, there must be a face that one point belongs to that the other does not; without loss of generality, $x$ belongs to some face $F$ that $x'$ does not belong to. This face $F$ is equal to $\arg \max_{p \in \P} \dot{w}{p}$ for some $w$; it follows that $w \in D_{\P}(x)$ but $w \not\in D_{\P}(x')$ so $D_{\P}(x) \neq D_{\P}(x')$.
\end{proof}

If a point $x$ belongs to an open face of $\P$, this implies constraints on representing $x$ as a convex combination of other points in $\P$. 

\begin{lemma}\label{lem:convex}
Let $x$ be a point belonging to the open face $\tilde{F}$ of $\P$. Let $y_1, y_2, \dots, y_m \in \P$ be $m$ other points in $\P$ such that $x = \sum_{i=1}^{m}\lambda_iy_i$ for some coefficients $\lambda_i > 0$ satisfying $\sum_{i=1}^{m}\lambda_i = 1$. Then:

\begin{enumerate}
    \item For each $i$, $y_i \in F$.
    \item For any face $G$ strictly contained in $F$, there exists an $i$ such that $y_i \not\in G$. 
\end{enumerate}
\end{lemma}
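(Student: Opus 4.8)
The plan is to prove Lemma~\ref{lem:convex} directly from the definition of an open face, using the supporting-hyperplane characterization $D_{\P}(x)$ together with Lemma~\ref{lem:open_equiv}.

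\textbf{Part 1.} Since $x$ lies in the open face $\tilde F$, pick any $w \in D_{\P}(x)$, so that $x \in \arg\max_{p \in \P}\dot{w}{p}$; call this maximum value $M = \dot{w}{x}$. For every $y_i \in \P$ we have $\dot{w}{y_i} \le M$. Now compute
\[
M = \dot{w}{x} = \sum_{i=1}^m \lambda_i \dot{w}{y_i} \le \sum_{i=1}^m \lambda_i M = M,
\]
so the inequality is tight; since each $\lambda_i > 0$ this forces $\dot{w}{y_i} = M$ for every $i$, i.e. each $y_i$ also maximizes $\dot{w}{\cdot}$ over $\P$. As this holds for the particular $w$ defining $F$ (namely $F = \arg\max_{p\in\P}\dot{w}{p}$), we conclude $y_i \in F$ for all $i$.

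\textbf{Part 2.} Suppose toward a contradiction that there is a face $G \subsetneq F$ with $y_i \in G$ for all $i$. Faces are convex, so $x = \sum_i \lambda_i y_i \in G$. Then by Lemma~\ref{lem:open_equiv} (applied within $\P$) the point $x$ belongs to some face strictly smaller than $F$, contradicting the assumption that $x$ lies in the \emph{open} face $\tilde F$ (which by definition contains no point of any lower-dimensional face). More concretely: pick $w' \in \R^n$ with $G = \arg\max_{p\in\P}\dot{w'}{p}$. Since each $y_i\in G$ and $x$ is their convex combination, the same tightness computation as in Part~1 (run in reverse) shows $x \in G$; but $G$ is strictly contained in $F$ and in particular has strictly smaller dimension, so $x$ lies in a face properly contained in $F$, contradicting $x \in \tilde F$. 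Hence no such $G$ exists, and some $y_i \notin G$.

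The main (minor) obstacle is making sure the logical scaffolding around ``open face'' is airtight: one must be careful that ``$G$ strictly contained in $F$'' really does force $x \notin \tilde F$ when $x \in G$ — this is exactly the defining property of the open face (a point of $\tilde F$ belongs to no face strictly contained in $F$), so the argument is essentially immediate once that definition is invoked. Everything else is the one-line convexity/tightness computation used twice, once in each direction.
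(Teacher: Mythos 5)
Your proof is correct and follows essentially the same route as the paper's: Part 1 uses the tightness of the supporting-hyperplane inequality under a strictly positive convex combination, and Part 2 uses convexity of $G$ together with the defining property of an open face. One small stylistic caution: in Part 1 you begin with ``pick any $w \in D_{\P}(x)$,'' which only yields $y_i \in \arg\max_{p\in\P}\dot{w}{p}$ for that particular $w$ (a face that merely \emph{contains} $F$); you do correct this by specializing to the $w$ for which $F = \arg\max_{p\in\P}\dot{w}{p}$, but it would be cleaner to choose that $w$ from the start, as the paper does.
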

\begin{proof}
To show 1, note that since $x \in \tilde{F} \subset F$, there exists some vector $w$ such that $x \in \arg\max_{p \in \P} \dot{w}{p}$. Since we can write $\dot{w}{x} = \sum_{i=1}^{n}\lambda_i \dot{w}{y_i}$, and since each $y_i \in \P$, it follows that each of the $y_i$ must also belong to $\arg\max_{p \in \P}\dot{w}{y_i}$ (and thus to $F$). 

To show 2, note that if all $y_i$ belong to $G$, then $x$ belongs to $G$ (since $G$ is convex and $x$ is a convex combination of the $y_i$). But since $x$ belongs to the open face $\tilde{F}$, $x$ cannot belong to any face $G$ strictly contained in $F$. 
\end{proof}

Next, let us consider two nested polytopes $\P$ and $\cQ$ such that $\P \subseteq \cQ$. We claim that every open face of $\P$ belongs to a single open face of $\cQ$. 

\begin{lemma}\label{lem:containment}
Let $\P$ and $\cQ$ be polytopes in $\R^n$ with $\P \subseteq \cQ$. Let $\tilde{F}$ be an open face of $\P$. Then there exists an open face $\tilde{G}$ of $\cQ$ such that $\tilde{F} \subseteq \tilde{G}$.
\end{lemma}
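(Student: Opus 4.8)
The plan is to lift everything to the normal-cone characterization of open faces provided by Lemma~\ref{lem:open_equiv}: two points of a polytope lie in the same open face precisely when they maximize the same set of linear functionals over that polytope. Concretely, I would show that $x \mapsto D_{\cQ}(x)$ is constant on $\tilde{F}$. Once that is established, since the open faces of $\cQ$ partition $\cQ$ and $\tilde{F} \subseteq \P \subseteq \cQ$, all of $\tilde{F}$ must lie inside a single open face $\tilde{G}$ of $\cQ$, which is exactly the claim.

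The one ingredient needed is a monotonicity observation: for every $x \in \P$ (hence $x \in \cQ$), $D_{\cQ}(x) \subseteq D_{\P}(x)$. Indeed, if $w \in D_{\cQ}(x)$ then $\langle w, x\rangle = \max_{p \in \cQ}\langle w,p\rangle \ge \max_{p\in\P}\langle w,p\rangle \ge \langle w,x\rangle$, where the middle inequality uses $\P \subseteq \cQ$ and the last uses $x \in \P$; hence all three quantities are equal, so $w \in D_{\P}(x)$ and moreover $\max_{p\in\cQ}\langle w,p\rangle = \max_{p\in\P}\langle w,p\rangle$ for every such $w$. Now fix $x, x' \in \tilde{F}$. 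By Lemma~\ref{lem:open_equiv} applied to $\P$, $D_{\P}(x) = D_{\P}(x')$. Take any $w \in D_{\cQ}(x)$; by the observation, $w \in D_{\P}(x) = D_{\P}(x')$, so $\langle w, x'\rangle = \max_{p\in\P}\langle w,p\rangle = \max_{p\in\cQ}\langle w,p\rangle$, using the observation once more for the second equality. Since $x' \in \cQ$, this means $w \in D_{\cQ}(x')$. Thus $D_{\cQ}(x) \subseteq D_{\cQ}(x')$, and by the symmetric argument $D_{\cQ}(x) = D_{\cQ}(x')$. Applying Lemma~\ref{lem:open_equiv} to $\cQ$, $x$ and $x'$ lie in the same open face of $\cQ$; as $x, x' \in \tilde{F}$ were arbitrary and distinct open faces of $\cQ$ are disjoint, $\tilde{F}$ is contained in one open face $\tilde{G}$ of $\cQ$.

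I do not expect a genuine obstacle here — it is a routine polyhedral fact once Lemma~\ref{lem:open_equiv} is available. The only subtlety worth flagging is that one should argue through the normal cones $D_{\P}(\cdot)$ rather than trying to track dimensions of faces directly, since a face of $\P$ need not arise as (the intersection of $\P$ with) a face of $\cQ$, so a naive dimension-based argument would not go through.
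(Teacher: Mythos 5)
Your proposal is correct and is essentially the paper's own argument in direct form rather than by contradiction: the paper fixes two points of $\tilde{F}$ lying in different open faces of $\cQ$, picks a separating $w \in D_{\cQ}(x_1) \setminus D_{\cQ}(x_2)$, and derives a contradiction via the same chain $w \in D_{\cQ}(x_1) \Rightarrow w \in D_{\P}(x_1) = D_{\P}(x_2) \Rightarrow w \in D_{\cQ}(x_2)$ that you package as the monotonicity observation $D_{\cQ}(x) \subseteq D_{\P}(x)$ together with $\max_{\P}\langle w,\cdot\rangle = \max_{\cQ}\langle w,\cdot\rangle$. Same key lemma, same inclusions, just a direct rather than proof-by-contradiction presentation.
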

\begin{proof}
Assume to the contrary that there exists an open face $\tilde{F}$ of $\P$ that is not contained entirely in an open face of $\cQ$. In particular, choose two points $x_1, x_2 \in \P$ such that $x_1$ belongs to the open face $\tilde{G}_1$ of $\cQ$ and $x_2$ belongs to the (distinct) open face $\tilde{G}_2$ of $\cQ$. Consider $D_{\cQ}(x_1)$ and $D_{\cQ}(x_2)$; since $x_1$ and $x_2$ belong to different open faces of $\cQ$, these sets differ by Lemma \ref{lem:open_equiv}. Without loss of generality, let $w$ belong to $D_{\cQ}(x_1)$ but not to $D_{\cQ}(x_2)$.

Since $p \in \cQ$ recall that this implies that $\dot{w}{x_1} = \max_{p \in \cQ} \dot{w}{p}$. Since $\P \subseteq \cQ$, this means $\dot{w}{x_1} = \max_{p \in \P} \dot{w}{p}$, and therefore $w \in D_{\P}(x_1)$. Since $x_1$ and $x_2$ belong to the same open face of $\P$, this means $w \in D_{\P}(x_2)$. Finally, this implies that $\dot{w}{x_2} = \max_{p \in \P} \dot{w}{p} = \dot{w}{x_1}$ -- but in this case, we also have that $\dot{w}{x_2} = \max_{p \in \cQ} \dot{w}{p} = \dot{w}{x_1}$, and that $w \in D_{\cQ}(x_2)$, contradicting our earlier assumption. It follows that $D_{\cQ}(x_1) = D_{\cQ}(x_2)$ and that $\tilde{F}$ is contained within a single open face $\tilde{G}$ of $\cQ$. 
\end{proof}

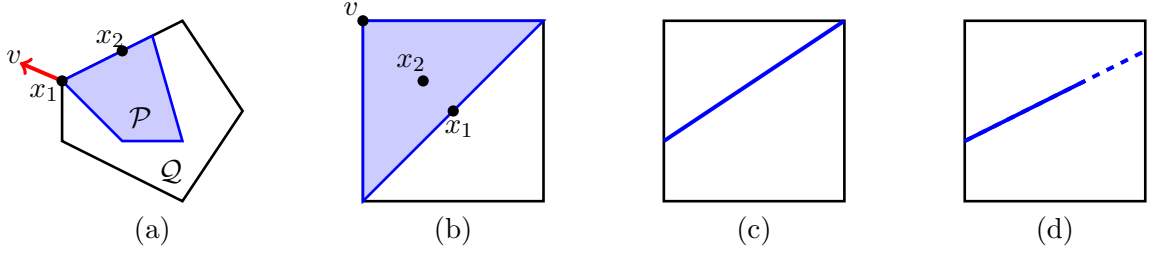
\begin{figure}[h]
\centering
\begin{tikzpicture}[scale=.8]
  \draw[line width=1pt] (0,0)--(0,1)--(2,2)--(3,.5)--(2,-1)--cycle;
  \draw[->,line width=1.5pt, color = red] (0,1)--(-.7,1.3);
  \node at (-.8,1.4) {$v$};
  \draw[line width=1pt, color=blue, fill=blue!20!white] (0,1)--(1.5,1.75)--(2,0)--(1,0)--cycle;
  \node[circle,fill,inner sep=1.5pt] at (0,1) {};
  \node[circle,fill,inner sep=1.5pt] at (1,1.5) {};
  \node at (-.3,.8) {$x_1$};
  \node at (.8,1.7) {$x_2$};
  \node at (1.3,.4) {$\P$};
  \node at (1.8,-.5) {$\cQ$};
  \node at (1.5,-1.5) {(a)};
  
\begin{scope}[xshift=5cm]
\draw[line width=1pt] (0,-1)--(0,2)--(3,2)--(3,-1)--cycle;
\draw[line width=1pt, color=blue, fill=blue!20!white] (0,-1)--(0,2)--(3,2)--cycle;
  \node at (1.5,-1.5) {(b)};
    \node[circle,fill,inner sep=1.5pt] at (1.5,.5) {};
  \node[circle,fill,inner sep=1.5pt] at (1,1) {};
  \node[circle,fill,inner sep=1.5pt] at (0,2) {};
  \node at (1.6,.2) {$x_1$};
  \node at (.8,1.3) {$x_2$};
  \node at (-.2,2.2) {$v$};
\end{scope}

\begin{scope}[xshift=10cm]
\draw[line width=1pt] (0,-1)--(0,2)--(3,2)--(3,-1)--cycle;
\draw[line width=1.5pt, color=blue, fill=blue!20!white] (0,0)--(3,2);
  \node at (1.5,-1.5) {(c)};
\end{scope}

\begin{scope}[xshift=15cm]

\draw[line width=1pt] (0,-1)--(0,2)--(3,2)--(3,-1)--cycle;
\draw[line width=1.5pt, color=blue, dashed] (0,0)--(3,1.5);
\draw[line width=1.5pt, color=blue, fill=blue!20!white] (0,0)--(2,1);
  \node at (1.5,-1.5) {(d)};
\end{scope}

\end{tikzpicture}
\caption{Figure (a) illustrates the proof of Lemma \ref{lem:containment}. Figure (b) illustrates the proof of Lemma \ref{lem:char1}. If $\P$ is the blue polytope it is impossible to build a Bernoulli factory, since at $x_2$ the factory should output vertex $v$ with non-zero probability and at $x_1$ the factory should never output $v$. It is impossible for a Bernoulli factory to put zero probability on the event of outputting $v$ at $x_1$ and non-zero probability at $x_2$. Figure (c) is an example where every open face of $\P$ (blue polytope) is contained in a different open face of $\cQ$. Finally, (d) is an illustration of the proof of Lemma \ref{lem:linear}. The solid line corresponds to $\P$ and the dashed line is the extension to $\cQ$.}
\label{fig:necessary_conditions_appendix}
\end{figure}

\paragraph*{Step 1: Faces in the interior}

Lemma \ref{lem:containment} is important for us since it implies that for any polytope $\P \subseteq [0, 1]^n$ that the open faces of $\P$ are contained in open faces of $[0, 1]^n$. Of the open faces of $[0, 1]^n$, we especially care about the $n$-dimensional interior $(0, 1)^n$, since this contains the domain any (non-strong) Bernoulli factory for $\P$. 

We first show that a discrete factory which has positive probability of outputting an element $s$ somewhere in $(0, 1)^n$ has a positive probability of outputting $s$ \textit{everywhere} in $(0, 1)^n$. 

\begin{lemma}\label{lem:bfpos}
Let $\mathcal{F}(x)$ be a discrete factory to a finite set $S$. Then if $\Pr[\mathcal{F}(x) = s] > 0$ for any $x \in (0, 1)^n$, $\Pr[\mathcal{F}(x) = s] > 0$ for every $x \in (0, 1)^n$. 
\end{lemma}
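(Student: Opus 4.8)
The plan is to use the decision-tree structure of $\mathcal{F}$ together with the fact that each leaf probability is a Bernstein monomial, which is strictly positive on the open cube. Concretely, suppose $\Pr[\mathcal{F}(x_0) = s] > 0$ for some $x_0 \in (0,1)^n$. Since $\Pr[\mathcal{F}(x) = s] = \sum_{\ell : \mathrm{label}(\ell) = s} \Pr[\mathcal{F}(x) \to \ell]$, there must be at least one leaf $\ell$ labelled $s$ with $\Pr[\mathcal{F}(x_0) \to \ell] > 0$. By \Cref{eq:leaf}, $\Pr[\mathcal{F}(x) \to \ell] = c \prod_{i=1}^n x_i^{a_i}(1-x_i)^{b_i}$ for some $c \in [0,1]$ and nonnegative integers $a_i, b_i$. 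The condition $\Pr[\mathcal{F}(x_0) \to \ell] > 0$ forces $c > 0$ (the other factors are automatically nonzero at $x_0 \in (0,1)^n$).

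Now take any other point $x \in (0,1)^n$. Each factor $x_i^{a_i}$ and $(1-x_i)^{b_i}$ is strictly positive since $x_i \in (0,1)$, and $c > 0$, so $\Pr[\mathcal{F}(x) \to \ell] = c \prod_{i=1}^n x_i^{a_i}(1-x_i)^{b_i} > 0$. Therefore $\Pr[\mathcal{F}(x) = s] \geq \Pr[\mathcal{F}(x) \to \ell] > 0$, which is exactly the claim.

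There is essentially no obstacle here: the lemma is a direct consequence of the positivity of Bernstein monomials on the open hypercube, a fact already used repeatedly in the discussion following \Cref{eq:leaf} and in the sketch of \Cref{thm:necessary}. The only mild subtlety is that the tree $\mathcal{T}$ may be infinite, so $\Pr[\mathcal{F}(x) = s]$ is in general an infinite sum of monomials; but this causes no difficulty, since we only need a single leaf with positive contribution, and lower-bounding an infinite sum of nonnegative terms by one of its terms is harmless. (The statement says $\mathcal{F}$ is ``discrete'' and outputs to a finite set $S$, but finiteness of $S$ is not even needed for the argument.)
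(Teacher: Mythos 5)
Your proof is correct and follows exactly the paper's argument: pick a leaf $\ell$ labelled $s$ with positive probability at $x_0$, note the leaf probability is a Bernstein monomial with positive constant, and conclude it is strictly positive everywhere on $(0,1)^n$. The remark about the infinite tree being harmless is a reasonable added clarification, but the core argument is the same.
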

\begin{proof}
Assume that $\Pr[\mathcal{F}(x) = s] > 0$ for a fixed $x \in (0, 1)^n$. This means that there exists a leaf $\ell$ in the protocol tree for $\mathcal{F}$ labelled with $s$ such that $\Pr[\mathcal{F}(x) \rightarrow \ell] > 0$. This probability $\Pr[\mathcal{F}(x) \rightarrow \ell]$ can also be written as some (scaled) Bernstein monomial $\pi(x) = c\prod_{i} x_i^{a_i}(1-x_i)^{b_i}$ (where $c > 0$ since $\pi(x) > 0$). But then $\pi(x) > 0$ for all $x \in (0, 1)^n$, and therefore $\Pr[\mathcal{F}(x) = s] > 0$.
\end{proof}

We are now ready to prove the first step of our argument: that if $\P$ contains two open faces that belong to the interior of $[0, 1]^n$, then there does not exist a Bernoulli factory for $\P$. 

\begin{lemma}\label{lem:char1}
Let $\tilde{F}_1$ and $\tilde{F}_2$ be two different open faces of a polytope $\P \subseteq [0, 1]^n$. If $\tilde{F}_1$ and $\tilde{F}_2$ are both contained in $(0, 1)^n$, then it is impossible to build a Bernoulli factory for $\P$. 
\end{lemma}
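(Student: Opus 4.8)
The plan is to argue by contradiction, mirroring the two-point obstruction already sketched in the introduction and in Figure~\ref{fig:2d_fac}. Suppose $\mathcal F$ is a Bernoulli factory for $\P$, and suppose $\tilde F_1$ and $\tilde F_2$ are two distinct open faces of $\P$ both contained in $(0,1)^n$. Since the open faces of $\P$ form a partition of $\P$ and $\tilde F_1\neq\tilde F_2$, we may assume without loss of generality (relabelling if necessary) that $F_1$ is \emph{not} a subface of $F_2$; equivalently, by Lemma~\ref{lem:open_equiv}, there is a linear functional $w$ with $w\in D_\P(x_1)$ for $x_1\in\tilde F_1$ but $w\notin D_\P(x_2)$ for $x_2\in\tilde F_2$. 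Concretely: pick $x_1\in\tilde F_1$ and $x_2\in\tilde F_2$; the face $F_1=\arg\max_{p\in\P}\dot{w}{p}$ contains $x_1$ but not $x_2$.

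The next step is to locate a vertex $v$ of $\P$ that "witnesses'' the obstruction, i.e.\ a vertex that $\mathcal F(x_2)$ must output with positive probability but $\mathcal F(x_1)$ must output with zero probability. For the positive side: write $x_2$ as a convex combination of vertices of $\P$, say $x_2=\sum_i\lambda_i v_i$ with all $\lambda_i>0$. By Lemma~\ref{lem:convex}(2) applied with $F=F_2$ (the smallest face containing $x_2$) and $G=F_1\cap F_2$ (a face strictly contained in $F_2$, since $x_2\notin F_1$ forces $F_2\not\subseteq F_1$), there is some vertex $v=v_{i_0}$ in this combination with $v\notin F_1$. Because $\E[\mathcal F(x_2)]=x_2=\sum_i\lambda_i v_i$ and every vertex in the support of $\mathcal F(x_2)$ must be a vertex of $\P$, and since $v$ genuinely appears with positive weight in the (not necessarily unique, but this particular) convex combination --- here I would instead argue directly: the set of vertices output by $\mathcal F(x_2)$ with positive probability is a subset of $V$ whose convex hull contains $x_2$, so if $v$ lay outside $F_2$'s... let me restate the clean version: $\mathcal F(x_2)$ outputs vertices $v$ with $\E=x_2$; since $x_2$ lies in the relative interior of $F_2$ and $F_2$ is the minimal face containing $x_2$, every vertex in the support must lie in $F_2$ (Lemma~\ref{lem:convex}(1)), and moreover the supported vertices cannot all lie in a proper subface $G\subsetneq F_2$ (Lemma~\ref{lem:convex}(2)); taking $G=F_1\cap F_2$ gives a supported vertex $v\notin F_1$, hence $\Pr[\mathcal F(x_2)=v]>0$. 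For the zero side: since $v\notin F_1=\arg\max_{p\in\P}\dot w p$ and $x_1\in F_1$, we have $\dot w v<\dot w{x_1}$; as $\E[\mathcal F(x_1)]=x_1$ and $\dot w{\mathcal F(x_1)}\le \dot w{x_1}$ pointwise (since every output is in $\P$), equality of expectations forces $\dot w{\mathcal F(x_1)}=\dot w{x_1}$ almost surely, so $\mathcal F(x_1)$ never outputs $v$; thus $\Pr[\mathcal F(x_1)=v]=0$.

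The contradiction is then immediate from Lemma~\ref{lem:bfpos} (the ``Bernstein monomial positivity'' lemma): the event $\{\mathcal F(x)=v\}$ is a discrete-factory output event, both $x_1$ and $x_2$ lie in $(0,1)^n$, and $\Pr[\mathcal F(x_2)=v]>0$; hence $\Pr[\mathcal F(x_1)=v]>0$, contradicting $\Pr[\mathcal F(x_1)=v]=0$. Therefore no Bernoulli factory for $\P$ can exist.

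I expect the main obstacle to be bookkeeping the polyhedral combinatorics correctly --- specifically, justifying cleanly that the support of $\mathcal F(x_2)$ contains a vertex outside $F_1$. The subtlety is that ``$x_2$ is a convex combination of its output vertices'' only gives Lemma~\ref{lem:convex}(1) directly (all outputs lie in the minimal face $F_2$), and to get a vertex \emph{outside} $F_1$ one must invoke part (2) with the right choice of subface, namely $G=F_1\cap F_2$, and check $G\subsetneq F_2$ --- which holds precisely because we arranged $x_2\notin F_1$. The rest (the functional-maximization argument showing $\mathcal F(x_1)$ avoids $v$, and the appeal to Lemma~\ref{lem:bfpos}) is routine. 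One should also double-check the WLOG at the start: if neither $F_1\subseteq F_2$ nor $F_2\subseteq F_1$, either choice of labelling works; if say $F_1\subsetneq F_2$, then we take the roles as above with $x_2\in\tilde F_2$, $x_1\in\tilde F_1$, and $F_1$ is itself a proper subface of $F_2$ so the argument goes through with $G=F_1$ directly.
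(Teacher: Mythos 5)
Your overall strategy is the same as the paper's: exhibit a vertex that must appear with positive probability at one interior point and with zero probability at another, then invoke Lemma~\ref{lem:bfpos}. (The paper works with $v \in V(x_1)\setminus V(x_2)$ by taking $G = F_1\cap F_2 \subsetneq F_1$; you work with $v \in V(x_2)\setminus V(x_1)$ by taking $G \subsetneq F_2$; these are the same argument up to relabelling.) That said, there are two small slips worth correcting.

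First, the WLOG is stated with the labels the wrong way around. You assume ``$F_1$ is not a subface of $F_2$'', i.e.\ $F_1 \not\subseteq F_2$, but everything downstream requires $x_2 \notin F_1$, which is equivalent to $F_2 \not\subseteq F_1$ --- the opposite exclusion. Likewise, $\exists\, w \in D_\P(x_1)\setminus D_\P(x_2)$ characterizes $F_2 \not\subseteq F_1$, not $F_1 \not\subseteq F_2$ (normal cones are larger for smaller faces). The fix is just to state the WLOG as ``$F_2$ is not a subface of $F_1$''; this is achievable since $F_1 \neq F_2$ rules out both inclusions holding simultaneously, and with it $G = F_1\cap F_2 \subsetneq F_2$ as you want.

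Second, you write $F_1 = \arg\max_{p\in\P}\dot{w}{p}$, but a $w$ produced by Lemma~\ref{lem:open_equiv} only guarantees $F_1 \subseteq \arg\max_{p\in\P}\dot{w}{p}$. Your ``zero side'' then needs $v \notin \arg\max_{p\in\P}\dot{w}{p}$, whereas you only established $v \notin F_1$, which is weaker if the containment is strict. Two clean repairs: (i) choose $w$ so that $\arg\max_{p\in\P}\dot{w}{p}$ equals $F_1$ exactly (always possible for a face; and since $x_2 \notin F_1$ under the corrected WLOG, $w \notin D_\P(x_2)$ follows automatically); or (ii) drop the linear-functional argument and apply Lemma~\ref{lem:convex}(1) directly to $x_1$ to get $V(x_1)\subseteq F_1$, hence $v \notin V(x_1)$ --- this is the paper's route and avoids the issue entirely.
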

\begin{proof}
Assume to the contrary that there exists a Bernoulli factory for such a $\P$. Choose a point $x_1 \in \tilde{F}_1$ and a point $x_2 \in \tilde{F}_2$; by assumption, both $x_1$ and $x_2$ also belong to $(0, 1)^n$. 

If we run our Bernoulli factory on a point $x \in \P$, it will output each of the vertices of $\P$ with some probability. Let $V(x)$ be the subset of these vertices which are output with positive probability. We first claim that since $x_1$ and $x_2$ belong to different open faces of $\P$, $V(x_1) \neq V(x_2)$. To see this, let $G = F_1 \cap F_2$; since $F_1 \neq F_2$, $G$ is strictly contained within at least one of $F_1$ or $F_2$; without loss of generality $G \subset F_1$. Now, note that (by the definition of $V(x)$), it is possible to write $x$ as a positive convex combination of the vertices in $V(x)$. By condition (2) of Lemma \ref{lem:convex}, this means there exists a vertex $v \in V(x_1)$ such that $v \not\in G$ (and thus, $v \not\in F_2$). But by condition (1) of Lemma \ref{lem:convex}, this means that every vertex $v' \in V(x_2)$ satisfies $v' \in F_2$. It follows that $V(x_1) \neq V(x_2)$.

Now, without loss of generality, assume there exists a vertex $v$ which belongs to $V(x_1)$ but not to $V(x_2)$. Since $v \in V(x_1)$, this means that the Bernoulli factory has a positive probability of outputting $v$ on input $x_1$.  From Lemma \ref{lem:bfpos}, since $x_1$ and $x_2$ both lie in $(0, 1)^n$, this implies that the Bernoulli factory has a positive probability of outputting $v$ on input $x_2$. But this implies that $v \in V(x_2)$, contradicting our choice of $v$. It follows that no Bernoulli factory for $\P$ can exist, as desired.
\end{proof}

\paragraph*{Step 2: Affine intersections}

We now show that polytopes that don't satisfy the condition in Lemma \ref{lem:char1} are exactly the polytopes that can be written as the intersection of $[0, 1]^n$ and an affine subspace.

\begin{lemma}\label{lem:linear}
Let $\P \subseteq [0, 1]^n$ be a polytope such that $\P \cap (0, 1)^n \neq \emptyset$. If the interior of $\P$ is the unique open face of $\P$ contained in $(0, 1)^n$, then $\P$ is the intersection of $[0, 1]^n$ and an affine subspace.
\end{lemma}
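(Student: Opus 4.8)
The plan is to prove the contrapositive: assuming $\P$ cannot be written as $[0,1]^n \cap \H$ for any affine subspace $\H$, I will produce two distinct open faces of $\P$ that both lie inside $(0,1)^n$, contradicting the hypothesis that the (relative) interior of $\P$ is the unique open face of $\P$ contained in $(0,1)^n$. The right ambient subspace to work with is the affine span of $\P$, which I denote $\H$; set $\cQ = [0,1]^n \cap \H$. Since every affine subspace containing $\P$ also contains $\H$, the assumption that $\P$ is not an affine slice of the cube is equivalent to the strict containment $\P \subsetneq \cQ$ (note $\P \subseteq \cQ$ always holds).

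Before the main argument I would record two observations about the relative interior of $\P$. First, any open face of $\P$ that meets $(0,1)^n$ must, by Lemma~\ref{lem:containment} applied with the cube as the larger polytope, be contained in a single open face of $[0,1]^n$; since the only open face of $[0,1]^n$ meeting $(0,1)^n$ is $(0,1)^n$ itself, every open face of $\P$ that touches $(0,1)^n$ is entirely contained in $(0,1)^n$. Second, starting from a point $y \in \P \cap (0,1)^n$ (which exists by hypothesis) and a point $z$ in the relative interior of $\P$, the points $(1-t)y + tz$ lie in the relative interior of $\P$ for all $t \in (0,1]$ and converge to $y$ as $t \to 0^{+}$; since $(0,1)^n$ is open, some such point lies in $(0,1)^n$, so the relative interior of $\P$ meets $(0,1)^n$ and is therefore contained in it.

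The heart of the proof is to exhibit a proper face of $\P$ that also lands in $(0,1)^n$. I would fix a point $p$ in the relative interior of $\P$ with $p \in (0,1)^n$ and a point $q \in \cQ \setminus \P$ (available by the strict containment), and move along the segment from $p$ toward $q$: let $t^{*} = \max\{t \in [0,1] : (1-t)p + tq \in \P\}$, which is attained because $\P$ is closed. Then $t^{*} > 0$ because $p$ lies in the relative interior of $\P$, and $t^{*} < 1$ because $q \notin \P$; moreover the whole segment lies in $\H$, so maximality of $t^{*}$ prevents $x^{*} := (1-t^{*})p + t^{*}q$ from lying in the relative interior of $\P$. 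Hence $x^{*}$ lies in the relative interior of some face $F$ of $\P$ with $\dim F < \dim \P$. The key point is that $x^{*} \in (0,1)^n$: it is a convex combination of $p \in (0,1)^n$ and $q \in [0,1]^n$ in which $p$ carries the strictly positive weight $1 - t^{*}$, so each coordinate obeys $0 < (1-t^{*})p_i \le x^{*}_i \le (1-t^{*})p_i + t^{*} < 1$. Applying Lemma~\ref{lem:containment} once more, the open face containing $x^{*}$ --- the relative interior of $F$ --- is contained in $(0,1)^n$. This open face is distinct from the relative interior of $\P$ (their dimensions differ), so we have exhibited two distinct open faces of $\P$ inside $(0,1)^n$, contradicting the hypothesis and completing the contrapositive.

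I expect the one delicate point to be the claim that the exit point $x^{*}$ sits on a \emph{proper} face of $\P$ rather than remaining in its relative interior: this is precisely where it matters that $q$ lies in $\H$ (so that the segment continues past $x^{*}$ within $\H$, which would contradict the maximality of $t^{*}$), and it is the reason for taking $\H$ to be the affine span of $\P$ and working inside $\cQ$. Everything else is routine bookkeeping with Lemma~\ref{lem:containment} and elementary convexity.
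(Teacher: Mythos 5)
Your proof is correct and follows essentially the same route as the paper: prove the contrapositive, take $\H$ to be the affine span of $\P$, set $\cQ = [0,1]^n\cap\H$, observe $\P\subsetneq\cQ$, and exhibit a relative-boundary point of $\P$ lying in $(0,1)^n$ to get a second open face inside the open cube. Your segment construction of $x^*$ and the coordinatewise bound $0 < (1-t^*)p_i \le x^*_i < 1$ simply make explicit a step the paper asserts more briefly (``there exists a point $x$ on the boundary of $\P$ that lies in the interior of $\cQ$''), but the underlying idea and the appeal to Lemma~\ref{lem:containment} are the same.
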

\begin{proof}
Assume to the contrary that $\P$ is not the intersection of $[0, 1]^n$ and an affine space. We will show that there are two open faces of $\P$ that lie in the same open face of $[0, 1]^n$. 

Let $H$ be the affine span of $\P$ (i.e., the smallest affine subspace containing $\P$). Let $\cQ = [0, 1]^n \cap H$. By assumption, $\P$ is strictly contained in $\cQ$. In particular, this means that there exists a point $x$ on the boundary of $\P$ that lies in the interior of $\cQ$. The open face $\tilde{F}$ of $\P$ containing $x$ must also lie in the interior of $\cQ$. Since $\cQ \subseteq [0, 1]^n$, this means $\tilde{F}$ lies in $(0, 1)^n$. But the interior of $\P$ must also lie in $(0, 1)^n$, and is distinct from $\tilde{F}$ (since $x$ is on the boundary of $\P$). We thus have two open faces of $\P$ (a $\tilde{F}$ and $\P$'s interior) which both lie in $(0, 1)^n$. This contradicts our assumption and therefore $\P$ must be the intersection of $[0, 1]^n$ and an affine subspace.
\end{proof}

The proof of Theorem \ref{thm:necessary} now follows immediately.

\begin{proof}[Proof of Theorem \ref{thm:necessary}]
Follows from Lemmas \ref{lem:char1} and \ref{lem:linear}.
\end{proof}

\paragraph*{Necessary condition for strong Bernoulli factories}

Finally, we prove Theorem \ref{thm:necessary_strong}, the necessary condition for strong Bernoulli factories. We will be able to do this by reducing to Theorem \ref{thm:necessary}.

\begin{proof}[Proof of Theorem \ref{thm:necessary_strong}]
Note that any strong Bernoulli factory for $\P$ is also a regular Bernoulli factory for $\P$. Thus, if $\P \cap (0, 1)^n \neq \emptyset$, this impossibility follows from Theorem \ref{thm:necessary}.

Assume then that $\P \cap (0, 1)^n = \emptyset$. Then $\P$ is contained in some minimal face $F$ of $[0, 1]^n$. If $F$ is $m$-dimensional, it is isomorphic to $[0, 1]^m$ (in particular, we can think of $F$ as the set of points where we fix $n-m$ of the coordinates of $x$ and the remaining coordinates can range from $0$ to $1$). Let $\P'$ be the projection of $\P$ to $[0, 1]^m$ (we can think of projection here as simply omitting the fixed coordinates of $F$). Note that since $F$ is minimal, $\P' \cap [0, 1]^{m} \neq \emptyset$.

We claim that any strong Bernoulli factory $\mathcal{F}$ for $\P$ gives a regular Bernoulli factory for $\P'$; in particular, given a point $x' \in \P'$, we can transform it to a point $x \in \P$ by reintroducing the fixed coordinates, and run $\mathcal{F}(x)$. It follows from Theorem \ref{thm:necessary} that $\P'$ must be the intersection of an affine subspace with $[0, 1]^m$. But then $\P$ must be the intersection of an affine subspace with $[0, 1]^n$, as desired.

\end{proof}


\section{Missing Proofs of \Cref{sec:generic}}
\label{appendix:missingproofs}
\subsection{Proof of Lemma~\ref{lemma:disjoint_pm}}
\begin{proof}[Proof of Lemma~\ref{lemma:disjoint_pm}]
Consider a point $x$ in $Z_B^i$. This point can be written in the form:
$$x = w^i + \sum_{r \in I \setminus \{i\}} \lambda_r w^r \quad \text{for} \quad \lambda_r \in (0,1)$$
Now consider a point $x'$ in $Z_A^t$, which can similarly be written in the form:
$$x' = \sum_{r \in I \setminus \{t\}} \lambda'_r w^r \quad \text{for} \quad \lambda'_r \in (0,1)$$
Assume to the contrary that $x = x'$. We then have that

\begin{equation}\label{eq:decomp}
    (1 - \lambda'_i)w^i + \lambda_t w^t = \sum_{r \in I \setminus \{i, t\}} (\lambda'_r - \lambda_r)w^r.
\end{equation}

Note that the term on the right hand side of \eqref{eq:decomp} belongs to the hyperplane $\H$ spanned by the $(k-1)$ vectors $w^r$ where $r \not\in \{i, t\}$. We will now write:
\begin{equation}\label{eq:hperp}
 w^i =  h_i + c_ih_{\perp} \quad \text{and} \quad  w^{t} = h_t + c_th_{\perp}
\end{equation}
for $h_i, h_j \in \H$ and $h_{\perp}$  a unit vector orthogonal to $\H$. 

Observe now that $c_i$ and $c_t$ must have the same sign. First consider the case where neither is $j$. Then:
$$-1 = \frac{\sigma_i}{\sigma_t} = \sign \left( \frac{\det[W' w^j w^t]}{\det[W' w^i w^j]} \right) = - \sign \left( \frac{\det[W' w^j w^t]}{\det[W' w^j w^i]} \right)$$
where $W'$ is a matrix containing all columns $w^r$ except $w^i$, $w^j$, and $w^t$. Finally note that:
$$\det[W' w^j w^t] = c_t \det[W' w^j h_\perp], \quad \det[W' w^j w^i] = c_i \det[W' w^j h_\perp] $$
which shows that $\sign(c_i / c_j) = +1$. The case where one of the indices $\{i,t\}$ equals $j$ is analogous. If $t$ is $j$ then we have that:
$$-1 = \frac{\sigma_i}{\sigma_j} = \sigma_i = - \sign \left( \frac{\det[W' w^i ]}{\det[W' w^j]} \right)$$
where now the columns of $W'$ are formed by all the other $w^r$ except $w^i, w^j$. We again reach the same conclusion that $\sign(c_i/c_j) = +1$.   

Now, assume without loss of generality that $c_i$ and $c_j$ are both positive. But then, since $(1-\lambda'_i) > 0$ and $\lambda_t > 0$, the left hand side of (\ref{eq:decomp}) will have a positive $h_{\perp}$ component and cannot lie entirely in $H$. Thus it is not possible that $x = x'$, as desired. 
\end{proof}

\subsection{Proof of Lemma~\ref{lemma:disjoint_pp}}
\begin{proof}[Proof of Lemma~\ref{lemma:disjoint_pp}]
The proof follows a similar pattern as the proof of Lemma \ref{lemma:disjoint_pp}. We first choose points $x = \sum_{r \in I \setminus i} \lambda_r w^r \in Z_A^i$ and $x'=  \sum_{r \in I \setminus t} \lambda_r w^r\in Z_A^t$. Now, if $x=x'$ then $\lambda_i w^i - \lambda_t w^t$ should be contained in the hyperplane $\H$ spanned by the vectors $w^r$ for $r \neq i,t$. This again allows us to write $w^i$ and $w^t$ as in \Cref{eq:hperp}, but this time with $\sign(c_i / c_t) = -1$ since $\sigma_i / \sigma_t = +1$. With this sign pattern it is impossible to have $\lambda_i w^i - \lambda_t w^t$ in $\H$ since it will have a non-zero $h_\perp$ component.

The argument for $Z_B^i$ and $Z_B^t$ is the same.
\end{proof}
\section{Impossibility of Extending $k$-subset Factories to the Boundary}\label{sec:impossibility_race_over_bernstein}

In \Cref{sec:non_generic}, we observed that the Bernoulli factories we designed for the $k$-out-of-$n$ subset polytope $\Pnk$ are not strong Bernoulli factories -- i.e., they do not extend to the boundary of $[0, 1]^n$. It is natural to ask whether there do exist strong Bernoulli factories for these polytopes. In this section, we show that there is no ``nice'' strong Bernoulli factory for $\Pnk$ for integral $k$ satisfying $1 < k < n-1$.

To define what we mean by ``nice'', we need to introduce some auxiliary notation. Previously, we have restricted our attention to Bernoulli factories that terminate almost surely on their domain. In this section, we will want to restrict our attention to factories that not only terminate a.s., but that terminate quickly. Let $T_{\mathcal{F}}(x)$ be the random variable equal to the depth of the leaf node on which $\mathcal{F}(x)$ terminates (with $T_{\mathcal{F}}(x) = \infty$ if the execution of $\mathcal{F}(x)$ never terminates). $T_{\mathcal{F}}(x)$ represents the total number of coins flipped by the factory $\mathcal{F}$ in a single execution. We say that $\mathcal{F}$ \textit{converges exponentially} on a domain $S \subseteq [0,1]^n$ if there exists a constant $c < 1$ such that

$$\Pr[T_{\mathcal{F}}(x) > d] \leq c^d$$

\noindent
for all positive integers $d$ and $x \in S$. This notion of exponential convergence appears throughout the Bernoulli factory literature (for example, \citep{nacu2005fast} refer to this as ``fast simulation''); most known explicit Bernoulli factories have the property of exponential convergence. 

We prove the following theorem.

\begin{theorem}\label{thm:impossibility}
Let $k$ be an integer satisfying $1<k<n-1$. There is no strong Bernoulli factory for $\Pnk$ that converges exponentially.
\end{theorem}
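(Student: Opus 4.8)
The plan is to reduce the statement to a single small polytope and then exploit the rigidity that exponential convergence forces. The key reduction: whenever $1<k<n-1$, the hypersimplex $\Pnk$ has a face isomorphic to the octahedron $\Delta := \mathcal{P}_{2,4} = \{x\in[0,1]^4 : x_1+x_2+x_3+x_4=2\}$, obtained by fixing some $k-2$ coordinates to $1$ and some $n-k-2$ coordinates to $0$ (possible since $k\ge 2$ and $n-k\ge 2$), so that the remaining four coordinates sum to $2$. Given a strong, exponentially converging factory $\mathcal{F}$ for $\Pnk$, I would argue that its restriction to this face is a strong, exponentially converging factory for $\Delta$: for $x$ in the face and each fixed coordinate $i$ one has $\E[\mathcal{F}(x)]_i = x_i\in\{0,1\}$, which (together with $\P\subseteq[0,1]^n$) forces $\mathcal{F}$ to output only vertices agreeing with the fixed coordinates, i.e.\ vertices of the face; collapsing the deterministic flips of the fixed coins then yields a bona fide factory for $\Delta$ whose depth never exceeds that of $\mathcal{F}$. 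So it suffices to show $\Delta$ admits no strong, exponentially converging factory.

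Assume one exists, say with $\Pr[T_{\mathcal{F}}(x)>d]\le c^d$ for a constant $c<1$. For each vertex $v$ of $\Delta$ set $h_v(x):=\Pr[\mathcal{F}(x)=v]$ and $p_v^{(d)}(x):=\Pr[\mathcal{F}(x)=v,\ T_{\mathcal{F}}(x)\le d]$; the latter is a finite sum of Bernstein monomials, hence a polynomial of degree $\le d$, and $0\le h_v(x)-p_v^{(d)}(x)\le\Pr[T_{\mathcal{F}}(x)>d]\le c^d$ for all $x\in\Delta$. Consequently each $h_v$ is continuous on $\Delta$, and (extending by continuity the factory identity which holds on $\Delta\cap(0,1)^4$) $\sum_v h_v\equiv 1$ and $\sum_v h_v(x)\,v\equiv x$ on all of $\Delta$. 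Parametrising the affine span of $\Delta$ by $(x_1,x_2,x_3)$ with $x_4=2-x_1-x_2-x_3$, I would then use that $\|p_v^{(d+1)}-p_v^{(d)}\|_{\Delta}\le 2c^d$ while $\deg p_v^{(d)}\le d$ grows only linearly: by the Bernstein--Walsh inequality for the convex body $\Delta$ there is a connected complex neighbourhood $\Omega\subseteq\C^3$ of $\Delta$ on which the $p_v^{(d)}$ converge uniformly, so each $h_v$ extends to a holomorphic function on $\Omega$.

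Now I exploit the face structure of the octahedron. Every facet of $\Delta$ is a triangle ($\{x_i=0\}\cong\mathcal{P}_{2,3}$, $\{x_i=1\}\cong\mathcal{P}_{1,3}$), so on a facet the coordinate constraint forces $\mathcal{F}(x)$ to be supported on the three affinely independent facet vertices with unit total mass and barycentre $x$; hence $h_v$ restricted to the facet equals the corresponding affine barycentric coordinate if $v$ is a facet vertex, and is $\equiv 0$ otherwise. Apply this with $v=w:=(1,1,0,0)$. Then $h_w$ vanishes on the four facets $\{x_1=0\},\{x_2=0\},\{x_3=1\},\{x_4=1\}$, i.e.\ on open subsets of the four distinct hyperplanes $x_1=0$, $x_2=0$, $1-x_3=0$, $x_1+x_2+x_3-1=0$; by the identity theorem $h_w$ vanishes on each of these hyperplanes throughout $\Omega$, so $h_w = x_1\,x_2\,(1-x_3)\,(x_1+x_2+x_3-1)\cdot G$ for some $G$ holomorphic on $\Omega$. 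But on the facet $\{x_1=1\}$ (the triangle with vertices $(1,1,0,0),(1,0,1,0),(1,0,0,1)$) the barycentric computation gives $h_w=x_2$; substituting $x_1=1$ into the factorisation yields $x_2=x_2(1-x_3)(x_2+x_3)\,G$, hence $G=\frac{1}{(1-x_3)(x_2+x_3)}$ on the relative interior of that facet. This expression diverges as $(x_2,x_3)\to(0,0)$, i.e.\ as one approaches the facet vertex $(1,0,0,1)\in\Delta\subseteq\Omega$, contradicting the continuity of $G$ there. This contradiction proves the theorem.

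The step I expect to be the main obstacle is the holomorphic extension: converting "$h_v$ is approximable on $\Delta$ by degree-$d$ polynomials with error $c^d$" into honest holomorphy on a full complex neighbourhood of $\Delta$, including near its boundary facets and vertices — which is exactly what makes the identity-theorem and continuity arguments of the last paragraph legitimate (a mere real-analyticity statement on the relative interior would not suffice, since the entire argument lives on the boundary). This is where a Bernstein--Walsh / Siciak extremal-function estimate is needed (equivalently, the Nacu--Peres-style analyticity argument carried out uniformly up to the boundary); the remaining ingredients are elementary facts about the face lattice of hypersimplices and about barycentric coordinates of simplices.
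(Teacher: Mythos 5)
Your proposal is correct in outline but takes a genuinely different route from the paper. The paper's argument is purely first-order: Wilhelmsen's multivariate Markov-brothers inequality converts exponential convergence into \emph{differentiability} of each $P_v(x)=\Pr[\mathcal{F}(x)=v]$ on $\Pnk$ (with $\partial_u P_v$ obtained as the limit of $\partial_u P_{v,T}$), a short divisibility argument shows each truncation $P_{v,T}$ is divisible by $\prod_{i:v_i=1}x_i\,\prod_{i:v_i=0}(1-x_i)$ and hence has vanishing gradient at every \emph{other} vertex $v'$, and differentiating the identity $\sum_v P_v(x)(v-x)\equiv 0$ at $v'$ along a suitable $u\in\H_0(\Pnk)$ then forces $P_{v'}(v')=0$, contradicting $\sum_v P_v(v')=1$. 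You instead reduce to the octahedron $\P_{2,4}$ (a clean observation the paper does not make), upgrade the geometric decay of $\Vert P_v - P_{v,T}\Vert_\infty$ to holomorphy on a complex neighbourhood via Bernstein--Walsh/Siciak, use the facet structure to force vanishing on complex hyperplanes, factor, and hit a singularity at a vertex of a facet. Both are valid, but your route needs the full pluripotential machinery you flag (regularity of the Green function of a convex body, connectedness of the hyperplane slices of $\Omega$, the identity theorem on totally real slices, local factorisation near a vertex), and it genuinely requires the exponential rate since Bernstein--Walsh is an if-and-only-if. The paper's one-derivative argument is considerably more elementary, and because Wilhelmsen gives $|\partial_u\Delta_{v,t}|=O\bigl(t^2\Vert\Delta_{v,t}\Vert_\infty\bigr)$, it would in fact rule out strong factories whose tails decay only polynomially of order $>3$, which your analytic-continuation argument cannot. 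On the other hand, your octahedron reduction and the explicit barycentric computation of $h_w$ on a facet give a sharper geometric picture of the obstruction.
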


Note in particular that any strong Bernoulli race that terminates a.s. converges exponentially since there is a constant probability of success in each iteration  (in particular, all strong Bernoulli factories we have introduced thus far converge exponentially on $\P$). We thus have the following corollary.

\begin{corollary}\label{thm:impossibility_race}
Let $k$ be an integer satisfying $1<k<n-1$. There is no Bernoulli race over Bernstein polynomials that is a strong Bernoulli factory for $\P_{k,n}$.
\end{corollary}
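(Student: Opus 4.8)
The plan is to derive this corollary directly from Theorem~\ref{thm:impossibility} by showing that \emph{if} a Bernoulli race over Bernstein polynomials is a strong factory for $\Pnk$, then it necessarily converges exponentially on $\Pnk$, which Theorem~\ref{thm:impossibility} forbids when $1 < k < n-1$. So I would suppose, for contradiction, that $\{P_v(x)\}_{v \in V}$ are Bernstein polynomials (indexed by the vertices $V$ of $\Pnk$) whose associated Bernoulli race $\mathcal{F}$, constructed as in Corollary~\ref{cor:race_over_bernstein}, is a strong Bernoulli factory for $\Pnk$.

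The first step is to observe that the non-vanishing condition \eqref{eq:not_vanish} holds on all of $\Pnk$. Each $P_v$ is a positive combination of Bernstein monomials, hence nonnegative on $[0,1]^n$; if $\sum_{v} P_v(x) = 0$ at some $x \in \Pnk$, then $P_v(x) = 0$ for every $v$, so in the race each one-bit factory for $P_v(x)/C$ outputs $1$ with probability $0$ at $x$, and thus $\mathcal{F}(x)$ never terminates — contradicting that $\mathcal{F}$ terminates almost surely on all of $\Pnk$. Hence $\sum_v P_v(x) > 0$ for every $x \in \Pnk$. Since $\sum_v P_v$ is a polynomial (so continuous) and $\Pnk = \{x \in [0,1]^n : \sum_i x_i = k\}$ is compact, this sum attains a strictly positive minimum $\mu > 0$ over $\Pnk$. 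This is exactly the place the argument uses that $\mathcal{F}$ is a \emph{strong} factory: the explicit polynomials of Section~\ref{sec:non_generic} satisfy $\sum_v P_v(v') = 0$ at the vertices $v'$ of $\Pnk$, so they are correctly ruled out.

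The second step bounds the running time. By Lemma~\ref{lem:bernstein_factory} (as invoked in Corollary~\ref{cor:race_over_bernstein}) there is a constant $C$ and, for each $v \in V$, a \emph{finite} one-bit Bernoulli factory for $P_v(x)/C$; let $D$ be a finite bound on the number of coins flipped in one iteration of the race (the maximum depth among these finitely many finite trees, together with the bounded amount of external randomness used to pick $v$ and a monomial, which can be realized with finitely many constant coins). Then a single iteration flips at most $D$ coins and succeeds with probability $\bigl(\sum_v P_v(x)\bigr)/(C\abs{V}) \ge \mu/(C\abs{V}) =: p_0 > 0$, uniformly in $x \in \Pnk$. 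Since iterations are independent, $\Pr[T_{\mathcal{F}}(x) > mD] \le (1-p_0)^m$ for all $m$ and all $x \in \Pnk$; this geometric tail bound is precisely exponential convergence of $\mathcal{F}$ on $\Pnk$. This contradicts Theorem~\ref{thm:impossibility}, so no Bernoulli race over Bernstein polynomials can be a strong Bernoulli factory for $\Pnk$ when $1<k<n-1$.

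I do not anticipate a substantial obstacle, since the corollary is essentially a repackaging of Theorem~\ref{thm:impossibility}: the content is the observation that a strong Bernoulli race has a uniformly positive per-iteration success probability and uses boundedly many coins per iteration. The only step that needs care is upgrading ``$\mathcal{F}$ terminates a.s.\ on $\Pnk$'' to ``$\sum_v P_v > 0$ on $\Pnk$'' and then, via compactness, to the uniform lower bound $p_0$; everything downstream is routine.
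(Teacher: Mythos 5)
Your proposal is correct and takes exactly the paper's route: the paper's proof of this corollary is a one-sentence remark that a strong Bernoulli race terminating a.s.\ has a uniformly positive per-iteration success probability and hence converges exponentially, so the result follows from Theorem~\ref{thm:impossibility}. You simply spell out the details (non-vanishing of $\sum_v P_v$ on $\Pnk$ from a.s.\ termination, compactness giving the uniform lower bound, and the bounded coin count per iteration from the finite trees), all of which are sound.
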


We will actually prove Theorem \ref{thm:impossibility} for a slightly weaker version of ``niceness'' based on the differentiability of the functions $\Pr[\mathcal{F}(x) = v]$. For a generic factory $\mathcal{F}$ 
we define $P_{v}(x) \triangleq \Pr[\mathcal{F}(x) = v]$ and $P_{v, T}(x) \triangleq \Pr[\mathcal{F}(x) = v \wedge T_{\mathcal{F}}(x) \leq T]$. Note that each $P_{v, T}(x)$ is the sum of finitely many Bernstein monomials (corresponding to leaves of $\mathcal{F}$ at depth at most $T$), and therefore is a Bernstein polynomial. The function $P_v(x)$ here is a Bernstein \emph{series}, i.e. the limit of the Berstein polynomials $P_{v,T}(x)$.\\

Given a polytope $\P$, let $\H(\P)$ be the minimum affine subspace containing $\P$ (the ``affine span'' of $\P$). Let $\H_{0}(\P) = \{v - v' \mid v, v' \in \H(\P)\}$ be the translate of $\H(\P)$ passing through the origin. 

\begin{definition}
A Bernoulli factory $\mathcal{F}$ for a polytope $\P$ is \emph{differentiable} if for each $v \in V$ and each $u \in \H_{0}(\P)$ with $\Vert u \Vert = 1$, the derivative $\partial_{u}P_{v}(x)$ exists and is equal to the limit $\lim_{T\rightarrow\infty}\partial_{u} P_{v, T}(x)$.
\end{definition}

In other words, a Bernoulli factory $\mathcal{F}$ is differentiable if the function $P_{v}(x)$ is differentiable on the minimal subspace containing the polytope $\P$ and if these derivatives can be recovered as limits of the derivatives of the polynomials $P_{v, T}(x)$.

We will prove that there is no differentiable strong Bernoulli factory for $\P_{k,n}$ (Lemma \ref{lem:impossibility_uniform}) and then argue that all exponentially converging factories are differentiable (Theorem \ref{thm:exponential_implies_differentiable}). We begin by proving the following structural result on the polynomials $P_{v, T}(x)$ for any strong Bernoulli factory for $\P_{k,n}$.

\begin{lemma}\label{lemma:Pv_divisors}
Let $k$ be an integer satisfying $1 < k < n-1$. Let $\mathcal{F}$ be a strong Bernoulli factory for $\P_{k,n}$ and let $v$ be a vertex of $\P_{k,n}$. Then for any $T \geq 0$, the polynomial $P_{v, T}(x)$ must be divisible by $\prod_{i| v_i=1} x_i~\prod_{i| v_i=0} (1-x_i)$.
\end{lemma}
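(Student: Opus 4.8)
The plan is to prove the divisibility monomial by monomial. Expand $P_{v,T}(x)=\sum_{\ell}\Pr[\mathcal{F}(x)\to\ell]$, where $\ell$ ranges over the leaves of $\mathcal{F}$ of depth at most $T$ carrying the label $v$, and each summand is a Bernstein monomial $c_\ell\prod_{i=1}^{n}x_i^{a_i^\ell}(1-x_i)^{b_i^\ell}$ with $c_\ell>0$ (a product of biases of constant coins along the path, each in $(0,1)$). The polynomials $x_i$ (for $v_i=1$) and $1-x_i$ (for $v_i=0$) are pairwise distinct irreducibles of $\R[x_1,\dots,x_n]$, so $\prod_{i:v_i=1}x_i\cdot\prod_{i:v_i=0}(1-x_i)$ divides $P_{v,T}$ as soon as it divides every one of these monomials, i.e.\ as soon as $a_i^\ell\ge 1$ whenever $v_i=1$ and $b_i^\ell\ge 1$ whenever $v_i=0$, for every leaf $\ell$ labelled $v$. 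So it suffices to prove this exponent condition.

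I would argue by contradiction. Suppose some leaf $\ell$ labelled $v$ has $a_j^\ell=0$ for a coordinate $j$ with $v_j=1$ (the case $b_j^\ell=0$, $v_j=0$ is symmetric and handled at the end). Morally, on the slice $\{x_j=0\}$ the factory is forbidden from ever outputting $v$, since its $j$-th output coordinate must equal $x_j=0$ while $v_j=1$; yet $\ell$, having $a_j^\ell=0$, still contributes positively to $\Pr[\mathcal{F}(x)=v]$ there because $0^{0}=1$. To use only the factory's guarantee on $\tilde{\P}_{k,n}=\P_{k,n}\cap(0,1)^n$, I approach the slice from inside: set $x^{(\e)}_j=\e$ and $x^{(\e)}_i=(k-\e)/(n-1)$ for $i\ne j$. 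This lies in $\tilde{\P}_{k,n}$ for all small $\e>0$ because $k<n-1$ forces $x^{(\e)}_i<1$, and as $\e\to 0$ one has $x^{(\e)}_i\to k/(n-1)$, which lies in $(0,1)$ precisely because $0<k<n-1$.

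Now, on one hand $\Pr[\mathcal{F}(x^{(\e)})\to\ell]=c_\ell(1-\e)^{b_j^\ell}\prod_{i\ne j}(x^{(\e)}_i)^{a_i^\ell}(1-x^{(\e)}_i)^{b_i^\ell}$ converges as $\e\to0$ to $c_\ell\prod_{i\ne j}\bigl(k/(n-1)\bigr)^{a_i^\ell}\bigl(1-k/(n-1)\bigr)^{b_i^\ell}=:\delta>0$, each base lying in $(0,1)$; hence $\Pr[\mathcal{F}(x^{(\e)})=v]\ge\Pr[\mathcal{F}(x^{(\e)})\to\ell]\ge\delta/2$ for $\e$ small. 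On the other hand, since $\mathcal{F}$ is a Bernoulli factory and $v_j=1$, $\Pr[\mathcal{F}(x^{(\e)})=v]\le\sum_{v':\,v'_j=1}\Pr[\mathcal{F}(x^{(\e)})=v']=\E[\mathcal{F}(x^{(\e)})]_j=x^{(\e)}_j=\e$. Choosing $\e<\delta/2$ gives a contradiction. The symmetric case ($b_j^\ell=0$, $v_j=0$) uses $x^{(\e)}_j=1-\e$, $x^{(\e)}_i=(k-1+\e)/(n-1)\to(k-1)/(n-1)\in(0,1)$ — valid because $1<k<n$ — together with $\Pr[\mathcal{F}(x^{(\e)})=v]\le\sum_{v':\,v'_j=0}\Pr[\mathcal{F}(x^{(\e)})=v']=1-\E[\mathcal{F}(x^{(\e)})]_j=\e$. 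Combining the two cases, the constraints $0<k<n-1$ and $1<k<n$ intersect in exactly the hypothesis $1<k<n-1$, and the exponent condition holds for every leaf, finishing the proof.

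The main delicate point is precisely the handling of the boundary: Definition~\ref{def:factory_for_P} only asserts $\E[\mathcal{F}(x)]=x$ on the open set $\tilde{\P}_{k,n}$, and it is on the slices $x_j\in\{0,1\}$ that the heuristic is transparent, so one must either (as above) approximate those slices by interior points and pass to the limit, or first argue that $x\mapsto\E[\mathcal{F}(x)]$ extends continuously to all of $\P_{k,n}$ — each $P_{v'}(x)=\lim_T P_{v',T}(x)$ is a monotone limit of polynomials, hence lower semicontinuous, and $\sum_{v'}P_{v'}\equiv 1$ on $\P_{k,n}$ by a.s.\ termination, which forces each $P_{v'}$ to be upper semicontinuous, hence continuous — and then evaluate on the boundary directly. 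The remaining ingredients — matching monomials to leaves, the convention $0^0=1$, and the unique-factorization criterion for divisibility of Bernstein monomials — are routine.
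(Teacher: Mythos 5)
Your proof is correct and uses the same key idea as the paper's: find interior points of $\P_{k,n}$ close to the boundary slices $x_j = 0$ (resp.\ $x_j = 1$) at which the offending monomial would force $\Pr[\mathcal F(x) = v]$ to be positive, while the marginal constraint $\E[\mathcal F(x)]_j = x_j$ forces it to be small — the same witness values $k/(n-1)$ and $(k-1)/(n-1)$ appear in both arguments, and both use $1 < k < n-1$ for exactly the same reason. The one notable difference is that the paper evaluates directly at the boundary points ($x_j = 0$, $x_i = k/(n-1)$), implicitly invoking the strong-factory hypothesis to get a convex-combination representation there, whereas you approach the boundary from inside $\tilde{\P}_{k,n}$ and pass to the limit, needing only the factory's guarantee on the open set. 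Your route is a bit more careful and, as you observe, would establish the lemma for any (not necessarily strong) Bernoulli factory; the paper's phrasing glosses over the boundary-extension step, which you correctly identified as the only delicate point and resolved two ways (limits, or continuity of $P_{v'}$ via lower semicontinuity plus $\sum_{v'} P_{v'} \equiv 1$). Both are fine.
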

\begin{proof}
Since $v$ is a vertex of $\P_{k,n}$, $v$ has exactly $k$ coordinates equal to $1$ and $n-k$ coordinates equal to $0$. Assume without loss of generality that $v_1 = \dotsb = v_k = 1$ and $v_{k+1} = \dotsb = v_n = 0$. Let $\ell$ be a leaf in $\mathcal{F}$ with label $v$ and depth at most $T$. Note that $\Pr[\mathcal{F}(x) \rightarrow \ell]$ is a Bernstein monomial; let $M_{\ell}(x) = \Pr[\mathcal{F}(x) \rightarrow \ell]$. Assume $M_{\ell}(x)$ is a non-zero monomial. Then:

\begin{itemize}
\item For each $i \in \{1, 2, \dots, k\}$, $M_{\ell}(x)$ must be divisible by $x_i$. If not, then $M_{\ell}(x)$ would be strictly positive at the point $x$ where $x_i = 0$ and $x_j = k/(n-1)$ for all $j \neq i$. But the vertex $v$ cannot appear with positive weight in a convex combination resulting in $x$ (since $v_i > 0$ and $v'_i \geq 0$ for all other vertices $v'$). 

\item Similarly, for each $i \in \{k+1, \dots, n\}$, the polynomial $M_{\ell}(x)$ must be divisible by $(1-x_i)$. If not, then $M_{\ell}(x)$ would be strictly positive at the point $x$ where $x_i = 1$ and $x_j = (k-1)/(n-1)$ for all $j \neq i$. Again, the vertex $v$ cannot appear with positive weight in a convex combination resulting in $x$ (since $v_i < 1$ and $v'_i \leq 1$ for all other vertices $v'$) .
\end{itemize}

Since we can write $P_{v, T}(x)$ as the sum of a finite number of such monomials $M_{\ell}(x)$, it follows that the Bernstein polynomial $P_{v, T}(x)$ must be divisible by $x_1 x_2 \dots x_k (1-x_{k+1}) \dots (1-x_n)$.

\end{proof}

Note that Lemma \ref{lemma:Pv_divisors} doesn't hold for $k=1$ or $k=n-1$ since $k/(n-1)$ and $(k-1)/(n-1)$ need to be strictly between $0$ and $1$ for the proof to work. This is an important sanity check as for $\P_{1,n}$ and $\P_{n-1,n}$ it is indeed possible to construct strong Bernoulli factories.

Lemma \ref{lemma:Pv_divisors} allows us to conclude that the gradient of $P_{v, T}(x)$ vanishes on vertices $v' \neq v$.

\begin{lemma}\label{lem:zero_gradient}
Let $k$ and $n$ be integers satisfying $1 < k < n-1$. Let $\mathcal{F}$ be a strong Bernoulli factory for $\P_{k,n}$ and let $v$ and $v'$ be two distinct vertices of $\P_{k,n}$. Then for any $T \geq 0$, $\nabla P_{v, T}(v') = 0$.
\end{lemma}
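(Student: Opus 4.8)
The plan is to exploit the divisibility structure from Lemma~\ref{lemma:Pv_divisors}. Set $g(x) \triangleq \prod_{i\mid v_i=1} x_i \cdot \prod_{i\mid v_i=0}(1-x_i)$. By Lemma~\ref{lemma:Pv_divisors}, $g$ divides $P_{v,T}(x)$ in the polynomial ring, so we may write $P_{v,T}(x) = g(x)\,R(x)$ for some polynomial $R$. The entire argument then reduces to showing that $g$ \emph{and} its gradient both vanish at the point $v'$: once we know $g(v') = 0$ and $\nabla g(v') = 0$, the product rule gives $\nabla P_{v,T}(v') = R(v')\,\nabla g(v') + g(v')\,\nabla R(v') = 0$.

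To see that $g$ vanishes to order at least two at $v'$, I would use that $v$ and $v'$ are distinct vertices of $\Pnk$, hence both $0/1$-vectors whose support has size exactly $k$. Since the two supports have equal size but are not equal, there must be an index $i^{*}$ with $v_{i^{*}}=1$ and $v'_{i^{*}}=0$, and an index $j^{*}$ with $v_{j^{*}}=0$ and $v'_{j^{*}}=1$; necessarily $i^{*}\neq j^{*}$. Consequently $x_{i^{*}}$ appears in the first product defining $g$ and $(1-x_{j^{*}})$ appears in the second, and both of these distinct factors evaluate to $0$ at $x=v'$. In particular $g(v')=0$.

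For the gradient, I would just apply the product rule to $g$: since the indices $i$ with $v_i = 1$ and the indices $i$ with $v_i=0$ partition $[n]$, each variable $x_\ell$ occurs in exactly one factor of $g$, so $\partial_\ell g$ is (up to a sign) the product of all factors of $g$ other than the one containing $x_\ell$. Because $x_{i^{*}}$ and $(1-x_{j^{*}})$ are two \emph{different} factors, any such ``omit one factor'' product still contains at least one of them, and therefore still vanishes at $v'$. Hence $\partial_\ell g(v')=0$ for every $\ell$, i.e. $\nabla g(v')=0$, completing the proof.

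I do not anticipate a genuine obstacle here; this is a short polynomial-divisibility computation once Lemma~\ref{lemma:Pv_divisors} is in hand. The only point requiring a moment of care is the combinatorial observation that two distinct vertices of $\Pnk$ always differ in \emph{both} a $1\!\to\!0$ coordinate and a $0\!\to\!1$ coordinate — this is exactly where the vertex structure (equal support sizes $k$), and implicitly the range $1<k<n-1$ inherited from Lemma~\ref{lemma:Pv_divisors}, enters.
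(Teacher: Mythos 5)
Your proof is correct and follows essentially the same route as the paper: write $P_{v,T} = \Pi\cdot R$ via Lemma~\ref{lemma:Pv_divisors} and then observe that at $v'$ two distinct factors of $\Pi$ vanish, killing both $\Pi(v')$ and $\nabla\Pi(v')$ and hence the full gradient via the product rule. Your version spells out more carefully why two distinct factors must vanish (equal support sizes force a $1\!\to\!0$ and a $0\!\to\!1$ mismatch), which the paper states only tersely, but there is no substantive difference.
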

\begin{proof}
It suffices to show $\partial_i P_{v, T}(v') = 0$ for each $i \in [n]$. Note that by Lemma \ref{lemma:Pv_divisors}, for any $T \geq 0$, we can write $P_{v, T}(x) = \Pi(x) R_{v, T}(x)$ where $\Pi(x) = \prod_{i| v_i=1} x_i \cdot \prod_{i|v_i=0} (1-x_i)$ and where $R_{v, T}(x)$ is a Bernstein polynomial. Let $v' \in V$ be a vertex $v' \neq v$. Then we claim that the partial derivative $\partial_i P_{v, T}(v') = 0$ for all $i \in [n]$. To see this, note that
\begin{eqnarray*}
\partial_i P_{v, T}(x) &=& \partial_i \Pi(x) \cdot R_v(x)  + \Pi(x) \cdot \partial_i R_v(x).
\end{eqnarray*}

\noindent
Since $v' \in \{0, 1\}^n$ and $v'$ differs from $v$ in two coordinates, $\Pi(x)$ has at least two terms that evaluate to zero and hence $\partial_i \Pi(x) = 0$. It follows that $\partial_i P_{v, T}(v') = 0$.
\end{proof}

For differentiable Bernoulli factories, Lemma \ref{lem:zero_gradient} implies that the derivatives of $P_{v}(x)$ at vertices $v' \neq v$ are zero (along vectors in $\H_{0}(\P_{k,n})$) .

\begin{corollary}\label{cor:zero_gradient}
Let $k$ and $n$ be integers satisfying $1 < k < n-1$. Let $\mathcal{F}$ be a differentiable strong Bernoulli factory for $\P_{k,n}$ and let $v$ and $v'$ be two distinct vertices of $\P_{k,n}$. Let $u$ be a unit vector belonging to $\H_{0}(\P)$. Then $\partial_{u} P_{v}(v') = 0$.
\end{corollary}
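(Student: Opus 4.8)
The plan is to derive the corollary immediately from Lemma \ref{lem:zero_gradient} together with the definition of a differentiable Bernoulli factory. First, recall that for each $T \geq 0$ the function $P_{v, T}(x)$ is an honest polynomial (a finite sum of Bernstein monomials corresponding to leaves of $\mathcal{F}$ at depth at most $T$), hence smooth on all of $\R^n$. Consequently, for any unit vector $u$ the directional derivative can be written as $\partial_u P_{v, T}(x) = \langle \nabla P_{v, T}(x), u \rangle$. Evaluating this at $x = v'$ and invoking Lemma \ref{lem:zero_gradient}, which asserts $\nabla P_{v, T}(v') = 0$ for every $T \geq 0$, we obtain $\partial_u P_{v, T}(v') = 0$ for all $T$.

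Next I would invoke differentiability. Since $\mathcal{F}$ is a differentiable strong Bernoulli factory for $\P_{k,n}$ and $u \in \H_0(\P_{k,n})$ is a unit vector, the definition of differentiability guarantees that $\partial_u P_v(v')$ exists and is equal to $\lim_{T \to \infty} \partial_u P_{v, T}(v')$. By the previous paragraph this is a limit of a sequence that is identically zero, so $\partial_u P_v(v') = 0$, which is exactly the claim.

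I do not expect any genuine obstacle here: the substantive work has already been front-loaded into Lemmas \ref{lemma:Pv_divisors} and \ref{lem:zero_gradient}, and this corollary is just the passage to the limit. The one point worth flagging is the role of the hypothesis $u \in \H_0(\P_{k,n})$. The function $P_v$ is, a priori, only meaningfully constrained on $\P_{k,n}$ (equivalently, on its affine span $\H(\P_{k,n})$), so derivatives of $P_v$ in directions that leave this subspace need not be controlled or even defined through the factory; restricting $u$ to $\H_0(\P_{k,n})$ is precisely what makes $\partial_u P_v$ well-defined in the sense of the differentiability definition. Once that is granted, the vanishing of the full ambient gradient $\nabla P_{v, T}(v')$ at $v'$ forces the directional derivative in \emph{every} direction (in particular every $u \in \H_0(\P_{k,n})$) to vanish for each truncation, and hence in the limit as well.
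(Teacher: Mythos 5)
Your argument is correct and coincides with the paper's proof: both deduce $\partial_u P_{v,T}(v')=\langle u,\nabla P_{v,T}(v')\rangle=0$ from Lemma \ref{lem:zero_gradient} and then pass to the limit $T\to\infty$ via the definition of a differentiable factory. Your remark about why the restriction $u\in\H_0(\P_{k,n})$ is imposed is a fair aside but does not change the substance.
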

\begin{proof}
By Lemma \ref{lem:zero_gradient}, $\nabla P_{v, T}(v') = 0$ for all $T \geq 0$, and thus $\partial_{u} P_{v, T}(v') = \langle u, \nabla P_{v, T}(v')\rangle = 0$. Since $\mathcal{F}$ is differentiable, we know that $\partial_{u} P_{v}(v')$ exists and equals $\lim_{T\rightarrow\infty}\partial_{u} P_{v, T}(v') = 0$.
\end{proof}

We can now prove impossibility for differentiable factories.

\begin{lemma}\label{lem:impossibility_uniform}
Let $k$ and $n$ be integers satisfying $1 < k < n-1$. There is no differentiable strong Bernoulli factory for $\Pnk$.
\end{lemma}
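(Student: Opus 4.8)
The plan is to use the machinery built up in Corollary~\ref{cor:zero_gradient}: for a differentiable strong Bernoulli factory $\mathcal{F}$, we have $\partial_u P_v(v') = 0$ for every pair of distinct vertices $v, v'$ and every unit $u \in \H_0(\P_{k,n})$. Combined with the fact that $\sum_{v} P_v(x) = 1$ for all $x \in \P_{k,n}$ and the marginal condition $\sum_v P_v(x)\, v = x$, this forces a contradiction by counting dimensions. First I would differentiate the identity $\sum_v P_v(x) = 1$ along a direction $u \in \H_0(\P_{k,n})$ and evaluate at a vertex $v'$. By Corollary~\ref{cor:zero_gradient}, every term $\partial_u P_v(v')$ with $v \neq v'$ vanishes, so we are left with $\partial_u P_{v'}(v') = 0$. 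Thus \emph{all} directional derivatives of \emph{every} $P_v$ along $\H_0(\P_{k,n})$ vanish at \emph{every} vertex.

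Next I would differentiate the marginal identity $\sum_v P_v(x)\,v = x$. Writing this coordinate-wise and taking $\partial_u$ at a vertex $v'$, the left side becomes $\sum_v \partial_u P_v(v')\, v$, which we just argued is $0$ (every $\partial_u P_v(v') = 0$), while the right side is $\partial_u x = u \neq 0$. This is the contradiction: $0 = u$. The only subtlety is making sure $u$ can be chosen nonzero, which requires $\dim \H_0(\P_{k,n}) \geq 1$; since $\P_{k,n}$ is an $(n-1)$-dimensional polytope (the single equality $\sum x_i = k$ inside $[0,1]^n$), we have $\dim \H_0(\P_{k,n}) = n-1 \geq 1$ whenever $n \geq 2$, which certainly holds under $1 < k < n-1$.

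There is one gap to close carefully: the differentiation of the infinite sums $\sum_v P_v(x) = 1$ and $\sum_v P_v(x)\, v = x$ term-by-term. Since there are only finitely many vertices $v$ of $\P_{k,n}$, these are \emph{finite} sums over $v$, so no interchange of limit and sum over vertices is needed — but each $P_v(x)$ is itself a Bernstein \emph{series} (a limit of the $P_{v,T}$), and it is the differentiability hypothesis that lets us pass $\partial_u$ inside: $\partial_u P_v(x) = \lim_{T\to\infty} \partial_u P_{v,T}(x)$ at the point $x = v'$. So the argument is: apply Corollary~\ref{cor:zero_gradient} to get $\partial_u P_v(v') = 0$ for $v \neq v'$; apply it once more (or re-derive via the normalization identity) to get $\partial_u P_{v'}(v') = 0$ as well; then the differentiated marginal identity yields $u = \sum_v (\partial_u P_v(v'))\, v = 0$, a contradiction.

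\begin{proof}
Suppose for contradiction that $\mathcal{F}$ is a differentiable strong Bernoulli factory for $\Pnk$. Since $1 < k < n-1$ we have $n \geq 4$, so $\P_{k,n}$ is $(n-1)$-dimensional and $\H_0(\P_{k,n})$ is a subspace of dimension $n - 1 \geq 1$; fix a unit vector $u \in \H_0(\P_{k,n})$, and fix any vertex $v'$ of $\Pnk$.

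Because $\mathcal{F}$ is a Bernoulli factory for $\Pnk$, for all $x \in \P_{k,n} \cap (0,1)^n$ we have $\sum_{v \in V} P_v(x) = 1$ and $\sum_{v \in V} P_v(x)\, v = x$ (as $\mathcal{F}$ terminates a.s. and $\E[\mathcal{F}(x)] = x$); since $\mathcal{F}$ is a \emph{strong} factory these identities extend by continuity to all of $\P_{k,n}$, and in particular hold at $v'$. Each $P_v$ is differentiable along $u$ on $\H(\P_{k,n})$ with $\partial_u P_v(x) = \lim_{T\to\infty} \partial_u P_{v,T}(x)$, and the sums over $v \in V$ are finite, so we may differentiate both identities along $u$ term by term at the point $v'$:
\begin{equation}\label{eq:diff_norm}
\sum_{v \in V} \partial_u P_v(v') = 0, \qquad \sum_{v \in V} \bigl(\partial_u P_v(v')\bigr)\, v = u.
\end{equation}
By Corollary~\ref{cor:zero_gradient}, $\partial_u P_v(v') = 0$ for every $v \neq v'$. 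Substituting into the first equation of \eqref{eq:diff_norm} gives $\partial_u P_{v'}(v') = 0$ as well; hence $\partial_u P_v(v') = 0$ for \emph{every} $v \in V$. But then the second equation of \eqref{eq:diff_norm} reads $0 = u$, contradicting $\Vert u \Vert = 1$.
\end{proof}
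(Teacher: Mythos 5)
Your proof is correct and follows essentially the same route as the paper's: both rely on Corollary~\ref{cor:zero_gradient} to kill the off-diagonal terms $\partial_u P_v(v')$ with $v\neq v'$, then differentiate the factory constraints and evaluate at a vertex to force a contradiction. The only organizational difference is that the paper differentiates the single combined identity $Q(x)=\sum_v P_v(x)(v-x)=0$ via the product rule and then derives the contradiction $P_{v'}(v')=0$ (against $\sum_v P_v(v')=1$), whereas you differentiate the two constituent identities $\sum_v P_v(x)=1$ and $\sum_v P_v(x)\,v=x$ separately and obtain $u=0$ directly. These are the same computation repackaged; yours is arguably a hair cleaner since it sidesteps the need to pick a coordinate $i$ with $u_i>0$, but neither version buys anything structurally new.
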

\begin{proof}
Assume to the contrary that a differentiable strong Bernoulli factory $\mathcal{F}$ exists for $\Pnk$. Let $Q(x) = \sum_{v} P_{v}(x)(v-x)$. Since $\mathcal{F}$ is a strong Bernoulli factory for $\Pnk$, we know that $Q(x) = 0$ holds for all $x \in \P$. Fix an $i \in [n]$, and let $Q_{i}(x)$ be the $i$th component of $Q(x)$. 

Since $Q_i(x) = 0$ on all of $\P_{k,n}$ (and since $\P_{k,n}$ is $(n-1)$-dimensional), it follows that the directional derivative of $Q_{i}(x)$ along any non-zero vector $u$ in $\H_{0}(\P_{k,n})$ is zero for any $x \in \P_{k,n}$. That is, for any non-zero $u$ in $\H_{0}(\P_{k,n})$ and $x \in \P_{k,n}$,

\begin{equation}\label{eq:directional}
    \partial_{u} Q_i(x) = 0.
\end{equation}

Note that since $Q_{i}(x) = \sum_{v}P_{v}(x)(v_i - x_i)$, by the product rule we have that

\begin{equation}\label{eq:product}
\partial_{u} Q_{i}(x) = \sum_{v}\partial_{u} P_{v}(x) (v_i - x_i) - \left(\sum_{v} P_{v}(x)\right)u_i.
\end{equation}

Now, let us evaluate $\partial_{u} Q_{i}(v')$ for some vertex $v'$ of $\P_{k,n}$. Note that by Corollary \ref{cor:zero_gradient}, $\partial_{u} P_{v}(v') = 0$ for any vertex $v \neq v'$ of $\Pnk$; on the other hand, $\partial_{u} P_{v'}(x)(v'_i - x_i) = 0$ when $x = v'$ since then $v'_i - x_i = 0$. We also know that for any vertex $v \neq v'$, $P_{v}(v') = 0$ (since if $x = v'$, $\mathcal{F}(x)$ can only output $v'$). Therefore when $x = v'$, \Cref{eq:product} simplifies to

\begin{equation}\label{eq:product2}
\partial_{u} Q_{i}(v') = -P_{v'}(v')u_i,
\end{equation}

Substituting this into \Cref{eq:directional}, we have that

\begin{equation}\label{eq:directional2}
    -P_{v'}(v')u_i = 0.
\end{equation}

Now, recall that $\H(\P_{k,n}) = \{u \mid \sum_{i}u_{i} = k\}$, and thus $\H_{0}(\P_{k,n}) = \{u \mid \sum_{i}u_{i} = 0 \}$. Since $n \geq 3$ (since $n-1 > 1$) we can choose a vector $u \in \H_{0}(\P_{k,n})$ satisfying $u_i > 0$. This implies $P_{v'}(v') = 0$. However, if $P_{v'}(v') = 0$ then $\sum_{v} P_{v}(v') = 0$, contradicting the requirement for strong Bernoulli factories that $\sum_{v}P_{v}(x) = 1$ for all $x \in \P$. This implies that the assumed factory $\mathcal{F}$ cannot exist.
\end{proof}

Finally, we show that any factory that converges exponentially is differentiable, thus implying Theorem \ref{thm:impossibility}. To do so, we will need the following multivariate generalization of Markov brothers' inequality due to \citep{wilhelmsen1974markov}.

\begin{lemma}\label{lem:wilhelmsen}
Let $T$ be a compact, convex set in $\R^n$ with non-empty interior. Let $\omega(T)$ be the minimum width of $T$ in any direction $u \in \R^n$; i.e. 
$\omega(T) = \min_{\Vert u \Vert = 1}(\max_{p \in T} \dot{u}{p} - \min_{p \in T} \dot{u}{p}).$
Let $P: \R^n \rightarrow \R$ be a degree $d$ multivariate polynomial satisfying $|P(x)| \leq \eps$ for all $x \in T$. Then for all $x \in T$ it holds that $\Vert \nabla P(x) \Vert \leq 2\eps d^2 / \omega(T).$
\end{lemma}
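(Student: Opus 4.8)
The plan is to reduce the multivariate statement to the classical univariate Markov brothers' inequality by slicing along a line. Fix a point $x \in T$ and a unit vector $u \in \R^n$; I want to bound the directional derivative $\partial_u P(x) = \langle u, \nabla P(x)\rangle$, since $\Vert \nabla P(x)\Vert = \max_{\Vert u\Vert = 1}\partial_u P(x)$. Consider the line $\ell(t) = x + t u$. Because $T$ is convex and compact with non-empty interior, the set $\{t \in \R \mid \ell(t) \in T\}$ is a closed interval $[a,b]$ containing $0$, and its length $b - a$ is at least $\omega(T)$ (the width of $T$ in the direction $u$ is $\max_{p\in T}\langle u,p\rangle - \min_{p\in T}\langle u,p\rangle \ge \omega(T)$, and the chord of $T$ through $x$ in direction $u$ has exactly this length or more — actually exactly the width in direction $u$ if $x$ is interior; in general the chord length is at most the width, so one must be slightly careful here).

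First I would define the univariate polynomial $g(t) = P(\ell(t)) = P(x + tu)$, which has degree at most $d$ and satisfies $|g(t)| \le \eps$ for all $t \in [a,b]$. The classical Markov brothers' inequality on an interval $[a,b]$ states that $\max_{t\in[a,b]}|g'(t)| \le \frac{2 d^2}{b-a}\max_{t\in[a,b]}|g(t)| \le \frac{2 d^2 \eps}{b - a}$. Since $g'(0) = \partial_u P(x)$, this gives $|\partial_u P(x)| \le 2 d^2 \eps/(b-a)$. The remaining issue is to certify that $b - a \ge \omega(T)$. This is the one genuinely delicate point: the chord of $T$ through $x$ in direction $u$ need not be as long as the minimal width $\omega(T)$ if $x$ is near the boundary. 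The clean fix is to note that we only need the bound $\Vert\nabla P(x)\Vert \le 2\eps d^2/\omega(T)$, and to handle the worst case directly: for the specific direction $u^\ast$ achieving $\Vert\nabla P(x)\Vert = \partial_{u^\ast}P(x)$, if the chord through $x$ in direction $u^\ast$ has length $h < \omega(T)$, then — since $T$ has width at least $\omega(T)$ in \emph{every} direction — one can still run Markov's inequality on the interval of length $h$ to get $|\partial_{u^\ast}P(x)|\le 2 d^2\eps/h$; but this is weaker, not stronger. So in fact the correct reduction is not through $x$ itself. Instead I would invoke the stronger form: the chord of $T$ in direction $u$ \emph{through any interior point} — together with the observation that, since this bound is to hold for all $x\in T$ and $\nabla P$ is continuous, it suffices to prove it on the interior, where for each $x$ and each direction $u$ the chord length is at least $\omega(T)$ by definition of width (the width in direction $u$ is the length of the \emph{longest} chord in that direction, hence $\ge$ the chord through $x$; wait — width is the extent of the projection, i.e. the longest chord, so the chord through $x$ is $\le$ width).

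Given that the naive slicing is off by exactly this gap, the honest plan is: apply the univariate Markov inequality on the maximal chord $[a,b]$ of $T$ in direction $u$ (not necessarily through $x$), obtaining a bound on $\max_{t\in[a,b]}|g'(t)|$; then use that $x$ lies on some chord parallel to this one — but polynomials don't transfer derivative bounds between parallel chords for free. The actual argument in Wilhelmsen's paper uses a compactness/extremal-point reduction: the maximum of $\Vert\nabla P\Vert$ over $T$ is attained at some point $x_0$, and one shows $x_0$ can be taken so that the gradient direction $u_0$ and the chord through $x_0$ align favorably, or one uses the Chebyshev-polynomial extremal characterization. I expect \textbf{this transfer step — controlling the derivative at an arbitrary point of $T$ rather than at the midpoint of a maximal chord — to be the main obstacle}, and I would resolve it by citing Wilhelmsen's theorem directly (as the paper does) rather than reproving it, since the excerpt explicitly attributes the inequality to \cite{wilhelmsen1974markov}. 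The one-line takeaway: slice, apply classical Markov brothers, and invoke Wilhelmsen for the uniformity over $T$.
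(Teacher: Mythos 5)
The paper does not prove this lemma; it states it and cites it to \cite{wilhelmsen1974markov}, which is exactly the conclusion you ultimately reach. Your diagnosis of why the naive line-slicing reduction fails (the chord of $T$ through a given $x$ in direction $u$ can be strictly shorter than $\omega(T)$, even for interior $x$, since $\omega(T)$ is a minimum over all directions of the \emph{projection} extent, not a lower bound on chord lengths) is correct, and it is precisely why the paper simply defers to Wilhelmsen's theorem rather than deriving it from the univariate Markov brothers' inequality.
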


Wilhelmsen's inequality immediately implies the following lemma bounding the derivative of a Bernstein polynomial on a polytope $\P$.

\begin{lemma}\label{lem:bernstein_bound}
Let $\P \subseteq [0, 1]^n$ and let $P: [0, 1]^n \rightarrow \R$ be a Bernstein polynomial of degree at most $d$ that satisfies $P(x) \leq \eps$ for all $x \in \P$. Then for each $u \in \H_0(\P)$ with $\Vert u \Vert = 1$,
$$|\partial_u P(x)| \leq \frac{2d^2\eps}{\omega_{\H_{0}(\P)}(\P)}$$
for all $x \in \P$, where $\omega_{\H_{0}(\P)}(\P)$ denotes the width of $\P$ in the directions contained within $\H_{0}(\P)$:
$$\omega_{\H_{0}(\P)}(\P) = \min_{\Vert u \Vert = 1, u \in \H_{0}(\P)}\left(\max_{p \in \P} \dot{u}{p} - \min_{p \in \P} \dot{u}{p}\right).$$

\end{lemma}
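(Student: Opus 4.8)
The plan is to deduce this directly from Wilhelmsen's multivariate Markov inequality (Lemma~\ref{lem:wilhelmsen}), after reducing to the affine span of $\P$ so that the convex body in question is full-dimensional. First I would observe that a Bernstein polynomial $P$ is, by Definition~\ref{def:bernstein}, a nonnegative combination of Bernstein monomials, each of which is nonnegative on $[0,1]^n$; hence $P(x)\ge 0$ for all $x\in\P\subseteq[0,1]^n$, and together with the hypothesis $P(x)\le\eps$ this gives $\abs{P(x)}\le\eps$ on $\P$. This two-sided bound is exactly what Lemma~\ref{lem:wilhelmsen} consumes.

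Next I would pass to the affine span. Let $m=\dim\P$, let $\H(\P)$ be the affine span of $\P$, and fix an affine isometry $\iota\colon\R^m\to\H(\P)$ of the form $\iota(y)=x_0+Uy$, where $x_0\in\P$ and $U$ is an $n\times m$ matrix whose orthonormal columns span $\H_0(\P)$. Set $T=\iota^{-1}(\P)\subseteq\R^m$, a compact convex set with nonempty interior, and $\tilde P=P\circ\iota\colon\R^m\to\R$. Since $\iota$ is affine, $\tilde P$ is an ordinary polynomial on $\R^m$ of total degree at most $d$, and $\abs{\tilde P(y)}\le\eps$ for all $y\in T$. Applying Lemma~\ref{lem:wilhelmsen} to $\tilde P$ on $T$ yields $\norm{\nabla\tilde P(y)}\le 2\eps d^2/\omega(T)$ for all $y\in T$. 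Because $\iota$ is an isometry, the width of $T$ in a unit direction $\hat u\in\R^m$ equals the width of $\P$ in the direction $U\hat u\in\H_0(\P)$; minimizing over unit $\hat u$ shows $\omega(T)=\omega_{\H_0(\P)}(\P)$.

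Finally, given $u\in\H_0(\P)$ with $\norm{u}=1$, write $u=U\hat u$ for a unit vector $\hat u\in\R^m$, and for $x=\iota(y)\in\P$ use the chain rule to get $\partial_u P(x)=\partial_{\hat u}\tilde P(y)=\dot{\hat u}{\nabla\tilde P(y)}$, so that $\abs{\partial_u P(x)}\le\norm{\nabla\tilde P(y)}\le 2\eps d^2/\omega_{\H_0(\P)}(\P)$, which is the claimed inequality. I do not expect a genuine obstacle here: the only point that deserves care rather than a wave of the hand is the bookkeeping in the previous paragraph — checking that composing $P$ with the affine parametrization $\iota$ of $\H(\P)$ preserves the degree bound and that it translates directional derivatives and widths correctly — after which the estimate is an immediate invocation of Wilhelmsen's inequality. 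In essence this lemma is just Lemma~\ref{lem:wilhelmsen} transported to a possibly lower-dimensional polytope, plus the remark that Bernstein polynomials are automatically nonnegative on the cube.
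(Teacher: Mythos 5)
Your proof is correct and follows essentially the same route as the paper: nonnegativity of Bernstein polynomials on $[0,1]^n$ to get the two-sided bound, an orthonormal affine parametrization of $\H(\P)$ to reduce to a full-dimensional body in $\R^m$, and then a direct invocation of Wilhelmsen's inequality with the chain-rule translation back. (The paper phrases the reduction via an orthogonal map $H$ and writes $P_\perp(y)=P(p_0+H^{-1}y)$, which is the same construction as your $\iota(y)=x_0+Uy$; the paper's ``degree $m$ polynomial'' is a typo for degree $d$, which you state correctly.)
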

\begin{proof}
Note that since $P(x)$ is a Bernstein polynomial, $P(x) \geq 0$ for $x \in [0, 1]^n$ (so $|P(x)| \leq \eps$ holds for all $x \in \P$). 

We need to restrict ourself to a subspace where $\P$ is full-dimensional in order to apply Lemma \ref{lem:wilhelmsen}. Assume $\H_{0}(\P)$ is $m$-dimensional for some $m \leq n$. Let $H$ be an orthogonal linear transformation mapping $\H_{0}(\P)$ to $\R^m$. Fix an arbitrary point $p_{0} \in \P$, and let $P_{\perp}(x): \R^m \rightarrow \R$ be the degree $m$ polynomial defined via $P_{\perp}(y) = P(p_{0} + H^{-1}y)$. Note that this same mapping maps $\P$ to a (full-dimensional) polytope $\P_{\perp} \subset \R^m$, so in particular $|P_{\perp}(y)| \leq \eps$ for all $y \in \P_{\perp}$. 

By Wilhelmsen's inequality (Lemma \ref{lem:wilhelmsen}), this implies that $|\partial_{w}P_{\perp}(y)| \leq (2d^2 / \omega(\P_{\perp}))\eps$ for any unit norm $w \in \R^m$. However, since $H$ is orthogonal, it is straightforward to verify that for $u \in \H_{0}(P)$ with $\Vert u \Vert=1$, that $\partial_{Hu}P_{\perp}(y) = \partial{u}P(p_{0} + H^{-1}y)$. Likewise, $\omega(\P_{\perp})$ is simply $\omega_{\H_{0}(\P)}(\P)$. The theorem statement follows.
\end{proof}

\begin{theorem}\label{thm:exponential_implies_differentiable}
If $\mathcal{F}$ is a strong Bernoulli factory for a polytope $\P$ that converges exponentially, $\mathcal{F}$ is differentiable. 
\end{theorem}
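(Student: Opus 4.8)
The plan is to realise $\partial_u P_v(x)$ as a uniform limit of the polynomial derivatives $\partial_u P_{v,T}(x)$, using Lemma~\ref{lem:bernstein_bound} to control the increments between consecutive truncation levels. Fix a vertex $v \in V$ and a unit vector $u \in \H_0(\P)$, and set $D_T(x) \triangleq P_{v,T+1}(x) - P_{v,T}(x)$. This is precisely the sum of the Bernstein monomials $M_\ell(x) = \Pr[\mathcal{F}(x)\to\ell]$ over leaves $\ell$ of $\mathcal{F}$ labelled $v$ at depth exactly $T+1$, so $D_T$ is a Bernstein polynomial of degree at most $T+1$. It is nonnegative on $[0,1]^n$, and for every $x \in \P$ we have $D_T(x) \le \Pr[T_{\mathcal{F}}(x) = T+1] \le \Pr[T_{\mathcal{F}}(x) > T] \le c^T$ by exponential convergence. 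Thus $0 \le D_T(x) \le c^T$ on $\P$.

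Applying Lemma~\ref{lem:bernstein_bound} to $D_T$ with $\eps = c^T$ and $d = T+1$ yields
$$\sup_{x \in \P}\,\bigl|\partial_u D_T(x)\bigr| \;\le\; a_T \;\triangleq\; \frac{2(T+1)^2 c^T}{\omega_{\H_0(\P)}(\P)},$$
where $\omega_{\H_0(\P)}(\P) > 0$ because $\P$ is full-dimensional inside its affine span $\H(\P)$ (its width in every unit direction of $\H_0(\P)$ is then positive, and the minimum over the compact set of such directions is attained). Since $c < 1$, the series $\sum_{T\ge 0} a_T$ converges, so by the Weierstrass $M$-test $\sum_{T\ge 0} \partial_u D_T$ converges uniformly on $\P$. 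Its partial sums telescope to $\partial_u P_{v,T}(x) - \partial_u P_{v,0}(x)$, whence $\partial_u P_{v,T}(x)$ converges uniformly on $\P$ to a continuous function $g_{v,u}(x)$. Likewise $P_{v,T}(x) \to P_v(x)$ uniformly on $\P$, since $0 \le P_v(x) - P_{v,T}(x) = \Pr[\mathcal{F}(x) = v \wedge T_{\mathcal{F}}(x) > T] \le c^T$.

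It remains to identify $g_{v,u}$ with $\partial_u P_v$. By the fundamental theorem of calculus, for any segment $\{x+tu : t\in[0,s]\}\subseteq\P$ (such segments exist through every point by convexity of $\P$) we have $P_{v,T}(x+su) - P_{v,T}(x) = \int_0^s \partial_u P_{v,T}(x+tu)\,dt$. Letting $T\to\infty$ and using that both $P_{v,T}\to P_v$ and $\partial_u P_{v,T}\to g_{v,u}$ uniformly on $\P$ (so in particular uniformly on the compact segment), we obtain $P_v(x+su) - P_v(x) = \int_0^s g_{v,u}(x+tu)\,dt$. As $g_{v,u}$ is continuous, differentiating this identity in $s$ shows that the directional derivative $\partial_u P_v$ exists throughout $\P$ (two-sided in the relative interior, one-sided on $\partial\P$) and equals $g_{v,u} = \lim_{T\to\infty}\partial_u P_{v,T}$, which is exactly the value used in Corollary~\ref{cor:zero_gradient}. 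Since $v$ and $u$ were arbitrary, $\mathcal{F}$ is differentiable.

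The one point requiring care is the decision to telescope over consecutive depths rather than bounding $\partial_u(P_v - P_{v,T})$ in one step: the latter would demand a Markov-type inequality for a Bernstein \emph{series} of unbounded degree, whereas each increment $D_T$ has degree only $T+1$ while still decaying like $c^T$, so that $\sum_T (T+1)^2 c^T$ remains summable. The remaining ingredients — positivity of $\omega_{\H_0(\P)}(\P)$ and the application of Lemma~\ref{lem:bernstein_bound} (hence Wilhelmsen's inequality) — are routine.
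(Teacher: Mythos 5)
Your proof is correct and follows essentially the same approach as the paper: telescope $P_{v,T}$ into depth increments $D_T$, note each increment is a Bernstein polynomial of degree $\le T+1$ bounded by $c^T$ on $\P$, invoke Lemma~\ref{lem:bernstein_bound} to get $|\partial_u D_T|\le 2(T+1)^2 c^T/\omega_{\H_0(\P)}(\P)$, and conclude uniform convergence of $\sum\partial_u D_T$ by summability of $T^2 c^T$. The only difference is cosmetic: the paper cites a textbook theorem on term-by-term differentiation under uniform convergence of derivatives, whereas you re-derive it via the fundamental theorem of calculus and a dominated passage to the limit inside the integral; both close the argument the same way.
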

\begin{proof}
We will use the following fact (see e.g. Theorem 6.2.10 of \citep{lebl2014basic}). Let $f_1(x), f_2(x), \dots$ be continuously differentiable functions from $\R^n$ to $\R$ that converge pointwise to a function $f(x)$ on some compact convex subset $S \subset \R^n$. Then if (for some $u \in \R^n$ with $\Vert u \Vert=1$) the sequence $\partial_uf_1, \partial_uf_2, \dots$ converges uniformly to a function $g$ over all $x \in S$, $\partial_u f$ exists and is equal to $g$ on $S$. 


It thus suffices to show for each $u \in \H_{0}(\P)$ that the sequence $\partial_uP_{v, T}(x)$ as $T \rightarrow \infty$ converges uniformly. To do this, for each $T \geq 1$, let $\Delta_{v, T}(x) = P_{v, T}(x) - P_{v, T-1}(x)$ (and let $\Delta_{v, 0}(x) = P_{v, 0}(x)$). We then wish to show that the sum $\sum_{t=0}^{\infty} \partial_u\Delta_{v, t}(x)$ converges uniformly. To do so, observe that $\Delta_{v, t}(x)$ is a Bernstein polynomial of degree at most $t$ (since it is the sum of monomials corresponding to leaves at depth at most $t$). We also know (from the definition of exponential convergence) that there exists a $c < 1$ such that $\Delta_{v, t}(x) \leq c^{t}$ for all $x \in \P$. From Lemma \ref{lem:bernstein_bound}, it then follows that

$$|\partial_{i}\Delta_{v, t}(x)| \leq \frac{2t^2 c^{t}}{\omega_{\mathcal{H}_0(\P)}(\P)}.$$

Since the sum $\sum_{t=0}^{\infty} t^{2}c^{t}$ converges in $x$ (and the other terms are positive constants), it follows that $\sum_{t=0}^{\infty} \partial_{u}\Delta_{v, t}(x)$ converges uniformly, as desired. 



\end{proof}

The proof of Theorem \ref{thm:impossibility} now follows immediately from Lemma \ref{lem:impossibility_uniform} and Theorem \ref{thm:exponential_implies_differentiable}.

\begin{proof}[Proof of Theorem \ref{thm:impossibility}]
By Lemma \ref{lem:impossibility_uniform}, there is no differentiable strong Bernoulli factory for the polytope $\P_{k,n}$. By Theorem \ref{thm:exponential_implies_differentiable}, any strong Bernoulli factory that converges exponentially is differentiable. It follows that there is no strong Bernoulli factory for $\P_{k,n}$ that converges exponentially.
\end{proof}

 \end{supplement}

\end{document}